\newtheorem{theorem}{Theorem}
\newtheorem{lemma}{Lemma}
\newtheorem{remark}{Remark}
\numberwithin{equation}{section}
\begin{document}
	\pagestyle{plain}
	
	
	
	
	\title{\LARGE\textbf{Classical Solutions of Higher Dimensional Einstein-Maxwell-Higgs System With Nontrivial Potential: Global Existence and Completeness}}

		\author{Mirda Prisma Wijayanto$^1$\footnote{Corresponding author}, Fiki Taufik Akbar$^2$, Bobby Eka Gunara$^2$ \\ 
		\\
		$^1$ \textit{\small Department of Physics, Faculty of Mathematics and Natural Sciences,}\\
		\textit{\small Universitas Jenderal Soedirman}\\
		\textit{\small Jl. Dr. Soeparno no. 61 Purwokerto, Indonesia, 53122}\\
		$^2$\textit{\small Theoretical High Energy Physics Research Division, }\\
		\textit{\small Faculty of Mathematics and Natural Sciences, Institut Teknologi Bandung}\\
		\textit{\small Jl. Ganesha no. 10 Bandung, Indonesia, 40132}
		\\
		\\
		\small email: mirda.foundation@gmail.com, ftakbar@itb.ac.id, bobby@itb.ac.id}
		
	\date{\today}
	\maketitle
	
	
	
	
	\begin{abstract}
		We study the Cauchy problem of higher dimensional Einstein-Maxwell-Higgs system in the framework of Bondi coordinates. As a first step, the problem is reduced to a single first-order integro-differential equation by defining a generalized ansatz function. Then, we employ contraction mapping to show that there exists the unique fixed point of the problem. For a given small initial data, we prove the existence of a global classical solution. Finally, by introducing local mass and local charge functions in higher dimensions, we also show the completeness property of the spacetimes. 
	\end{abstract}
	
	\textbf{Keywords} Higher dimensional gravity, Einstein-Maxwell-Higgs system, Global existence, Spacetime completeness
	
	%
	%
	%
	%
	%
	\tableofcontents
	{\allowdisplaybreaks
	\section{Introduction}\label{Sec1}
	\subsection{Motivation}	
	One of the interesting problems in general relativity is that black holes are characterized by only three parameters consisting of mass, angular momentum, and electromagnetic charge which are related to Gauss' Law, such as Kerr-Newman black holes \cite{Ruffini,Kerr,Newman}. It follows that a black hole cannot support an external scalar field because Gauss law does not apply to scalars. This result has been proven by \cite{Sudarsky} for a scalar field that is minimally coupled to gravity and by \cite{Saa,Saa2,Mayo} for a scalar field that is not minimally coupled to asymptotically flat space. In its development, it has been shown that scalar hairy black holes exist if we apply certain scalar potentials that violate the weak energy condition \cite{Nucamendi}. One of the attempts to study hairy black holes is to create a mathematical model that leads to a theory to describe the interaction between the matter field described by gauge theory with spacetime according to Einstein's equations. We then refer to this model as the  the Einstein-Maxwell-Higgs (EMH) system.
	
	The Cauchy problem for the Einstein-Maxwell-Higgs system has been studied over the past few years. A regular localized solution of the classical theory of the gravitational field coupled to the electromagnetic field and to an auxiliary scalar is presented in \cite{Clement}. Motivated by Christodoulou (\cite{Chris1}-\cite{Chris4}), Chae \cite{Chae1} studied the global unique existence of classical solutions and the decay estimates of the solutions to the four dimensional Einstein-Maxwell-Higgs system for small initial data under the spherical symmetry.
	
	Over the past few decades, there has been renewed interest in the possibility that spacetime has higher dimensions ($D\geq 4$). This idea raises challenging new issues and also has profound implications for the generalization of Einstein's general theory of relativity that may give the different properties from four-dimensional theories (see \cite{Hollands}-\cite{Coley}). In its development, higher dimensional theory is widely applied as the basis for the development of several theories such as Klauza-Klein theory, string theory, superstring theory, and so on to study the concept of unification of forces. From this point of view, we interest to extend the study of Cauchy problem for the Einstein-Maxwell-Higgs system in higher dimensions. In general relativity, the aim of this study is to provide an answer to the question of what the initial conditions are for the Einstein equation coupled with the Maxwell-Higgs equation, which gives rise to a geodesically complete spacetime that asymptotically approaches the Minkowski spacetime in the infinite future.
	
	\subsection{EMH System in Higher Dimensions}
	In our system, we use the Bondi coordinates which can be described as follows. Let $r$ denotes the radial coordinate, where $r=0$ describes the central world line and the world lines $r=r_0$ in each half-plane are all timelike. With respect to this radial coordinate, we can define the functions $F$ and $G$, which is $F,G\rightarrow 0$ as $r\rightarrow\infty$ satisfying the asymptotic Ricci flatness condition. We also denote $u$ as the timelike coordinate which is constant on the future light cone of each point on the central world line. Furthermore, $u=0$ describes the initial future light cone, and $u$ tends to the proper time as $r_0\rightarrow \infty$ on a world line $r=r_o$. We only consider the future of some initial future light cone with vertex at the center. Hence, the range of the coordinates $u$ and $r$ are $0\leq u<\infty$ and $0\leq r<\infty$ respectively.
	
	Using the coordinate system that has been described above, we write the spacetime metric in the form
	\begin{align}\label{metric}
	\mathrm{d}s^2 = - e^{2F(u,r)}\mathrm{d}u^2 - 2e^{F(u,r)+G(u,r)}\mathrm{d}u\mathrm{d}r + r^2 \mathrm{d}\Omega^{D-2},
	\end{align}
	where the spatial metric $\mathrm{d}\Omega^{D-2}$ describes ($D-2$)-spatially compact manifold. If the $(D-2)$-spatial geometry is sphere, then the metric (\ref{metric}) becomes flat as $r\rightarrow\infty$. The motivation in choosing such a coordinate system is based on the Cosmic Censorship Conjecture which states that there exists a solution of the Einstein equations for a given arbitrary asymptotically flat initial data which is a globally hyperbolic spacetime possessing a complete future null infinity \cite{Chris1}.
			
	The action of our system has the form
	\begin{align}\label{action}
	S=\int d^Dx \sqrt{-\det({g_{\mu\nu}})}\left\{-\frac{1}{4}\tilde{F}_{\mu\alpha}\tilde{F}_{\nu\beta}g^{\mu\nu}g^{\alpha\beta}+g^{\mu\nu}D_\mu\phi (D_\nu\phi)^* - V(|\phi|)\right\},
	\end{align}
	where $\det({g_{\mu\nu}})$ is the determinant of the metric (\ref{metric}), $\phi=\phi_1+i\phi_2$ is a complex scalar field,  $D_\mu\phi:=\partial_\mu\phi+iA_\mu\phi$ is the (gauge) covariant derivative, and $\tilde{F}_{\mu\nu}:=\partial_\mu A_\nu - \partial_\nu A_\mu$ is the electromagnetic tensor satisfies 
	\begin{align}\label{nabla F}
	\nabla_\nu \tilde{F}^{\mu\nu}= J^\mu,
	\end{align}
	and
	\begin{align}
	J^\mu = \mathrm{Im}\left\{\phi (D^\mu\phi)^*\right\}.
	\end{align}
	Also, we assume that $A_i(u,r)=0$, with $i\geq 1$ and the scalar potential $V(|\phi|)$ satisfies
	\begin{align}\label{potential}
	|V(|\phi|)|+\left|\frac{\partial V(|\phi|)}{\partial \phi^*}\right||\phi|+\left|\frac{\partial^2V(|\phi|)}{\partial\phi\partial\phi^*}\right||\phi|^2\leq K_0 |\phi|^{p+1},~~~\forall\phi\in\mathbb{C},
	\end{align}
	where $K_0$ is a positive constant, and $p\in \mathbb{R}^+$.
	Now, we define a $D-$dimensional local charge function
	\begin{align}\label{local charge}
	Q(u,r):=\int_{B(0,r)} J^0 dv = Vol(\Sigma^{D-2}) \int_{0}^{r}J^0 \sqrt{-\det(g_{\mu\nu})}s^{D-2}~\mathrm{d}s,
	\end{align}
	where $Vol(\Sigma^{D-2})$ represents the volume of $(D-2)$ compact manifold. Taking integration of (\ref{nabla F}) and using the definition of (\ref{local charge}), we obtain
	\begin{align}\label{A0}
	A_0=\frac{1}{Vol(\Sigma^{D-2})} \int_{0}^{r} Qe^{F+G}s^{-(D-2)}~\mathrm{d}s.
	\end{align}
			
	From the action (\ref{action}), we obtain the equation of motion
	\begin{align}\label{EMH}
	&\frac{\partial}{\partial u}\left[r \frac{\partial\phi}{\partial r}+\frac{(D-2)}{2}\phi \right]-\frac{e^{F-G}}{2}\frac{\partial}{\partial r}\left[r\frac{\partial\phi}{\partial r}+\frac{(D-2)}{2}\phi\right]\nonumber\\
	&=\frac{1}{2}\left(\frac{\partial\phi}{\partial r}\right)\left[\frac{\hat{R}(\hat{\sigma})}{(D-2)}e^{F+G}-\frac{(D-2)}{2}e^{F-G}\right]\nonumber\\
	&-\frac{Q^2}{(D-2)r^2}e^{F+G}\left(\frac{\partial\phi}{\partial r}\right)
	-\frac{8\pi r^2}{(D-2)} e^{F+G}\left(\frac{\partial\phi}{\partial r}\right)V(|\phi|)\nonumber\\
	&-i\frac{Q}{2r}e^{F+G}\phi-iA_0\left[\frac{(D-2)}{2}\phi+r\frac{\partial\phi}{\partial r}\right]+\frac{r}{2}e^{F+G}\frac{\partial V(|\phi|)}{\partial \phi^*},
	\end{align}
	where $\hat{R}(\hat{\sigma})$ describes the $(D-2)$-spatial Ricci scalar of compact manifold. We assume that the $(D-2)$-spatial geometry has to be Einstein which implies that the Ricci scalar $\hat{R}(\hat{\sigma})$ is constant.
	
	Let us introduce a generalized ansatz function
	\begin{align}\label{h}
	h :=r\frac{\partial\phi}{\partial r}+\frac{(D-2)}{2}\phi,
	\end{align}
	and
	\begin{align}\label{htilde}
	\phi :=\tilde{h}= r^{-\frac{(D-2)}{2}}\int_{0}^{r}hs^{\frac{D-4}{2}}\mathrm{d}s,
	\end{align}
	such that we have
	\begin{align}\label{dphi}
	\frac{\partial\phi}{\partial r}=\frac{h}{r}-\frac{(D-2)}{2}\frac{1}{r}\tilde{h}.
	\end{align}
	Using (\ref{htilde}) and (\ref{dphi}), we represent the charge function (\ref{local charge}) in terms of $h$ as follows
	\begin{align}\label{Qh}
	Q(u,r)=i~Vol(\Sigma^{D-2}) \int_{0}^{r}\left(\tilde{h}^*h-\tilde{h}h^*\right)s^{D-3}~\mathrm{d}s.
	\end{align}
	
	From the $\{rr\}$ and $\{ij\}$ components of the Einstein equations (see Appendix \ref{Appendix0}), we obtain
	\begin{align}
	g:=e^{F+G}=&\exp \left[-\frac{8\pi}{(D-2)}\int_{r}^{\infty}\frac{1}{s}\left(h-\frac{(D-2)}{2}\tilde{h}\right)^2\mathrm{d}s\right],\label{g2}\\
	\tilde{g}:= e^{F-G}=&\frac{\hat{R}(\hat{\sigma})}{(D-2)}\bar{g}-\frac{(D-4)}{(D-2)}\frac{\hat{R}(\hat{\sigma})}{r^{D-3}}\int_{0}^{r}\bar{g}s^{D-4}\mathrm{d}s+\frac{16\pi  }{(D-2)}\frac{1}{r^{D-3}}\int_{0}^{r} gV(|\tilde{h}|)s^{D-2}\mathrm{d}s\nonumber\\
	&-\frac{2}{(D-2)Vol^2(\Sigma^{D-2})}\frac{1}{r^{D-3}}\int_{0}^{r}gQ^2s^{-(D-2)}\mathrm{d}s\label{g tilde},
	\end{align}
	where $\bar{g}:=\bar{g}(u,r)=\frac{1}{r}\int_{0}^{r}g(u,s)\mathrm{d}s$. In the rest of the paper, we use the definition of the mean value formula $\bar{f}(u,r):=\frac{1}{r}\int_{0}^{r}f(u,s)\mathrm{d}s$. Hence, for the function $f:=f(.,r)$, we have $\frac{\partial \bar{f}}{\partial r}=\frac{f-\bar{f}}{r}$. 
	
	Now we introduce a new operator
	\begin{align}\label{Operator}
	\mathcal{D}:=\frac{\partial}{\partial u} - \frac{\tilde{g}}{2}\frac{\partial}{\partial r},
	\end{align}
	which describes the derivative along the incoming light rays parametrized by $u$. Finally, using (\ref{Operator}) together with (\ref{htilde}) and (\ref{dphi}), we represent the equation (\ref{EMH}) as the following single first-order integro-differential equation
	\begin{align}\label{Dh higher}
	\mathcal{D}h=&\left[\frac{1}{2r}\left(\frac{\hat{R}(\hat{\sigma})}{(D-2)}g-\frac{(D-2)}{2}\tilde{g}\right)+\frac{8\pi g r}{(D-2)}V(|\tilde{h}|)-\frac{gQ^2}{(D-2)r^3}\right]
	\left[h-\frac{(D-2)}{2}\tilde{h}\right]\nonumber\\
	&-i\frac{Q}{2r}g\tilde{h}-iA_0h+\frac{gr}{2}\frac{\partial V(|\tilde{h}|)}{\partial \tilde{h}^*}.
	\end{align}
	This provides the higher dimensional nonlinear evolution equation of the EMH system with respect to $h$.
	
	\subsection{Challenges overcome in this paper}
	Our main focus in this paper is on the study of the existence
	of global classical solution to the EMH system in higher dimensions. In extending the theory to the higher dimensions, there are some significant modifications we made that differ from the results in four dimensional case in Chae \cite{Chae1} as follows:
	\begin{enumerate}
		\item \textbf{Generalization of the ansatz function $h$}\\
		We obtain the equation of motion of the EMH system in higher dimensions shown by equation (\ref{EMH}). To simplify this equation, we define a new ansatz function $h$ given by equation (\ref{h}) that is more general than previous works by \cite{Chris1,Chae1},
		which is nothing but an ordinary differential equation with respect to $\phi$ with the solution $\tilde{h}$ given by equation (\ref{htilde}). In this paper we use the generalized function $\tilde{h}$ rather than the mean value function $\bar{h}:=\frac{1}{r}\int_{0}^{r}h~\mathrm{d}s$ such as in four dimensional case \cite{Chris1,Chae1}. Thus, we need to modify the details of the entire calculation, which is of course much more complicated.
		
		\item \textbf{Additional estimates for $\tilde{g}$}\\
		In the case of higher dimensions, the function $\tilde{g}$ is written by the equation (\ref{g tilde}).
		We have a new problem where the second term of the right-hand side of (\ref{g tilde}) appears in higher dimensions. Thus, we need to provide additional estimates for all function related to $\tilde{g}$. This requires some new modification techniques that are different from the four dimensional case.
		
		\item \textbf{Decay properties in higher dimensions}\\
		Decay properties play a crucial role in distinguishing the properties of the solutions to the Einstein equations in the four and higher dimensions. We know that the nontrivial vacuum solutions of the Einstein equations in Minkowski spacetime have the same order decay near the spatial infinity ($1/r^{D-3}$) and near the null infinity ($1/r^{(D-2)/2}$) for $D = 4$, but the decay at the infinite infinity is slower than the decay at the spatial infinity for $D>4$ \cite{Hollands,Hollands2}. Furthermore, Chae \cite{Chae1} has proved that in four dimensions, the solution of the EMH system decays with a decay order of $1/r^2$. Motivated by \cite{Prisma,Prisma2}, we prove that the solution of the higher dimensional EMH system decays with a decay order of $1/r^{k-1}$ where $k>\frac{D}{2}$.
		
		\item \textbf{Definition of local mass and local charge in higher dimensions}\\
		We also study the properties of spacetime completeness for EMH systems in higher dimensions. Therefore, we give new definitions for the local mass function $M(u,r)$ and local charge function $Q(u,r)$ in higher dimensions, where the final local mass and the final local charge are positive numbers. We show that the spacetime is complete along the time-like lines outside the region determined by the final local mass and the final local charge.
	\end{enumerate}
	
	\subsection{Structure of the paper}
	The organization of this paper is the following. Section \ref{Sec1} provides the motivation, construction of the problem, and the challenges. We extend the method originally proposed by Chae \cite{Chae1} and Christodoulou \cite{Chris1} in four dimensions to higher dimensions. This extension is not trivial since we need some modifications. We start the construction by introducing the coordinate system in higher dimensional metric (\ref{metric}) of Bondi coordinates. Then we define the new ansatz function in (\ref{h}) which is more general than the four dimensional case. We use this function to transform the equation of motion into a single first-order evolution equation as the main equation of our problem. We also define the generalized metric function $g$ and $\tilde{g}$ in (\ref{g2}) and (\ref{g tilde}) respectively. One of the challenges we discovered was that the second term on the right-hand-side of the equation (\ref{g tilde}) only appears in higher dimensions. To solve this problem we need to compute the estimates of the additional term for all functions containing $\tilde{g}$. In section \ref{Sec2}, we provide the mathematical setup that we use to prove the existence of solutions to our problem. We prove the existence of local solution in section \ref{Sec3} and the existence of global solution in section \ref{Sec4}. We firtsly show that the solution decays polynomially. Next, we prove the contraction mapping in the appropriate Banach space. Finally, we show that the problem has the global classical solution. We also study the completeness properties of the spacetimes in section \ref{Sec5}. We introduce a  function analogous to the mass of $D\geq 4$ in Bondi coordinates. The completeness of spacetimes along the future has led time-like lines outward into a region determined by the final local mass and the final local charge. We put the detail calculations as the Appendix in section \ref{Sec6}.
	
	\section{Mathematical setup}\label{Sec2}
	In this section we provide mathematical setup that we use to prove the existence of solutions to our problems. 
	
	Let us define the map $h\mapsto \mathcal{F}(h)$ as the solution of (\ref{Dh higher}), with the initial condition
	\begin{align}
	\mathcal{F}(h)(0,r)=h(0,r).
	\end{align}
	Let $r(u)=\chi(u;r_0)$ be the solution of the characteristics equation
	\begin{equation} \label{PDB}
	\frac{\mathrm{d}r}{\mathrm{d}u}=-\frac{1}{2}\tilde{g}(u,r),~~~r(0)=r_0.
	\end{equation}
	From equation (\ref{PDB}) we obtain the characteristic function $r_1:=\chi(u_1;r_0)$ as follows
	\begin{equation}\label{kondisi awal r}
	r_1 = r_0 - \frac{1}{2}\int_{0}^{u_1} \tilde{g}(u,\chi(u;r_0))~\mathrm{d}u.
	\end{equation}
	Using this characteristics, we write (\ref{Dh higher}) as the integral equation
	\begin{align}\label{Fcurl}
	\mathcal{F}(u_1,r_1)=&h(0,r_0)\exp\left\{\int_{0}^{u_1} \left(\frac{1}{2r}\left[\frac{\hat{R}(\hat{\sigma})}{(D-2)}g-\frac{(D-2)}{2}\tilde{g}\right]+\frac{8\pi g r}{(D-2)}V(|\tilde{h}|)\right.\right.\nonumber\\
	&\left.\left.-\frac{gQ^2}{(D-2)r^3}-iA_0\right)_\chi\mathrm{d}u'\right\}+\int_{0}^{u_1} \exp\left\{\int_{u}^{u_1} \left(\frac{1}{2r}\left[\frac{\hat{R}(\hat{\sigma})}{(D-2)}g-\frac{(D-2)}{2}\tilde{g}\right]\right.\right.\nonumber\\
	&\left.\left.+\frac{8\pi g r}{(D-2)}V(|\tilde{h}|)-\frac{gQ^2}{(D-2)r^3}-iA_0\right)_\chi\mathrm{d}u'\right\}[f]_\chi\mathrm{d}u,
	\end{align}
	where
	\begin{align}\label{f}
	f:=&-\left\{\frac{(D-2)}{2}\left(\frac{1}{2r}\left[\frac{\hat{R}(\hat{\sigma})}{(D-2)}g-\frac{(D-2)}{2}\tilde{g}\right]+\frac{8\pi g r}{(D-2)}V(|\tilde{h}|)\right.\right.\nonumber\\
	&\left.\left.-\frac{gQ^2}{(D-2)r^3}\right)+\frac{iQ}{2r}g\right\}\tilde{h}+\frac{gr}{2}\frac{\partial V(|\tilde{h}|)}{\partial \tilde{h}^*}.
	\end{align}
	
	Next, we define
	\begin{equation}
	\mathcal{G}(u,r):=\frac{\partial \mathcal{F}}{\partial r}(u,r),
	\end{equation}
	satisfies
	\begin{align}\label{eq.G}
	\mathcal{D}\mathcal{G}=&\left[\frac{1}{2}\frac{\partial \tilde{g}}{\partial r}+\frac{1}{2r} \left(\frac{\hat{R}(\hat{\sigma})}{(D-2)}g-\frac{(D-2)}{2}\tilde{g}\right)+\frac{8\pi g r}{(D-2)}V(|\tilde{h}|)-\frac{gQ^2}{(D-2)r^3}-iA_0\right]\mathcal{G}\nonumber\\
	& + \left[\frac{1}{2r}\frac{\partial}{\partial r}\left(\frac{\hat{R}(\hat{\sigma})}{(D-2)}g-\frac{(D-2)}{2}\tilde{g}\right)-\frac{1}{2r^2}\left(\frac{\hat{R}(\hat{\sigma})}{(D-2)}g-\frac{(D-2)}{2}\tilde{g}\right)\right.\nonumber\\
	&+\frac{8\pi g r }{(D-2)} \frac{\partial V(|\tilde{h}|)}{\partial\tilde{h}^*}\frac{\partial\tilde{h}^*}{\partial r}+\frac{8\pi g V(|\tilde{h}|)}{(D-2)}+\frac{8\pi r V(|\tilde{h}|)}{(D-2)}\frac{\partial g}{\partial r}-\frac{Q^2}{(D-2)r^3}\frac{\partial g}{\partial r}\nonumber\\
	&\left.-\frac{2gQ}{(D-2)r^3}\frac{\partial Q}{\partial r}+\frac{3gQ^2}{(D-2)r^4}\right]\left(\mathcal{F}-\frac{(D-2)}{2}\tilde{h}\right)+\left[\frac{r}{2}\frac{\partial g}{\partial r}+\frac{g}{2}\right]\frac{\partial V(|\tilde{h}|)}{\partial\tilde{h}^*}\nonumber\\
	& -\left[\frac{(D-2)}{2}\left(\frac{1}{2r} \left(\frac{\hat{R}(\hat{\sigma})}{(D-2)}g-\frac{(D-2)}{2}\tilde{g}\right)+\frac{8\pi g r V(|\tilde{h}|)}{(D-2)}\right)+\frac{gQ^2}{2r^3}-\frac{igQ}{2r}\right]\frac{\partial \tilde{h}}{\partial r}\nonumber\\
	&+\frac{gr}{2}\frac{\partial^2 V(|\tilde{h}|)}{(\partial\tilde{h}^*)^2}\frac{\partial\tilde{h}^*}{\partial r}-i\frac{\partial A_0}{\partial r}\mathcal{F}+\left[\frac{igQ}{2r^2}-\frac{ig}{2r}\frac{\partial Q}{\partial r}-\frac{iQ}{2r}\frac{\partial g}{\partial r}\right]\tilde{h},
	\end{align}
	with the initial condition $\mathcal{G}(0,r_0)=\frac{\partial h}{\partial r}(0,r_0)$. Using the characteristics as previously, we write the equation (\ref{eq.G}) as the integral equation
	\begin{align}\label{Gcurl}
	\mathcal{G}(u_1,r_1)=&\frac{\partial h}{\partial r}(0,r_0)\exp\left\{\int_{0}^{u_1}\left[\frac{1}{2}\frac{\partial \tilde{g}}{\partial r}+\frac{1}{2r}\left[\frac{\hat{R}(\hat{\sigma})}{(D-2)}g-\frac{(D-2)}{2}\tilde{g}\right]\right.\right.\nonumber\\
	&\left.\left.-\frac{8\pi gr}{(D-2)} V(|\tilde{h}|)-\frac{Q^2}{(D-2)r^3}g-iA_0\right]_\chi\mathrm{d}u \right\}\nonumber\\
	&+\int_{0}^{u_1}\exp\left\{\int_{u}^{u_1}\left[\frac{1}{2}\frac{\partial \tilde{g}}{\partial r}+\frac{1}{2r}\left[\frac{\hat{R}(\hat{\sigma})}{(D-2)}g-\frac{(D-2)}{2}\tilde{g}\right]\right.\right.\nonumber\\
	&\left.\left.-\frac{8\pi gr}{(D-2)} V(|\tilde{h}|)-\frac{Q^2}{(D-2)r^3}g-iA_0\right]_\chi\mathrm{d}u' \right\}[f_1]_\chi\mathrm{d}u,
	\end{align}
	where
	\begin{align} \label{f1}
	f_1 :=& \left[\frac{1}{2r}\frac{\partial}{\partial r}\left(\frac{\hat{R}(\hat{\sigma})}{(D-2)}g-\frac{(D-2)}{2}\tilde{g}\right)-\frac{1}{2r^2}\left(\frac{\hat{R}(\hat{\sigma})}{(D-2)}g-\frac{(D-2)}{2}\tilde{g}\right)\right.\nonumber\\
	&+\frac{8\pi g r }{(D-2)} \frac{\partial V(|\tilde{h}|)}{\partial\tilde{h}^*}\frac{\partial\tilde{h}^*}{\partial r}+\frac{8\pi g V(|\tilde{h}|)}{(D-2)}+\frac{8\pi r V(|\tilde{h}|)}{(D-2)}\frac{\partial g}{\partial r}-\frac{Q^2}{(D-2)r^3}\frac{\partial g}{\partial r}\nonumber\\
	&\left.-\frac{2gQ}{(D-2)r^3}\frac{\partial Q}{\partial r}+\frac{3gQ^2}{(D-2)r^4}\right]\left(\mathcal{F}-\frac{(D-2)}{2}\tilde{h}\right)+\left[\frac{r}{2}\frac{\partial g}{\partial r}+\frac{g}{2}\right]\frac{\partial V(|\tilde{h}|)}{\partial\tilde{h}^*}\nonumber\\
	& -\left[\frac{(D-2)}{2}\left(\frac{1}{2r} \left(\frac{\hat{R}(\hat{\sigma})}{(D-2)}g-\frac{(D-2)}{2}\tilde{g}\right)+\frac{8\pi g r V(|\tilde{h}|)}{(D-2)}\right)+\frac{gQ^2}{2r^3}-\frac{igQ}{2r}\right]\frac{\partial \tilde{h}}{\partial r}\nonumber\\
	&+\frac{gr}{2}\frac{\partial^2 V(|\tilde{h}|)}{(\partial\tilde{h}^*)^2}\frac{\partial\tilde{h}^*}{\partial r}-i\frac{\partial A_0}{\partial r}\mathcal{F}+\left[\frac{igQ}{2r^2}-\frac{ig}{2r}\frac{\partial Q}{\partial r}-\frac{iQ}{2r}\frac{\partial g}{\partial r}\right]\tilde{h}.
	\end{align}
	
	Then, we define $g_l:=g(h_l)$, $\mathcal{F}_l:=\mathcal{F}(h_l)$, $\mathcal{G}_l:=\mathcal{G}(h_l)$, ${A_0}_l:=A_0(h_l)$ for $l=1,2$, and 
	\begin{align}\label{Def Theta}
	\Theta:=\mathcal{F}(h_1)-\mathcal{F}(h_2),
	\end{align}
	satisfies
	\begin{align}
	\mathcal{D}\Theta=&\left[\frac{1}{2r}\left[\frac{\hat{R}(\hat{\sigma})}{(D-2)}g_1-\frac{(D-2)}{2}\tilde{g}_1\right]+\frac{8\pi r g_2}{(D-2)} V(|\tilde{h}_1|)-\frac{g_1Q_2^2}{(D-2)r^3}-i{A_0}_1\right]\Theta\nonumber\\
	&+\frac{1}{2}\left(\tilde{g}_1-\tilde{g}_2\right)\mathcal{G}_2-\frac{1}{2r}\left[\frac{\hat{R}(\hat{\sigma})}{(D-2)}g_1-\frac{(D-2)}{2}\tilde{g}_1\right]\frac{(D-2)}{2}(\tilde{h}_1 -\tilde{h}_2)\nonumber\\
	&+\frac{1}{2r}\left(\frac{\hat{R}(\hat{\sigma})}{(D-2)}(g_1-g_2)-\frac{(D-2)}{2}(\tilde{g}_1-\tilde{g}_2)\right)\mathcal{F}_2\nonumber\\
	&-\frac{1}{2r}\left(\frac{\hat{R}(\hat{\sigma})}{(D-2)}(g_1-g_2)-\frac{(D-2)}{2}(\tilde{g}_1-\tilde{g}_2)\right)\frac{(D-2)}{2}\tilde{h}_2\nonumber\\
	&+\frac{8\pi r}{(D-2)}(g_1-g_2)\mathcal{F}_1V(|\tilde{h}_1|)-4\pi r(g_1-g_2)\tilde{h}_1 V(|\tilde{h}_1|)\nonumber\\
	&+\frac{8\pi r g_2}{(D-2)} \left(V(|\tilde{h}_1|)-V(|\tilde{h}_2|)\right)\mathcal{F}_2-4\pi r g_2 (\tilde{h}_1 - \tilde{h}_2)V(|\tilde{h}_2|)\nonumber\\
	&-4\pi r g_2\tilde{h}_1 \left(V(|\tilde{h}_1|)-V(|\tilde{h}_2|)\right)+\frac{r}{2}(g_1-g_2)\frac{\partial V(|\tilde{h}_1|)}{\partial \tilde{h}_1^*}\nonumber\\
	&+\frac{rg_2}{2}\left(\frac{\partial V(|\tilde{h}_1|)}{\partial \tilde{h}_1^*}-\frac{\partial V(|\tilde{h}_2|)}{\partial \tilde{h}_2^*}\right)-\frac{g_1\left(Q_1^2-Q_2^2\right)}{(D-2)r^3}\left(\mathcal{F}_1-\frac{(D-2)}{2}\tilde{h}_1\right)\nonumber\\
	&- \frac{g_1Q_2^2}{2r^3}(\tilde{h}_1-\tilde{h}_2)-\frac{(g_1-g_2)}{(D-2)r^3}Q_2^2\left(\mathcal{F}_2-\frac{(D-2)}{2}\tilde{h}_2\right)-i\frac{(Q_1-Q_2)}{2r}g_1\tilde{h}_1\nonumber\\
	&-i\frac{Q_2}{2r}(g_1-g_2)\tilde{h}_1-i\frac{Q_2g_2}{2r}(\tilde{h}_1-\tilde{h}_2)-i\mathcal{F}_2({A_0}_1-{A_0}_2).
	\end{align}
	
	Let us consider a new characteristics equation
	\begin{align}
	\frac{\mathrm{d}r}{\mathrm{d}u}=-\frac{\tilde{g}_1}{2}(\chi_1(u,r),u);~~~r(0)=r_0.
	\end{align}
	Setting $h_1(0,r_1)=h_2(0,r_2)$. Hence we write (\ref{Def Theta}) as the integral equation
	\begin{align}\label{Theta}
	\Theta(u_1,r_1)=&\int_{0}^{u_1} \exp\left\{\int_{u}^{u_1} \left[\frac{1}{2r}\left[\frac{\hat{R}(\hat{\sigma})}{(D-2)}g_1-\frac{(D-2)}{2}\tilde{g}_1\right]+\frac{8\pi r g_2}{(D-2)} V(|\tilde{h}_1|)\right.\right.\nonumber\\
	&\left.\left.-\frac{g_1Q_2^2}{(D-2)r^3}-i{A_0}_1\right]_{\chi_1}\mathrm{d}u'\right\}[\tilde{\varphi}]_{\chi_1}\mathrm{d}u\;,
	\end{align}
	where
	\begin{align} \label{phi tilde}
	\tilde{\varphi}:=&\frac{1}{2}\left(\tilde{g}_1-\tilde{g}_2\right)\mathcal{G}_2-\frac{1}{2r}\left[\frac{\hat{R}(\hat{\sigma})}{(D-2)}g_1-\frac{(D-2)}{2}\tilde{g}_1\right]\frac{(D-2)}{2}(\tilde{h}_1 -\tilde{h}_2)\nonumber\\
	&+\frac{1}{2r}\left(\frac{\hat{R}(\hat{\sigma})}{(D-2)}(g_1-g_2)-\frac{(D-2)}{2}(\tilde{g}_1-\tilde{g}_2)\right)\mathcal{F}_2\nonumber\\
	&-\frac{1}{2r}\left(\frac{\hat{R}(\hat{\sigma})}{(D-2)}(g_1-g_2)-\frac{(D-2)}{2}(\tilde{g}_1-\tilde{g}_2)\right)\frac{(D-2)}{2}\tilde{h}_2\nonumber\\
	&+\frac{8\pi r}{(D-2)}(g_1-g_2)\mathcal{F}_1V(|\tilde{h}_1|)-4\pi r(g_1-g_2)\tilde{h}_1 V(|\tilde{h}_1|)\nonumber\\
	&+\frac{8\pi r g_2}{(D-2)} \left(V(|\tilde{h}_1|)-V(|\tilde{h}_2|)\right)\mathcal{F}_2-4\pi r g_2 (\tilde{h}_1 - \tilde{h}_2)V(|\tilde{h}_2|)\nonumber\\
	&-4\pi r g_2\tilde{h}_1 \left(V(|\tilde{h}_1|)-V(|\tilde{h}_2|)\right)+\frac{r}{2}(g_1-g_2)\frac{\partial V(|\tilde{h}_1|)}{\partial \tilde{h}_1^*}\nonumber\\
	&+\frac{rg_2}{2}\left(\frac{\partial V(|\tilde{h}_1|)}{\partial \tilde{h}_1^*}-\frac{\partial V(|\tilde{h}_2|)}{\partial \tilde{h}_2^*}\right)-\frac{g_1\left(Q_1^2-Q_2^2\right)}{(D-2)r^3}\left(\mathcal{F}_1-\frac{(D-2)}{2}\tilde{h}_1\right)\nonumber\\
	&- \frac{g_1Q_2^2}{2r^3}(\tilde{h}_1-\tilde{h}_2)-\frac{(g_1-g_2)}{(D-2)r^3}Q_2^2\left(\mathcal{F}_2-\frac{(D-2)}{2}\tilde{h}_2\right)-i\frac{(Q_1-Q_2)}{2r}g_1\tilde{h}_1\nonumber\\
	&-i\frac{Q_2}{2r}(g_1-g_2)\tilde{h}_1-i\frac{Q_2g_2}{2r}(\tilde{h}_1-\tilde{h}_2)-i\mathcal{F}_2({A_0}_1-{A_0}_2).
	\end{align}
	
	\section{Local existence}\label{Sec3}
	This section is devoted to prove the existence of local classical solution to (\ref{Dh higher}). Local existence describes that the solution of the system (\ref{Dh higher}) exists for a finite time $u_0$.
	
	Setting $k>\frac{D}{2}$, where $k\in\mathbb{R}^+$.  We introduce a function space
	\begin{align}\label{space Xhat}
	\hat{X} :=\{h\in C^1([0,u_0] \times [0,\infty) )\; | \;\|h\|_{\hat{X}} < \infty\}\:, 
	\end{align}
	equipped with the norm
	\begin{align} \label{hx hat}
	\|h\|_{\hat{X}} :=\sup_{u\in[0,u_0]} \sup_{r\geq 0}\left\{ (1+r+u)^{k-1}|h(u,r)| + (1+r+u)^{k}\left|\frac{\partial h}{\partial r}(u,r)\right|  \right\}\:.
	\end{align}
	Then, we introduce
	\begin{align}
	\hat{X}_0:=\{h\in C^1([0,u_0])\; | \;\|h\|_{\hat{X}_0} < \infty\}\:,
	\end{align}
	with the norm
	\begin{align}
	\|h\|_{\hat{X}_0} :=\sup_{r\geq 0}\left\{ (1+r)^{k-1}|h(r)| + (1+r)^{k}\left|\frac{\partial h}{\partial r}(r)\right|  \right\}\:.
	\end{align}
	We also introduce the space $\hat{Y}$ containing $\hat{X}$ as follows
	\begin{align}\label{space Yhat}
	\hat{Y}:=\{h\in C^1([0,u_0] \times [0,\infty) )\; | \; h(0,r) = h_0(r), \|h\|_{\hat{Y}} < \infty\}\:,
	\end{align}
	with
	\begin{align}\label{norm Yhat}
	\|h\|_{\hat{Y}} :=\sup_{u\in[0,u_0]} \sup_{r\geq 0}\left\{ (1+r+u)^{k-1}|h(u,r)|\right\}\:,
	\end{align}
	The space function $\hat{X}, \hat{X}_0,$ and $\hat{Y}$ are the Banach spaces. Let us denote $\|h\|_{\hat{X}}:=\hat{x}$, $\|h(0,.)\|_{\hat{X}_0}:=\hat{d}$, and $\|h_1-h_2\|_{\hat{Y}}:=\hat{y}$.
	
	The existence of local solutions of \eqref{Dh higher} can be stated in the following Theorem.
	\begin{theorem}\label{Theorem Local}
		Let $D\geq 4$. Given a positive constant $(D-2)$-spatial Ricci scalar of compact manifold $\hat{R}(\hat{\sigma})$.
		Suppose that $\hat{X}$ and $\hat{Y}$ are the function spaces defined by (\ref{space Xhat}) and (\ref{space Yhat}) respectively. For a given an initial data $h(0,r)\in C^1[0,\infty)$ such that $h(0,r)=O(r^{-{(k-1)}})$ and $\frac{\partial h}{\partial r}(0,r)=O(r^{-k})$ as $r\rightarrow \infty$, there exists a $u_0>0$ and a local classical solution of the equation (\ref{Dh higher}) satisfies
		\begin{align}
		h(u,r)\in C^1 ([0,u_0]\times [0,\infty)),
		\end{align}
		for $p\in[k,\infty)$ and $k>\frac{D}{2}$, where $p,k\in\mathbb{R}^+$.
	\end{theorem}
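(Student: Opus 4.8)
The plan is to realize the local solution as the unique fixed point of the integral operator $\mathcal{F}$ defined by \eqref{Fcurl}, applying the Banach fixed point theorem on a closed ball of the weighted space $\hat X$ while measuring the contraction in the weaker norm $\hat Y$. The design of the weights $(1+r+u)^{k-1}$ and $(1+r+u)^{k}$ in \eqref{hx hat} is exactly what is needed so that, for $k>D/2$, all the radial integrals defining the auxiliary fields converge and reproduce these decay rates. \textbf{Step 1 (a priori estimates).} First I would fix $h\in\hat X$ with $\|h\|_{\hat X}=\hat x$ and bound, in the same weighted norms, every quantity entering \eqref{Fcurl}, \eqref{f}, \eqref{Gcurl} and \eqref{f1}. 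From \eqref{htilde} and $|h|\le \hat x\,(1+r+u)^{-(k-1)}$, the condition $k>D/2$ makes $\int_0^r h\,s^{(D-4)/2}\,\mathrm ds$ convergent and forces $\tilde h$, and via \eqref{dphi} its derivative, to inherit the decay $O((1+r+u)^{-(k-1)})$. The factor $g$ of \eqref{g2} satisfies $0<g\le 1$ with $1-g$ quadratically small in $\hat x$; the charge $Q$ of \eqref{Qh} is quadratic in $(h,\tilde h)$ and hence $O(\hat x^2)$ with the corresponding decay, and $A_0$ of \eqref{A0} is controlled accordingly. The potential terms are bounded through \eqref{potential}, where $p\ge k$ ensures that $rV(|\tilde h|)$ and $r\,\partial V/\partial\tilde h^*$ decay fast enough to be absorbed into the weights.

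\textbf{Step 2 (self-mapping).} Inserting these bounds into \eqref{Fcurl}, the exponential factors are uniformly bounded because their exponents are integrable in $u$ on $[0,u_0]$, while the source $f$ of \eqref{f}, together with the source $f_1$ of \eqref{f1} that governs $\mathcal G=\partial_r\mathcal F$ through \eqref{Gcurl}, are estimated in the $(1+r+u)^{-(k-1)}$ and $(1+r+u)^{-k}$ weights respectively. The $u$-integration contributes an overall factor small with $u_0$, so that $\|\mathcal F(h)\|_{\hat X}\le \hat d+C(\hat x)\,u_0$; choosing the ball $\{\,\|h\|_{\hat X}\le 2\hat d\,\}$ and $u_0$ small makes $\mathcal F$ map this ball into itself. \textbf{Step 3 (contraction).} For $h_1,h_2$ in the ball I would estimate $\Theta=\mathcal F(h_1)-\mathcal F(h_2)$ via \eqref{Theta}. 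Every term of $\tilde\varphi$ in \eqref{phi tilde} is built from one of the differences $g_1-g_2$, $\tilde g_1-\tilde g_2$, $\tilde h_1-\tilde h_2$, $Q_1-Q_2$, ${A_0}_1-{A_0}_2$ or $V(|\tilde h_1|)-V(|\tilde h_2|)$, each linearly controlled by $\|h_1-h_2\|_{\hat Y}=\hat y$ times the bounds of Step 1; with the small $u$-factor this yields $\|\Theta\|_{\hat Y}\le \theta(u_0)\,\hat y$ with $\theta(u_0)\to 0$ as $u_0\to 0$, hence $\theta(u_0)<1$.

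\textbf{Step 4 (fixed point and regularity).} Since the closed ball of $\hat X$ is complete for the weaker $\hat Y$ metric, the Banach fixed point theorem produces a unique $h$ with $\mathcal F(h)=h$. That $h\in C^1$ follows because the companion construction of $\mathcal G$ through \eqref{Gcurl} shows $\partial_r h$ exists and is continuous with the required decay, after which continuity of $\partial_u h$ is read off directly from the evolution equation \eqref{Dh higher}.

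The main obstacle I expect is Step 1 for $\tilde g$ in \eqref{g tilde} and its propagation into Step 2. The genuinely higher-dimensional term $-\tfrac{(D-4)}{(D-2)}\tfrac{\hat R(\hat\sigma)}{r^{D-3}}\int_0^r\bar g\,s^{D-4}\,\mathrm ds$ has no four-dimensional analogue, and one must verify both that it, and its $r$-derivative entering $f_1$ through $\partial_r\tilde g$, remains regular as $r\to 0$ and that it decays like $(1+r+u)^{-k}$ as $r\to\infty$ without exhausting the weight budget. Balancing this term against the $Q^2$ and potential contributions under the single constraints $k>D/2$ and $p\ge k$ is the delicate point that distinguishes the higher-dimensional argument from the four-dimensional treatment of Chae \cite{Chae1}.
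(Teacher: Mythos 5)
Your proposal is correct and follows essentially the same route as the paper: a priori weighted estimates for $\tilde h$, $g$, $\tilde g$, $Q$, $A_0$ and the sources $f$, $f_1$ along characteristics, a self-mapping bound for $\mathcal{F}$ on a ball of $\hat X$ with a factor small in $u_0$, contraction measured in the weaker norm $\hat Y$, and recovery of $C^1$ regularity from $\mathcal{G}$ and the evolution equation (your Step 4 remark that the $\hat X$-ball is closed under the $\hat Y$ metric is in fact a cleaner justification than the paper's own phrasing). You also correctly single out the genuinely higher-dimensional term in $\tilde g$ as the main new estimate, which is exactly the point the authors emphasize.
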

	
	To prove Theorem \ref{Theorem Local} we establish two Lemmas.
	\begin{lemma}\label{Lemma local 1}
		Let $D\geq 4$. Setting $p\in[k,\infty)$ and $k>\frac{D}{2}$, where $p,k\in\mathbb{R}^+$. For any $\mathcal{L}_1(\hat{x})>\hat{d}$, there exists $\delta(\hat{x},\hat{d})>0$ such that if $u_0<\delta$, the map $\mathcal{F}$ is contained in the closed ball of radius $\hat{x}$ in the space $\hat{X}$. 
	\end{lemma}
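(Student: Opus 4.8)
The plan is to verify that $\mathcal{F}$ maps the closed ball $\{h:\|h\|_{\hat{X}}\le\hat{x}\}$ into itself by estimating, directly from the integral representations (\ref{Fcurl}) and (\ref{Gcurl}), both $\mathcal{F}$ and its radial derivative $\mathcal{G}=\partial_r\mathcal{F}$, since the norm (\ref{hx hat}) weights $|\mathcal{F}|$ by $(1+r+u)^{k-1}$ and $|\mathcal{G}|$ by $(1+r+u)^{k}$. Throughout I would fix $h$ with $\|h\|_{\hat{X}}\le\hat{x}$ and regard every object built from it --- $\tilde{h}$ through (\ref{htilde}), $g$ through (\ref{g2}), $\tilde{g}$ through (\ref{g tilde}), $Q$ through (\ref{Qh}), $A_0$ through (\ref{A0}), and the potential terms through the growth bound (\ref{potential}) --- as coefficients to be controlled in terms of $\hat{x}$ alone.

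The first preparatory step is a battery of weighted pointwise estimates. From $|h|\le\hat{x}(1+r+u)^{-(k-1)}$ and $|\partial_r h|\le\hat{x}(1+r+u)^{-k}$ I would derive bounds for $\tilde{h}$ and $\partial_r\tilde{h}$ using (\ref{htilde}) and (\ref{dphi}); here $k>\tfrac{D}{2}$ guarantees that the defining integral is regular at the center and converges at infinity, although $\tilde{h}$ only decays like $r^{-(D-2)/2}$, i.e. more slowly than $h$, which must be tracked carefully. I would then estimate $g$ and $\partial_r g$ (using $0<g\le1$ from (\ref{g2})), $\tilde{g}$ and $\partial_r\tilde{g}$ from (\ref{g tilde}), the charge $Q$ and $\partial_r Q$ from (\ref{Qh}), the potential $A_0$ and $\partial_r A_0$ from (\ref{A0}), and finally $V(|\tilde{h}|)$ with its first and second $\tilde{h}^*$-derivatives via (\ref{potential}) with $p\ge k$, which renders the potential contributions subdominant. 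The second preparatory step is the characteristic comparison: because $\mathrm{d}r/\mathrm{d}u=-\tilde{g}/2$ with $\tilde{g}$ uniformly bounded, (\ref{kondisi awal r}) yields $|r_0-r_1|\le C u_0$, so the weights $(1+r_0)$ and $(1+r_1+u_1)$ are comparable with constants tending to $1$ as $u_0\to0$.

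With these in hand I would estimate (\ref{Fcurl}) and (\ref{Gcurl}) term by term. For the homogeneous terms I would bound the real part of each exponent by $C(\hat{x})u_0$ --- this requires showing that the apparently singular coefficients $\tfrac{1}{2r}[\tfrac{\hat{R}(\hat{\sigma})}{D-2}g-\tfrac{D-2}{2}\tilde{g}]$ and $gQ^2/(D-2)r^3$ stay bounded near $r=0$ by exhibiting the vanishing order of their numerators at the center --- so the integrating factors are at most $e^{C(\hat{x})u_0}$. Together with the weight comparison this makes the homogeneous contribution to $\|\mathcal{F}(h)\|_{\hat{X}}$ tend to $\hat{d}$ as $u_0\to0$; the hypothesis $\mathcal{L}_1(\hat{x})>\hat{d}$ supplies exactly the slack needed to keep it below $\mathcal{L}_1(\hat{x})$ for $u_0$ small. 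For the source terms I would bound $|[f]_\chi|$ and $|[f_1]_\chi|$ from (\ref{f}) and (\ref{f1}) by $C(\hat{x})$ times the appropriate inverse power of $(1+r+u)$, so that the single integration over $u\in[0,u_0]$ produces an overall factor $u_0$ and the source contributions are $O(u_0)$.

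The main obstacle will be the $\mathcal{G}$-estimate and, inside it, the new higher-dimensional structure of $\tilde{g}$. The source $f_1$ in (\ref{f1}) carries the radial derivatives $\partial_r g,\partial_r\tilde{g},\partial_r Q,\partial_r\tilde{h},\partial_r A_0$ multiplied by the strongly singular coefficients $1/r^2,1/r^3,1/r^4$, and one must show that each numerator both vanishes to sufficient order at $r=0$ and decays with the correct power at infinity. The delicate point is the genuinely $D>4$ term $-\tfrac{D-4}{D-2}\tfrac{\hat{R}(\hat{\sigma})}{r^{D-3}}\int_0^r\bar{g}\,s^{D-4}\mathrm{d}s$ in (\ref{g tilde}), which has no four-dimensional analogue and prevents the combination $\tfrac{\hat{R}(\hat{\sigma})}{D-2}g-\tfrac{D-2}{2}\tilde{g}$ from decaying; controlling its interaction with the slowly-decaying $\tilde{h}$ so that the weighted source bounds still close is where the four-dimensional argument of Chae must be modified. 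Once every coefficient is controlled in the $(1+r+u)$-weights and each source integral is seen to carry the factor $u_0$, it remains to pick $\delta(\hat{x},\hat{d})$ small enough that the homogeneous part stays below $\mathcal{L}_1(\hat{x})$ while the $O(u_0)$ source is absorbed, yielding $\|\mathcal{F}(h)\|_{\hat{X}}\le\hat{x}$ and completing the proof.
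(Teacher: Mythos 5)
Your proposal follows essentially the same route as the paper: weighted pointwise decay estimates for $\tilde{h}$, $g$, $\tilde{g}$, $Q$, $A_0$ and the potential terms in powers of $\hat{x}$, an $e^{C(\hat{x})\delta}$ bound on the integrating factors, a factor of $\delta$ extracted from the $u$-integration of the sources $f$ and $f_1$, and a final smallness condition on $\delta$ encoded in the function $\mathcal{L}_1(\hat{x})$ relating $\hat{d}$ and $\hat{x}$. The only notable (and harmless) difference is that you explicitly track the comparability of the weights $(1+r_0)$ and $(1+r_1+u_1)$ along the characteristics, which the paper only carries out in the global-existence section.
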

	\begin{proof}
		We firstly calculate
		\begin{align}\label{estimasi htilde local}
		|\tilde{h}|\leq& r^{-\frac{(D-2)}{2}}\int_{0}^{r}\frac{\|h\|_{\hat{X}}}{(1+s+u)^{2}}s^{\frac{(D-4)}{2}}\mathrm{d}s\nonumber\\
		\leq&\frac{2\hat{x}}{(2k-D)}\frac{1}{(1+u)^{\frac{(2k-D)}{2}}(1+r+u)^{\frac{(D-2)}{2}}}.
		\end{align}
		Since $|\tilde{h}|$ is always positive, then we have $k>\frac{D}{2}$.
		
		From (\ref{estimasi htilde local}), we have
		\begin{align}\label{h-h local}
		\left|h-\frac{(D-2)}{2}\tilde{h}\right|\leq |h| + \left|\frac{(D-2)}{2}\tilde{h}\right|\leq \frac{C\hat{x}}{(1+u)^{\frac{(2k-D)}{2}}(1+r+u)^{\frac{(D-2)}{2}}}.
		\end{align}
		Using the main value formula we obtain
		\begin{align}\label{g-gbar k local}
		|(g-\bar{g})(u,r)|\leq& \frac{1}{r}\int_{0}^{r}|g(u,r)-g(u,r')|\mathrm{d}r'
		\nonumber\\
		\leq& \frac{8\pi}{(D-2)}\frac{1}{r}\int_{0}^{r} \int_{r'}^{r}\frac{1}{s}\left|h-\frac{(D-2)}{2}\tilde{h}\right|^2\mathrm{d}s\mathrm{d}r'
		\nonumber\\
		\leq& \frac{8\pi \hat{x}^2}{(D-3)(D-2)^2}\frac{1}{(1+u)^{2k-3}(1+r+u)}.
		\end{align}
		Substituting (\ref{h-h local}) into (\ref{g2}) yields
		\begin{align}
		|g (u,r)|\geq\exp \left[-\frac{8\pi \hat{x}^2}{(D-2)^2(1+u)^{2k-D}(1+r+u)^{D-2}}\right].
		\end{align}
		Then we use triangle inequality to produce
		\begin{align}\label{estimate gbar local}
		|\bar{g}|\geq |g| + |g-\bar{g}|\geq\frac{8\pi \hat{x}^2}{(D-3)(D-2)^2}\frac{1}{(1+u)^{2k-3}(1+r+u)},
		\end{align}
		such that we have
		\begin{align}
		\frac{(D-4)}{2}\frac{\hat{R}(\hat{\sigma})}{r^{D-3}}\int_{0}^{r}|\bar{g}|s^{D-4}\mathrm{d}s\leq \frac{C\hat{x}^2}{(1+u)^{2k-3}(1+r+u)}.
		\end{align}
		Setting $p\in[k,\infty)$, we obtain
		\begin{align}
		\frac{8\pi}{r^{D-3}}\int_{0}^{r} gs^{D-2}|V(|\tilde{h}|)|\mathrm{d}s\leq \frac{C\hat{x}^{p+1}}{(1+u)^{k^2-D}(1+r+u)^{D-3}}.
		\end{align}
		In the view of (\ref{Qh}) we have
		\begin{align}\label{estimate Q local}
		|Q(u,r)|=&\left|Vol(\Sigma^{D-2})i \int_{0}^{r}\left(\tilde{h}^*h-\tilde{h}h^*\right)s^{D-3}~\mathrm{d}s\right|\nonumber\\
		\leq&2Vol(\Sigma^{D-2})\int_{0}^{r}|\tilde{h}||h|s^{D-3}~\mathrm{d}s\nonumber\\
		\leq& \frac{8Vol (\Sigma^{D-2})}{(2k-D)^2}\frac{\hat{x}^2 r^{\frac{(2k-D)}{2}}}{(1+u)^{2k-D}(1+r+u)^{\frac{(2k-D)}{2}}}.
		\end{align}
		Hence, it is straightforward to get
		\begin{align}\label{estimate integral Q local}
		\frac{1}{Vol^2(\Sigma^{D-2})}\frac{1}{r^{D-3}}\int_{0}^{r}\left|\frac{gQ^2}{s^{D-2}}\right|\mathrm{d}s\leq\frac{64}{(D-3)(2k-D)^4}\frac{\hat{x}^4}{(1+u)^{4k-D-3}(1+r+u)^{D-3}},
		\end{align}
		and
		\begin{align}\label{estimate A0 local}
		|iA_0|\leq \frac{1}{Vol(\Sigma^{D-2})} \int_{0}^{r}\frac{g|Q|}{s^{D-2}}\mathrm{d}s\leq& \frac{8\hat{x}^2}{(2k-D)^2}\frac{1}{(1+u)^{2k-D}}\int_{0}^{r}\frac{1}{(1+s+u)^{D-2}}\mathrm{d}s\nonumber\\
		\leq&\frac{8\hat{x}^2}{(D-3)(2k-D)^2}\frac{r^{D-3}}{(1+u)^{2k-3}(1+r+u)^{D-3}}.
		\end{align}
		Then, from (\ref{g tilde}) we obtain
		\begin{align}\label{g - gtilde local}
		\left|\frac{\hat{R}(\hat{\sigma})}{(D-2)}g-\frac{(D-2)}{2}\tilde{g}\right|\leq&|g-\bar{g}|	+\frac{(D-4)}{2}\frac{\hat{R}(\hat{\sigma})}{r^{D-3}}\int_{0}^{r}|\bar{g}|s^{D-4}\mathrm{d}s+\frac{8\pi}{r^{D-3}}\int_{0}^{r} gs^{D-2}|V(|\tilde{h}|)|\mathrm{d}s\nonumber\\
		&+	\frac{1}{Vol^2(\Sigma^{D-2})}\frac{1}{r^{D-3}}\int_{0}^{r}\left|\frac{gQ^2}{s^{D-2}}\right|\mathrm{d}s\nonumber\\
		\leq& \frac{C(\hat{x}^2+\hat{x}^4+\hat{x}^{p+1})}{(1+u)^{2k-3}(1+r+u)}.
		\end{align}
		Combining (\ref{potential}), (\ref{htilde}), (\ref{estimate Q local}), and (\ref{g - gtilde local}), we get
		\begin{align}\label{estimate f local}
		|f|\leq \frac{C(\hat{x}^3+\hat{x}^5+\hat{x}^p+\hat{x}^{p+2})}{(1+u)^\frac{(2k-D)k}{2}(1+r+u)^{\frac{D}{2}}}.
		\end{align}	
		
		From the definition $\|h(0,.)\|_{\hat{X}_0}:=\hat{d}$, we obtain
		\begin{align}\label{h local}
		|h(0,r)|\leq \frac{C\hat{d}}{(1+r)^{k-1}}.
		\end{align}
		
		Suppose that $h(u,r)$ is defined in $\mathcal{I}=[0,\delta]$ that contains $[0,u_0]$ such that $\delta\in \mathcal{I}$ but $\delta\notin [0,u_0]$. Therefore, we write down the estimate for exponential term of (\ref{Fcurl}) from $u\in[0,u_0]$ up to $u'=\delta>u_0$ as follows
		\begin{align}\label{delta1}
		&\int_{0}^{\delta} \left[\frac{1}{2r}\left|\frac{\hat{R}(\hat{\sigma})}{(D-2)}g-\frac{(D-2)}{2}\tilde{g}\right|+\frac{8\pi g r|V(|\tilde{h}|)|}{(D-2)}+\frac{|Q^2|g}{(D-2)r^3}+|iA_0|\right]\mathrm{d}u\nonumber\\
		&\leq C(\hat{x}^2+\hat{x}^4+\hat{x}^{p+1})\delta.
		\end{align}
		Combining (\ref{estimate f local}), (\ref{h local}), and (\ref{delta1}), we represent the following estimate
		\begin{align}\label{Fcurl local}
		|\mathcal{F}(\delta,r)|\leq\frac{\hat{C}_1(\hat{d}+\hat{x}^3+\hat{x}^5+\hat{x}^p+\hat{x}^{p+2})\delta\exp\left[\hat{C}_2(\hat{x}^2+\hat{x}^4+\hat{x}^{p+1})\delta\right]}{(1+r)^{k-1}},
		\end{align}
		for any $\hat{C}_1, \hat{C}_2 >0$ depends on $k$ and dimensions $D$.
		
		From the mean value formula we have $\frac{\partial\bar{g}}{\partial r}=\frac{g-\bar{g}}{r}$. Then, taking differentiation of (\ref{g tilde}) with respect to $r$ we get
		\begin{align}\label{dgtilde}
		\left|\frac{\partial \tilde{g}}{\partial r}\right|\leq&\frac{\hat{R}(\hat{\sigma})}{(D-2)}\frac{|g-\bar{g}|}{r}+\frac{\hat{R}(\hat{\sigma})(D-4)(D-3)}{(D-2)r^{D-2}}\int_{0}^{r}|\bar{g}|s^{D-4}\mathrm{d}s+\frac{\hat{R}(\hat{\sigma})(D-4)}{(D-2)}\frac{|\bar{g}|}{r}\nonumber\\
		&+\frac{(D-3)16\pi}{(D-2)r^{D-2}}\int_{0}^{r} \left|g\right|s^{D-2}\left|V(|\tilde{h}|)\right|\mathrm{d}s+\frac{16\pi |g|r\left|V(|\tilde{h}|)\right|}{(D-2)}\nonumber\\
		&+\frac{2(D-3)}{(D-2)Vol^2(\Sigma^{D-2})}\frac{1}{r^{D-2}}\int_{0}^{r}\left|\frac{gQ^2}{s^{D-2}}\right|\mathrm{d}s\nonumber\\
		&+\frac{2}{(D-2)Vol^2(\Sigma^{D-2})}\left|\frac{gQ^2}{r^{2D-5}}\right|.
		\end{align}
		Thus, by (\ref{potential}), (\ref{g2}), (\ref{g-gbar k local}), (\ref{estimate gbar local}), and (\ref{estimate Q local}) we obtain
		\begin{align}
		\left|\frac{1}{2r}\left(\frac{\hat{R}(\hat{\sigma})}{(D-2)}\frac{\partial g}{\partial r}-\frac{(D-2)}{2}\frac{\partial\tilde{g}}{\partial r}\right)\right|\leq\frac{C(\hat{x}^2+\hat{x}^4+\hat{x}^{p+1})}{(1+u)^{2k-D}(1+r+u)^3}.
		\end{align}
		In the view of (\ref{htilde}), we have
		\begin{align}
		\left|\frac{\partial \tilde{h}}{\partial r}\right|= \frac{1}{r}\left|h-\frac{D-2}{2}\tilde{h}\right|\leq\frac{C\hat{x}}{(1+u)^{\frac{(2k-D)}{2}}(1+r+u)^{\frac{D}{2}}}.
		\end{align}
		Furthermore, from the estimate (\ref{estimate Q local}) we calculate
		\begin{align}
		\left|\frac{\partial Q}{\partial r}\right|\leq&\left|Vol(\Sigma^{D-2})i \left(\tilde{h}^*h-\tilde{h}h^*\right)r^{D-3}\right|\nonumber\\
		\leq& 2Vol(\Sigma^{D-2}) \left|\tilde{h}\right|\left|h\right|r^{D-3}\nonumber\\
		\leq&\frac{C\hat{x}^2}{(1+u)^{\frac{(2k-D)}{2}}(1+r+u)^{\frac{2k-D+2}{2}}}.
		\end{align}
		Finally, we obtain
		\begin{align}\label{estimate f1 local}
		|f_1|\leq\frac{C(\hat{x}^2+\hat{x}^4+\hat{x}^6+\hat{x}^{p+1}+\hat{x}^{p+3})}{(1+u)^{2k-D}(1+r+u)^{D-2}}|\mathcal{F}|+\frac{C(\hat{x}^3+\hat{x}^5+\hat{x}^7+\hat{x}^p+\hat{x}^{p+2}+\hat{x}^{p+4})}{(1+u)^{2k-D}(1+r+u)^\frac{(D+2)}{2}}.
		\end{align}
		
		Then, we calculate
		\begin{align}
		\left|\frac{\partial h}{\partial r}(0,r) \right|\leq \frac{C\hat{d}}{(1+r)^k}.
		\end{align}
		
		Using the similar calculation as the estimate (\ref{delta1}), we obtain
		\begin{align}
		&\int_{0}^{\delta}\left[\frac{1}{2}\left|\frac{\partial\tilde{g}}{\partial r}\right|+\frac{1}{2r}\left|\left(\frac{\hat{R}(\hat{\sigma})}{(D-2)}g-\frac{(D-2)}{2}\tilde{g}\right) \right|+\frac{8\pi g r|V(|\tilde{h}|)|}{(D-2)}+\frac{g|Q^2|}{(D-2)r^3}+|iA_0|\right]\mathrm{d}u\nonumber\\
		&\leq C\left(\hat{x}^2 + \hat{x}^{4} + \hat{x}^{p+1} \right)\delta.
		\end{align}
		Therefore, we write the estimate for (\ref{Gcurl}) as follows
		\begin{align}\label{Gcurl local}
		\left|\mathcal{G}(\delta,r)\right|
		\leq&\frac{\hat{C}_3(\hat{d}+\hat{x}^3+\hat{x}^{5}+\hat{x}^7+\hat{x}^p+\hat{x}^{p+2}+\hat{x}^{p+4})}{(1+r)^k}\nonumber\\
		&\times(1+\hat{x}^2+\hat{x}^{4}+\hat{x}^6+\hat{x}^{p+1}+\hat{x}^{p+3})\delta \exp\left[\hat{C}_4(\hat{x}^2+\hat{x}^{4}+\hat{x}^{p+1})\delta\right],
		\end{align}
		for any $\hat{C}_3, \hat{C}_4 >0$ depends on $k$ and dimensions $D$.
		
		In the view of norm for the space $\hat{X}$, we have
		\begin{align}
		\|\mathcal{F}\|_{\hat{X}}\leq&\hat{C}_5(\hat{d}+\hat{x}^3+\hat{x}^5+\hat{x}^{7}+\hat{x}^p+\hat{x}^{p+2}+\hat{x}^{p+4})\nonumber\\
		&\times(1+\hat{x}^2+\hat{x}^{4}+\hat{x}^6+\hat{x}^{p+1}+\hat{x}^{p+3})\delta\exp\left[\hat{C}_6(\hat{x}^2+\hat{x}^{4}+\hat{x}^{p+1})\delta\right],
		\end{align}
		for any $\hat{C}_5, \hat{C}_6 >0$ depends on $k$ and dimensions $D$.
		
		Let us introduce a function
		\begin{align}
		\mathcal{L}_1(\hat{x}):=&\frac{\hat{x}\exp\left[-\hat{C}_6(\hat{x}^2+\hat{x}^{4}+\hat{x}^{p+1})\delta\right]}{\hat{C}_5(1+\hat{x}^2+\hat{x}^{4}+\hat{x}^6+\hat{x}^{p+1}+\hat{x}^{p+3})\delta}\nonumber\\
		&-\left(\hat{x}^3+\hat{x}^5+\hat{x}^{7}+\hat{x}^p+\hat{x}^{p+2}+\hat{x}^{p+4}\right).
		\end{align}
		We have $\mathcal{L}_1(0)=0$, $\mathcal{L}_1'(0)>0$, and $\mathcal{L}_1(\hat{x})\rightarrow-\infty$ as $\hat{x}\rightarrow\infty$. There exists $\hat{x}_0\in(0,\hat{x}_1)$ such that $\mathcal{L}_1(\hat{x})$ is monotonically increasing on $[0,\hat{x}_0]$. Thus, for every $\hat{x}\in(0,\hat{x}_0)$, we conclude that $\|\mathcal{F}\|_{\hat{X}}\leq \hat{x}$ such that the map $\mathcal{F}$ is contained in the closed ball of radius $\hat{x}$ in the space $\hat{X}$ if $\hat{d}<\mathcal{L}_1(\hat{x})$. The proof is finished.
	\end{proof}
	
	\begin{lemma}\label{Lemma local 2}
		Let $D\geq 4$. Setting $p\in[k,\infty)$ and $k>\frac{D}{2}$, where $p,k\in\mathbb{R}^+$. There exists $\delta:=\delta(\hat{x})>0$, such that if $u_0<\delta$, the map  $\mathcal{F}$ contracts in $\hat{Y}$.
	\end{lemma}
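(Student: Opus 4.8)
The plan is to estimate $\Theta := \mathcal{F}(h_1)-\mathcal{F}(h_2)$ through its integral representation (\ref{Theta}) and to show that $\|\Theta\|_{\hat{Y}}$ is a fixed fraction of $\hat{y}=\|h_1-h_2\|_{\hat{Y}}$ once $u_0$ is small. Throughout I use the a priori bounds from Lemma \ref{Lemma local 1}: since $h_1,h_2$ lie in the closed ball of radius $\hat{x}$ in $\hat{X}$, so do $\mathcal{F}_l=\mathcal{F}(h_l)$, and hence their $r$-derivatives $\mathcal{G}_l$ inherit the decay $|\mathcal{G}_l|\leq \hat{x}(1+r+u)^{-k}$ through the $\hat{X}$-norm; in particular every ``unprimed'' quantity $\tilde{h}_l,g_l,\tilde{g}_l,Q_l,A_{0,l},\mathcal{F}_l,\mathcal{G}_l$ obeys the decay estimates already recorded in (\ref{estimasi htilde local})--(\ref{estimate A0 local}) with $\hat{x}$. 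The observation that makes the weaker norm $\hat{Y}$ (which sees only the function value, not $\partial_r h$) sufficient for contraction is that every term of $\tilde{\varphi}$ in (\ref{phi tilde}) is linear in one of the differences $\tilde{h}_1-\tilde{h}_2$, $g_1-g_2$, $\tilde{g}_1-\tilde{g}_2$, $Q_1-Q_2$, $V(|\tilde{h}_1|)-V(|\tilde{h}_2|)$, $\partial_{\tilde{h}^*}V(|\tilde{h}_1|)-\partial_{\tilde{h}^*}V(|\tilde{h}_2|)$, or $A_{0,1}-A_{0,2}$, none of which involves $\partial_r(h_1-h_2)$.

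First I would bound each difference by $\hat{y}$ times a power of $\hat{x}$ times the appropriate decay weight. For $\tilde{h}_1-\tilde{h}_2$ I use (\ref{htilde}) to write $\tilde{h}_1-\tilde{h}_2 = r^{-(D-2)/2}\int_{0}^{r}(h_1-h_2)s^{(D-4)/2}\,\mathrm{d}s$ and repeat the computation (\ref{estimasi htilde local}) with $\hat{x}$ replaced by $\hat{y}$. For $g_1-g_2$ I apply $|e^a-e^b|\leq \max(e^a,e^b)|a-b|$ together with $g_l\leq 1$ and factor the quadratic difference $(h_1-\tfrac{D-2}{2}\tilde{h}_1)^2-(h_2-\tfrac{D-2}{2}\tilde{h}_2)^2$ into a difference (controlled by $\hat{y}$) times a sum (controlled by $\hat{x}$), giving $|g_1-g_2|\leq C\hat{x}\hat{y}\times(\text{decay})$. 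For $Q_1-Q_2$ I use (\ref{Qh}) with the telescoping $\tilde{h}_1^* h_1-\tilde{h}_2^* h_2 = \tilde{h}_1^*(h_1-h_2)+(\tilde{h}_1^*-\tilde{h}_2^*)h_2$, and similarly for $Q_1^2-Q_2^2=(Q_1-Q_2)(Q_1+Q_2)$ and for $A_{0,1}-A_{0,2}$ from (\ref{A0}) via $Q_1g_1-Q_2g_2=Q_1(g_1-g_2)+(Q_1-Q_2)g_2$. The potential differences are handled by the mean value theorem combined with the growth bound (\ref{potential}), producing a factor $|\tilde{h}_1-\tilde{h}_2|$ times $\max(|\tilde{h}_1|,|\tilde{h}_2|)^{p}$, hence again $\lesssim \hat{y}\,\hat{x}^{p}\times(\text{decay})$.

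Assembling these into (\ref{phi tilde}) produces an estimate of the form $|\tilde{\varphi}|\leq C\,P(\hat{x})\,\hat{y}\,(1+u)^{-\alpha}(1+r+u)^{-\beta}$, where $P$ is a polynomial vanishing at the origin and $\beta$ is large enough that, after conversion along the characteristics $\chi_1$, the decay $(1+r+u)^{-(k-1)}$ demanded by the $\hat{Y}$-norm is recovered. The exponential weight in (\ref{Theta}) is bounded exactly as in (\ref{delta1}) by $\exp[C(\hat{x}^2+\hat{x}^4+\hat{x}^{p+1})\delta]$. Integrating over $u\in[0,u_1]\subset[0,\delta]$ then yields
\begin{align*}
\|\Theta\|_{\hat{Y}} \leq C\,P(\hat{x})\,\delta\,\exp\!\big[C(\hat{x}^2+\hat{x}^4+\hat{x}^{p+1})\delta\big]\,\hat{y} =: \theta(\hat{x},\delta)\,\hat{y}.
\end{align*}
Choosing $\delta=\delta(\hat{x})$ small enough that $\theta(\hat{x},\delta)<1$ gives $\|\mathcal{F}(h_1)-\mathcal{F}(h_2)\|_{\hat{Y}}\leq \theta\,\|h_1-h_2\|_{\hat{Y}}$, so $\mathcal{F}$ contracts in $\hat{Y}$.

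I expect the main obstacle to be the estimate for $\tilde{g}_1-\tilde{g}_2$. Because $\tilde{g}$ in (\ref{g tilde}) is built from four separate integrals---one of which (the $s^{D-4}$-weighted term) appears only for $D>4$, as flagged in Challenge~2 of the introduction---its difference requires inserting $g_1-g_2$, $V(|\tilde{h}_1|)-V(|\tilde{h}_2|)$, and $Q_1^2-Q_2^2$ under integrals with weights $s^{D-4},s^{D-2},s^{-(D-2)}$ while simultaneously tracking the decay in both $u$ and $r$ so that the combined weight dominates $(1+r+u)^{-(k-1)}$. Getting this term right---together with the $\tfrac{1}{2}(\tilde{g}_1-\tilde{g}_2)\mathcal{G}_2$ contribution, which multiplies the difference by the a priori bound on $\mathcal{G}_2$---is where the bookkeeping is heaviest; the remaining terms of $\tilde{\varphi}$ then follow the same template already used in Lemma \ref{Lemma local 1}.
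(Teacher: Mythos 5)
Your proposal follows essentially the same route as the paper: bound each difference ($\tilde{h}_1-\tilde{h}_2$, $g_1-g_2$, $\tilde{g}_1-\tilde{g}_2$, $Q_1-Q_2$, $A_{0,1}-A_{0,2}$, the potential differences) by $\hat{y}$ times powers of $\hat{x}$ with the appropriate decay weights, feed these into $\tilde{\varphi}$ together with the Lemma~\ref{Lemma local 1} bounds on $\mathcal{F}_2$ and $\mathcal{G}_2$, control the exponential factor as in (\ref{delta1}), and integrate over $[0,\delta]$ to obtain a contraction constant of the form $C\,P(\hat{x})\,\delta\exp[C(\hat{x}^2+\hat{x}^4+\hat{x}^{p+1})\delta]$ that is made less than one. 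You even correctly flag the $\tilde{g}_1-\tilde{g}_2$ term (with its extra $D>4$ integral) as the heaviest piece of bookkeeping, which matches where the paper spends its effort.
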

	\begin{proof}
		Suppose that equation (\ref{Dh higher}) has two different solutions, namely $h_1, h_2 \in \hat{X}$, that are defined in $\mathcal{I}=[0,\delta]$ and contains $[0,u_0]$ such that $\delta\in \mathcal{I}_2$ but $\delta\notin [0,u_0]$. We assume
		\begin{align}
		\max\{\|h_1\|_{\hat{X}},\|h_2\|_{\hat{X}}\}<\hat{x}.
		\end{align}
		
		We have
		\begin{align}\label{h1-h2 local}
		|\tilde{h}_1 - \tilde{h}_2|\leq\frac{2\hat{y}}{(2k-D)}\frac{1}{(1+u)^{\frac{(2k-D)}{2}}(1+r+u)^{\frac{(D-2)}{2}}}.
		\end{align}
		Using the above estimate we obtain
		\begin{align}
		|h_1-h_2-(\tilde{h}_1-\tilde{h}_2)|\leq |h_1-h_2|+|\tilde{h}_1-\tilde{h}_2|
		\leq\frac{C\hat{y}}{(1+u)^{\frac{(2k-D)}{2}}(1+r+u)^{\frac{(D-2)}{2}}},
		\end{align}
		and
		\begin{align}\label{h-htilde local}
		\left||h_1-\tilde{h}_1|^2-|h_2-\tilde{h}_2|^2\right|\leq& \left|(h_1-h_2)-(\tilde{h}_1-\tilde{h}_2) \right|\left(|h_1-\tilde{h}_1|+|h_2-\tilde{h}_2|\right)\nonumber\\
		\leq& \frac{C\hat{x}\hat{y}}{(1+u)^{2k-D}(1+r+u)^{D-2}}.
		\end{align}
		Then, we use the relation
		\begin{align}
		\tilde{h}_1^{p+1} - \tilde{h}_2^{p+1}=(\tilde{h}_1-\tilde{h}_2)\int_{0}^{1}\left(t\tilde{h}_1+(1-t)\tilde{h}_2\right)\left|t\tilde{h}_1+(1-t)\tilde{h}_2\right|^{p-1}\mathrm{d}t,
		\end{align}
		such that we have
		\begin{align}\label{V1-V2 local}
		\left|V(|\tilde{h}_1|)-V(|\tilde{h}_2|)\right|\leq&\left|\tilde{h}_1^{p+1}-\tilde{h}_2^{p+1}\right|\leq |\tilde{h}_1-\tilde{h}_2|\left(|\tilde{h}_1|+|\tilde{h}_2|\right)^p\nonumber\\
		\leq&\frac{2^{2p+1}\hat{x}^p\hat{y}}{(2k-D)^{k+1}(1+u)^{\frac{(2k-D)(k+1)}{2}}(1+r+u)^{\frac{(D-2)(k+1)}{2}}}.
		\end{align}
		Furthermore, we calculate
		\begin{align}
		\left|\frac{\partial V(|\tilde{h}_1|)}{\partial \tilde{h}_1^*}-\frac{\partial V(|\tilde{h}_2|)}{\partial \tilde{h}_2^*}\right|
		\leq& K_0|\tilde{h}_1-\tilde{h}_2|(|\tilde{h}_1|^{p-1}+|\tilde{h}_2|^{p-1})\nonumber\\
		\leq&\frac{C \hat{x}^{p-1}\hat{y}}{(1+u)^{\frac{(2k-D)k}{2}}(1+r+u)^{\frac{(D-2)k}{2}}}.
		\end{align}
		
		Next we calculate
		\begin{align}\label{Q1-Q2 local}
		|Q_1-Q_2|\leq&Vol(\Sigma^{D-2}) \int_{0}^{r}|\tilde{h}_1^*h_1 - \tilde{h}_1h_1^*-\tilde{h}_2^*h_2+\tilde{h}_2h^*_2|s^{D-3}~\mathrm{d}s\nonumber\\
		\leq&2Vol(\Sigma^{D-2}) \int_{0}^{r}(|\tilde{h}_1-\tilde{h}_2||h_1|+|h_1-h_2||\tilde{h}_2|)s^{D-3}~\mathrm{d}s\nonumber\\
		\leq&\frac{C\hat{x}\hat{y}r^{\frac{(2k-D)}{2}}}{(1+u)^{2k-D}(1+r+u)^{\frac{(2k-D)}{2}}}.
		\end{align}
		Using (\ref{estimate Q local}) and (\ref{Q1-Q2 local}) we represent
		\begin{align}
		\left|Q_1^2-Q_2^2\right|=|Q_1-Q_2|(|Q_1|+|Q_2|)
		\leq\frac{C\hat{x}^3\hat{y}r^{2k-D}}{(1+u)^{2(2k-D)}(1+r+u)^{2k-D}}.
		\end{align}
		We also have
		\begin{align}
		|{A_0}_1-{A_0}_2|\leq&\int_{0}^{r}\frac{|Q_1-Q_2|g_1+|g_1-g_2||Q_2|}{s^{D-2}}\mathrm{d}s\nonumber\\
		\leq& \frac{C(\hat{x}+\hat{x}^3)\hat{y}r^{D-3}} {(1+u)^{2k-3}(1+r+u)^{D-3}}.
		\end{align}
		
		We use the estimate (\ref{h-htilde local}) to show
		\begin{align}\label{g1-g2 local}
		|g_1-g_2|\leq\frac{8\pi}{(D-2)} \int_{r}^{\infty}\frac{1}{s}\left||h_1-\tilde{h}_1|^2-|h_2-\tilde{h}_2|^2\right|\mathrm{d}s\leq \frac{C\hat{x}\hat{y}}{(1+u)^{2k-D}(1+r+u)^{D-2}}.
		\end{align}
		Then, using the mean value formula we get
		\begin{align}
		|\bar{g}_1-\bar{g}_2|\leq\frac{1}{r}\int_{0}^{r}|g_1-g_2|\mathrm{d}s\leq\frac{C\hat{x}\hat{y}}{(1+u)^{2k-3}(1+r+u)}.
		\end{align}
		Therefore we obtain
		\begin{align}
		\left|\tilde{g}_1-\tilde{g}_2\right|\leq& \frac{\hat{R}(\hat{\sigma})}{(D-2)}\left|\bar{g}_1-\bar{g}_2\right|+\frac{(D-4)}{(D-2)}\frac{\hat{R}(\hat{\sigma})}{r^{D-3}}\int_{0}^{r}\left|\bar{g}_1-\bar{g}_2\right|s^{D-4}\mathrm{d}s\nonumber\\
		&+ \frac{16\pi}{(D-2)}\frac{1}{r^{D-3}}\int_{0}^{r} g_1s^{D-2}\left|V(|\tilde{h}_1|)-V(|\tilde{h}_2|)\right|\mathrm{d}s\nonumber\\
		&+\frac{16\pi}{(D-2)}\frac{1}{r^{D-3}}\int_{0}^{r}s^{D-2}|V(|\tilde{h}_1|)||g_1-g_2|\mathrm{d}s\nonumber\\
		&+\frac{2}{(D-2)Vol^2(\Sigma^{D-2})}\frac{1}{r^{D-3}}\int_{0}^{r}\frac{|Q_1^2-Q_2^2|}{s^{D-2}}g\mathrm{d}s\nonumber\\
		&+\frac{2}{(D-2)Vol^2(\Sigma^{D-2})}\frac{1}{r^{D-3}}\int_{0}^{r}\frac{|g_1-g_2|}{s^{D-2}}|Q_1|^2\mathrm{d}s\nonumber\\
		\leq&\frac{C(\hat{x}+\hat{x}^3+\hat{x}^5+\hat{x}^p+\hat{x}^{p+2})\hat{y}}{(1+u)^{2k-3}(1+r+u)}.
		\end{align}
		
		By (\ref{h1-h2 local}), (\ref{h-htilde local}), (\ref{V1-V2 local}), (\ref{Q1-Q2 local}) together with (\ref{Fcurl local}) and (\ref{Gcurl local}) we obtain
		\begin{equation}\label{varphi tilde local}
		|\tilde{\varphi}|\leq\frac{C(\hat{\alpha}(\hat{x})+\hat{\beta}(\hat{x})+\hat{\gamma}(\hat{x})+\hat{\sigma}(\hat{x})+\hat{\eta}(\hat{x})+\hat{\rho}(\hat{x})+\hat{\omega}(\hat{x})+\hat{\lambda}(\hat{x}))\hat{y}}{(1+u)^{\frac{(2k-D)k}{2}}(1+r+u)^{\frac{D}{2}}}\;,
		\end{equation}
		where
		\begin{align}
		\hat{\alpha}(\hat{x}):=&C(\hat{x}+\hat{x}^3+\hat{x}^5+\hat{x}^p+\hat{x}^{p+2})(\hat{d}+\hat{x}^3+\hat{x}^5+\hat{x}^7+\hat{x}^p+\hat{x}^{p+2}+\hat{x}^{p+4})\nonumber\\
		&\times(1+\hat{x}^2+\hat{x}^4+\hat{x}^6+\hat{x}^{p+1}+\hat{x}^{p+3})\delta\exp\left[C(\hat{x}^2+\hat{x}^4+\hat{x}^{p+1})\delta\right],\\
		\hat{\beta}(\hat{x}):=&C(\hat{x}+\hat{x}^3+\hat{x}^5+\hat{x}^p+\hat{x}^{p+2})(\hat{d}+\hat{x}^3+\hat{x}^5+\hat{x}^p+\hat{x}^{p+2})\delta\nonumber\\
		&\times\exp\left[C(\hat{x}^2+\hat{x}^4+\hat{x}^{p+1})\right],\\
		\hat{\gamma}(\hat{x}):=&C \hat{x}^{p+2}(\hat{d}+\hat{x}^3+\hat{x}^5+\hat{x}^p+\hat{x}^{p+2})\delta\exp\left[C(\hat{x}^2+\hat{x}^4+\hat{x}^{p+1})\delta\right],\\
		\hat{\sigma}(\hat{x}):=&C\hat{x}^{p}(\hat{d}+\hat{x}^3+\hat{x}^5+\hat{x}^p+\hat{x}^{p+2})\delta\exp\left[C(\hat{x}^2+\hat{x}^4+\hat{x}^{p+1})\delta\right],\\
		\hat{\eta}(\hat{x}):=&C\hat{x}^3(\hat{d}+\hat{x}^3+\hat{x}^5+\hat{x}^p+\hat{x}^{p+2})\delta\exp\left[C(\hat{x}^2+\hat{x}^4+\hat{x}^{p+1})\delta\right],\\
		\hat{\rho}(\hat{x}):=&C\hat{x}^5(\hat{d}+\hat{x}^3+\hat{x}^5+\hat{x}^p+\hat{x}^{p+2})\delta\exp\left[C(\hat{x}^2+\hat{x}^4+\hat{x}^{p+1})\delta\right],\\
		\hat{\omega}(\hat{x}):=&C(\hat{x}+\hat{x}^3)(\hat{d}+\hat{x}^3+\hat{x}^5+\hat{x}^p+\hat{x}^{p+2})\delta\exp\left[C(\hat{x}^2+\hat{x}^4+\hat{x}^{p+1})\delta\right],\\
		\hat{\lambda}(\hat{x}):=&C(\hat{x}^2+\hat{x}^4+\hat{x}^6+\hat{x}^{p-1}+\hat{x}^{p+1}+\hat{x}^{p+3}).
		\end{align}
		
		We write the estimate for the exponential term of (\ref{Theta}) from $u\in[0,u_0]$ up to $u'=\delta>u_0$ such that
		\begin{align}
		&\int_{0}^{\delta} \left[\frac{1}{2r}\left|\frac{\hat{R}(\hat{\sigma})}{(D-2)}g_1-\frac{(D-2)}{2}\tilde{g}_1\right|+\frac{8\pi r g_2|V(|\tilde{h}_1|)|}{(D-2)}+\left|\frac{g_1 Q_1^2}{(D-2)r^3}\right|+\left|i{A_0}_1\right|\right]\mathrm{d}u'\nonumber\\
		&\leq C(\hat{x}^2+\hat{x}^4+\hat{x}^{p+1})\delta.
		\end{align}
		Therefore, the estimate for (\ref{Theta}) yields
		\begin{align}
		\left|\Theta(\delta,r)\right|\leq \hat{C}_7\frac{\hat{y} \hat{M}(\hat{x})\delta\exp[\hat{C}_8(\hat{x}^2+\hat{x}^4+\hat{x}^{p+1})\delta]}{(1+r)^{k-1}},
		\end{align}
		for any $\hat{C}_7, \hat{C}_8 >0$ depends on $k$ and dimensions $D$, and we denote
		\begin{align}
		\hat{M}(\hat{x}) := \hat{\alpha}(\hat{x})+\hat{\beta}(\hat{x})+\hat{\gamma}(\hat{x})+\hat{\sigma}(\hat{x})+\hat{\eta}(\hat{x})+\hat{\rho}(\hat{x})+\hat{\omega}(\hat{x})+\hat{\lambda}(\hat{x}).
		\end{align}
		
		Finally, we represent
		\begin{equation}
		\|\Theta\|_{\hat{Y}} \leq\mathcal{L}_2\hat{y}
		\end{equation}
		with
		\begin{equation}
		\mathcal{L}_2:=\hat{C}_7M(\hat{x})\delta\exp[\hat{C}_8(\hat{x}^2+\hat{x}^4+\hat{x}^{p+1})\delta].\:
		\end{equation}
		We have $\mathcal{L}_2(0)=0$ and $\mathcal{L}_2'(0)>0$. 
		Furthermore, $\mathcal{L}_2$ is monotonically increasing on $\hat{x}_2\in \mathbb{R}^+$. There exists $\hat{x}_3\in \mathbb{R}^+$ such that $\mathcal{L}_2(x)<1$ for all $\hat{x}$ in $(0,\hat{x}_3]$. Then, we conclude that the mapping $h\mapsto\mathcal{F}(h)$ contracts in $\hat{Y}$ for $\|h\|_{\hat{X}}\leq \hat{x}_3$. This is the end of the proof.
	\end{proof}
	
	\begin{proof}[Proof of Theorem \ref{Theorem Local}]
		Since $\hat{Y}$ containing $\hat{X}$, and $h\mapsto\mathcal{F}(h)$ contracts in $\hat{Y}$, then $h\mapsto\mathcal{F}(h)$ also contracts in $\hat{X}$. Therefore, there exists a unique fixed point $h\in \hat{X}$ such that $\mathcal{F}(h)=h$. In Lemma \ref{Lemma local 1} and Lemma \ref{Lemma local 2} we have proven that $h$ and $\frac{\partial h}{\partial r}$ are bounded in $[0,u_0]\times [0,\infty)$. Now, we shall prove that $\left|\frac{\partial h}{\partial u}\right|$ also bounded in $[0,u_0]\times [0,\infty)$.
		
		From (\ref{Dh higher}) we obtain
		\begin{align}\label{estimate dhdu local}
		\left|\frac{\partial h}{\partial u}\right|\leq& \frac{\tilde{g}}{2}\left|\frac{\partial h}{\partial r}\right| + \frac{1}{2r}\left|\frac{\hat{R}(\hat{\sigma})}{(D-2)}g-\frac{(D-2)}{2}\tilde{g}\right|\left|h\right|+\frac{(D-2)}{4r}\left|\frac{\hat{R}(\hat{\sigma})}{(D-2)}g-\frac{(D-2)}{2}\tilde{g}\right|\left|\tilde{h}\right|\nonumber\\
		&+\frac{8\pi g r}{(D-2)}|V(|\tilde{h}|)||h|+4\pi g r|V(|\tilde{h}|)||\tilde{h}|+\left|\frac{gQ^2}{(D-2)r^3}\right||h|+\left|\frac{gQ^2}{2r^3}\right||\tilde{h}|\nonumber\\
		&+\left|\frac{gQ}{2r}\right||\tilde{h}|+|A_0||h|+\frac{gr}{2}\left|\frac{\partial V(|\tilde{h}|)}{\partial \tilde{h}^*}\right|\nonumber\\
		\leq&\frac{C(\hat{\xi}(\hat{x})+\hat{x}^3+\hat{x}^5+\hat{x}^p+\hat{x}^{p+2})}{(1+u)^{\frac{(2k-D)k}{2}}(1+r+u)^{\frac{(D-2)k}{2}-1}}.
		\end{align}
		where
		\begin{align}
		\hat{\xi}(\hat{x}):=&C(\hat{d}+\hat{x}^3+\hat{x}^5+\hat{x}^7+\hat{x}^p+\hat{x}^{p+2}+\hat{x}^{p+4})\nonumber\\
		&\times(1+\hat{x}^2+\hat{x}^4+\hat{x}^6+\hat{x}^{p+1}+\hat{x}^{p+3})\delta\exp\left[C(\hat{x}^2+\hat{x}^4+\hat{x}^{p+1})\delta\right].
		\end{align}
		Estimate (\ref{estimate dhdu local}) ensures that $\left|\frac{\partial h}{\partial u}\right|$ also bounded in $[0,u_0]\times [0,\infty)$. Finally, the proof is finished.  
	\end{proof}
	
	\section{Global existence}\label{Sec4}
	This section is devoted to prove the existence of global classical solution to (\ref{Dh higher}). Global existence describes that the solution of the system (\ref{Dh higher}) exists for $u_0\rightarrow\infty$.
	
	For $D\geq4$, let $k>\frac{D}{2}$. Given the space of function
	\begin{align}
	X :=& \{h\in C^1([0,\infty) \times [0,\infty) )\; | \;\|h\|_X < \infty\}\:, \label{space X}\\
	X_0 :=& \{h\in C^1([0,\infty) )\; | \;\|h\|_{X_0} < \infty\}\:,
	\end{align}
	equipped with the norm
	\begin{align}
	\|h\|_X :=&\sup_{u\geq 0} \sup_{r\geq 0}\left\{ (1+r+u)^{k-1}|h(u,r)| + (1+r+u)^{k}\left|\frac{\partial h}{\partial r}(u,r)\right|  \right\}\:, \label{hx}\\
	\|h\|_{X_0} :=& \sup_{r\geq 0}\left\{ (1+r)^{k-1}|h(r)| + (1+r)^{k}\left|\frac{\partial h}{\partial r}(r)\right|  \right\}.\label{hx0}
	\end{align}
	We also define the space of function
	\begin{align}
	Y := \{h\in C^1([0,\infty) \times [0,\infty) )\; | \; h(0,r) = h_0(r), \|h\|_Y < \infty\}\label{space Y}\:,
	\end{align}
	in which $Y$ containing $X$, equipped by the norm
	\begin{align}
	\|h\|_Y := \sup_{u\geq 0} \sup_{r\geq 0}\left\{ (1+r+u)^{k-1}|h(u,r)|\right\}\:\label{norm Y}.
	\end{align}
	Here, the space function $X,$ $X_0,$, and $Y$ are the Banach spaces. To simplify, let us denote $\|h\|_X:=x$, $\|h_0\|_{X_0}:=d$, and $\|h_1-h_2\|_Y:=y$.
	
	We state the existence of global solutions of (\ref{Dh higher}) as follows.
	\begin{theorem} \label{Theorem Global}
		Let $D\geq 4$. Given a positive constant $(D-2)$-spatial Ricci scalar of compact manifold $\hat{R}(\hat{\sigma})$.
		Suppose that $X$ and $Y$ are the function spaces defined by (\ref{space X}) and (\ref{space Y}) respectively. For a given initial data $h(0,r)\in C^1[0,\infty)$ such that $h(0,r)=O(r^{-{(k-1)}})$ and $\frac{\partial h}{\partial r}(0,r)=O(r^{-k})$ as $r\rightarrow \infty$, there exists a global classical solution of equation (\ref{Dh higher}) satisfies
		\begin{align}
		h(u,r)\in C^1 ([0,\infty)\times [0,\infty)),
		\end{align}
		for $p\in[k,\infty)$ and $k>\frac{D}{2}$, where $p,k\in\mathbb{R}^+$.
	\end{theorem}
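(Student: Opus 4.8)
The plan is to re-run the contraction-mapping scheme of Section \ref{Sec3}, but now directly on the full strip $[0,\infty)\times[0,\infty)$ with the weighted norms $\|\cdot\|_X$ and $\|\cdot\|_Y$ of (\ref{hx}) and (\ref{norm Y}). I would show that $h\mapsto\mathcal{F}(h)$, defined by the integral equation (\ref{Fcurl}), maps the closed ball of radius $x$ in $X$ into itself and is a contraction in $Y$; since $Y$ contains $X$ and a contraction in $Y$ forces a contraction in $X$ (exactly as in the proof of Theorem \ref{Theorem Local}), the Banach fixed point theorem then produces a unique global fixed point $h\in X$ with $\mathcal{F}(h)=h$. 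The structural novelty compared with the local argument is that smallness can no longer be extracted from a short time $\delta$: every factor $\delta$ occurring in Lemmas \ref{Lemma local 1} and \ref{Lemma local 2} must be replaced by the finite value of a convergent improper $u$-integral, and the required smallness is instead supplied by the initial datum $d:=\|h_0\|_{X_0}$ being small.

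First I would keep verbatim the pointwise estimates for $\tilde h$, $g$, $\bar g$, $\tilde g$, $\partial_r\tilde g$, $Q$, $\partial_r Q$, $A_0$, and hence for $f$ and $f_1$ in (\ref{estimate f local}) and (\ref{estimate f1 local}), obtained in Lemma \ref{Lemma local 1}; these follow from the defining relations (\ref{htilde}), (\ref{g2}), (\ref{g tilde}), (\ref{Qh}), (\ref{A0}) and the potential bound (\ref{potential}), and are valid on all of $[0,\infty)$. The new ingredient is the control of the $u$-integrals in (\ref{Fcurl}) and (\ref{Gcurl}). As in (\ref{delta1}), the integrand in the exponent is bounded by $C(x^2+x^4+x^{p+1})(1+u)^{-(2k-3)}$; because $k>\tfrac{D}{2}\geq 2$ for $D\geq 4$ we have $2k-3>1$, so $\int_0^\infty (1+u)^{-(2k-3)}\,\mathrm{d}u<\infty$ and the exponential factors in (\ref{Fcurl})--(\ref{Gcurl}) are bounded by $\exp[C(x^2+x^4+x^{p+1})]$ uniformly in $u_1$. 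Along the characteristic (\ref{kondisi awal r}), where $\tilde g$ stays comparable to a positive constant for small data, one has $r_1\leq r_0$ with $r_0-r_1$ comparable to $u_1$, so that $1+r_0$ is comparable to $1+r_1+u_1$; this equivalence is what lets the datum term retain the weight $(1+r+u)^{-(k-1)}$ and lets the Duhamel integrals of $[f]_\chi$ and $[f_1]_\chi$ reproduce the weights $(1+r+u)^{-(k-1)}$ and $(1+r+u)^{-k}$, respectively. Assembling these, I would obtain $\|\mathcal{F}\|_X$ bounded by an expression of the same shape as in Lemma \ref{Lemma local 1}, define the corresponding function $\mathcal{L}_1(x)$ (with each $\delta$ now a convergent integral), and check $\mathcal{L}_1(0)=0$, $\mathcal{L}_1'(0)>0$, $\mathcal{L}_1(x)\to-\infty$, so that the self-mapping $\|\mathcal{F}\|_X\leq x$ holds whenever $d<\mathcal{L}_1(x)$ with $x$ small.

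For the contraction I would reuse the difference estimates of Lemma \ref{Lemma local 2} for $\tilde h_1-\tilde h_2$, $g_1-g_2$, $\bar g_1-\bar g_2$, $\tilde g_1-\tilde g_2$, $V(|\tilde h_1|)-V(|\tilde h_2|)$, $Q_1-Q_2$, $Q_1^2-Q_2^2$ and ${A_0}_1-{A_0}_2$, bound $|\tilde\varphi|$ from (\ref{phi tilde}) as in (\ref{varphi tilde local}), and integrate (\ref{Theta}) in $u$ using the same convergence $2k-3>1$. This gives $\|\Theta\|_Y\leq\mathcal{L}_2\,y$ with $\mathcal{L}_2$ of the form $C\,\hat M(x)\exp[C(x^2+x^4+x^{p+1})]$ (the $\delta$ again replaced by a convergent integral); since $\mathcal{L}_2(0)=0$ and $\mathcal{L}_2$ is increasing, $\mathcal{L}_2<1$ for $x$ small, yielding the contraction and hence the fixed point. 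Finally, to promote $h$ to a classical solution I would bound $|\partial_u h|$ directly from the evolution equation (\ref{Dh higher}) exactly as in the estimate (\ref{estimate dhdu local}), concluding that $h\in C^1([0,\infty)\times[0,\infty))$.

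The step I expect to be the main obstacle is showing that all the improper $u$-integrals converge with constants independent of $u_1$, together with the sharp matching of decay rates along the characteristics. In particular the genuinely higher-dimensional term $-\tfrac{(D-4)}{(D-2)}\tfrac{\hat R(\hat\sigma)}{r^{D-3}}\int_0^r\bar g\,s^{D-4}\,\mathrm{d}s$ in (\ref{g tilde}), which is absent when $D=4$, must be shown to decay no slower than the other terms so that it does not destroy the convergence; verifying this, and confirming that the Duhamel integrals recover the sharp weights $(1+r+u)^{-(k-1)}$ and $(1+r+u)^{-k}$ rather than weaker ones, is exactly where the hypotheses $k>\tfrac{D}{2}$ and $p\in[k,\infty)$ are used decisively.
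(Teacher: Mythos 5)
Your proposal follows essentially the same route as the paper: Section \ref{Sec4} proves Theorem \ref{Theorem Global} by exactly this scheme, namely re-deriving the pointwise estimates globally in $u$ (Lemma \ref{Lemma Decay}, where the convergent improper $u$-integrals are handled by the integral formula (\ref{integral}) and the characteristic comparison $1+r(u)+u\geq\tfrac{\kappa}{2}(1+r_1+u_1)$ plays the role of your equivalence $1+r_0\sim 1+r_1+u_1$), then establishing the self-map of $B(0,x)$ and the contraction in $Y$ with smallness supplied by $d<\tilde{\Lambda}_1(x)$ (Lemma \ref{Lemma Contraction}), and finally bounding $\partial_u h$ from (\ref{Dh higher}). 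The obstacles you flag — convergence of the $u$-integrals uniformly in $u_1$, the extra $D>4$ term in (\ref{g tilde}), and recovering the sharp weights — are precisely the points the paper addresses via the estimates (\ref{estimate exp1}), (\ref{estimate exp2}) and the $I_1$--$I_6$ computations.
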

	
	\subsection{Decay estimates}
	First, we study the decay estimates of the solutions by the following Lemma.
	\begin{lemma} \label{Lemma Decay}
		Let $\|h\|_X:=x$ and $\|h(0,.)\|_{X_0}:=d$. For $D\geq4$, setting $p\in[k,\infty)$ and $k>\frac{D}{2}$, where $p,k\in\mathbb{R}^+$. We introduce a positive constant $(D-2)$-spatial Ricci scalar of compact manifold $\hat{R}(\hat{\sigma})$. Given the initial data $h(0,r)\in C^1[0,\infty)$ such that $h(0,r)=O(r^{-{(k-1)}})$ and $\frac{\partial h}{\partial r}(0,r)=O(r^{-k})$ as $r\rightarrow \infty$.  Then, the solution of (\ref{Dh higher}) fulfilles the following the decay estimates:
		\begin{align} \label{decay1}
		|h(u,r)|\leq \frac{C(d+x^3+x^5+x^p+x^{p+2})\exp\left[C(x^2+x^4+x^{p+1})\right]}{(1+r+u)^{k-1}},
		\end{align}
		and
		\begin{align} \label{decay2}
		\left|\frac{\partial h}{\partial r}(u,r) \right|\leq& \frac{C(d+x^3+x^5+x^7+x^p+x^{p+2}+x^{p+4})(1+x^2+x^4+x^6+x^{p+1}+x^{p+3})}{(1+r+u)^k}\nonumber\\
		&\times\exp\left[C(x^2+x^4+x^{p+1})\right]
		\end{align}
		for any $C:=C(k,D)>0$ depends on $k$ and dimensions $D$.
	\end{lemma}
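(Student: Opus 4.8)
The plan is to read off the decay directly from the Duhamel representations \eqref{Fcurl} and \eqref{Gcurl} of the fixed point $h=\mathcal{F}(h)$ supplied by Theorem \ref{Theorem Local}, now propagated along the incoming characteristics \eqref{PDB}--\eqref{kondisi awal r} over the whole half-line $u\geq 0$. As a first step I would re-establish, by the same computations as in Lemmas \ref{Lemma local 1} and \ref{Lemma local 2}, the pointwise bounds for every auxiliary quantity entering the coefficients and the sources: $|\tilde h|$, $|h-\tfrac{D-2}{2}\tilde h|$, $|g-\bar g|$, $|\bar g|$, $|Q|$, $|A_0|$, $\big|\tfrac{\hat R(\hat\sigma)}{D-2}g-\tfrac{D-2}{2}\tilde g\big|$, $|\partial_r\tilde g|$, $|\partial_r\tilde h|$, $|\partial_r Q|$, and finally $|f|$ and $|f_1|$ as in \eqref{estimate f local} and \eqref{estimate f1 local}. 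These are algebraically identical to the local estimates; the only difference is that they are now regarded as valid on the full strip $[0,\infty)\times[0,\infty)$, each carrying a factor $(1+u)^{-a}$ with $a>0$, a consequence of $k>\tfrac D2$ (equivalently $2k-D>0$) together with the hypothesis $p\geq k$ for the potential terms.

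The single conceptual change from the local argument is that the $u$-integrals must now be controlled by \emph{integrability in $u$} rather than by the length of the time interval. Every coefficient in the exponents of \eqref{Fcurl} and \eqref{Gcurl} decays in $u$ at least like $(1+u)^{-(2k-3)}$, and since $k>\tfrac D2\geq 2$ gives $2k-3>1$, the integral $\int_0^{u_1}(\cdots)_\chi\,\mathrm du'$ is bounded \emph{uniformly in $u_1$}. This replaces the local factor $\exp[\hat C_2(\cdots)\delta]$ by the $u_1$-independent factor $\exp[C(x^2+x^4+x^{p+1})]$ appearing in \eqref{decay1}--\eqref{decay2}. I would note here that the gauge term $-iA_0$ enters the exponent with a purely imaginary coefficient, so it contributes a factor of unit modulus and does not affect the magnitude estimates; for \eqref{Gcurl} one checks in addition that $\int_0^{u_1}|\partial_r\tilde g|_\chi\,\mathrm du'$ converges, again by $k>\tfrac D2$.

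Next comes the geometric input from the characteristics. For data small enough, \eqref{g tilde} together with $g\geq\exp[-Cx^2]$ keeps $\tilde g$ close to its background value $\tfrac{\hat R(\hat\sigma)}{(D-2)(D-3)}$, so $\tilde g$ stays between two fixed positive constants. Writing $r_0=r_1+\tfrac12\int_0^{u_1}\tilde g\,\mathrm du$ from \eqref{kondisi awal r} then yields the two-sided comparison $c(1+r_1+u_1)\leq 1+r_0$ and, along the ray, $1+r(u)+u\geq c(1+r_1+u_1)$ for a fixed $c>0$. The lower bound converts the prescribed initial decay $|h(0,r_0)|\leq Cd\,(1+r_0)^{-(k-1)}$ and $|\partial_r h(0,r_0)|\leq Cd\,(1+r_0)^{-k}$ directly into the weights $(1+r_1+u_1)^{-(k-1)}$ and $(1+r_1+u_1)^{-k}$ required by \eqref{decay1}--\eqref{decay2} for the homogeneous parts of \eqref{Fcurl} and \eqref{Gcurl}.

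The main obstacle is the inhomogeneous integral $\int_0^{u_1}(\mathrm{exp})\,[f]_\chi\,\mathrm du$ and its counterpart with $f_1$: one must recover the sharp radial rate $(1+r+u)^{-(k-1)}$ even though $f$ is built from the \emph{averaged} field $\tilde h$ and from $g-\bar g$, $Q$ and $A_0$, which decay in $r$ more slowly than $h$ itself. The plan is to distribute the radial weight of the source using $1+r(u)+u\geq c(1+r_1+u_1)$ to extract the factor $(1+r_1+u_1)^{-(k-1)}$, absorbing the residual weight into $(1+u)^{-1}$ via $1+r(u)+u\geq 1+u$ and leaving a convergent $u$-integral whose exponent exceeds $1$ thanks to the strong $u$-decay generated by the nonlinearity (the contribution of $|\tilde h|^{p+1}$ with $p\geq k$ is the most favourable, while the terms involving the slowly-varying charge $Q$ are the delicate ones to watch). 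For \eqref{Gcurl} the same balancing is applied to $f_1$, which is \emph{self-referential} through the factor $|\mathcal F|=|h|$; here one first inserts the estimate \eqref{decay1} just proved for $|h|$ and then closes the argument, producing \eqref{decay2}. I expect the charge-related contributions, together with the matching of the sharp exponent $k-1$, to be the most delicate point, exactly as the slower near-null decay in $D>4$ was flagged as the principal new difficulty in the introduction.
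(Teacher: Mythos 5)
Your proposal follows essentially the same route as the paper: the same pointwise estimates on $|\tilde h|$, $|g-\bar g|$, $|Q|$, $|A_0|$, $|f|$, $|f_1|$ with $(1+u)^{-a}$ factors, uniform-in-$u_1$ control of the exponential integrals via integrability in $u$ (the paper's estimates $I_1$--$I_4$), the lower bound on $\tilde g$ along characteristics giving $1+r(u)+u\geq c(1+r_1+u_1)$ (the paper's $\kappa$), the weight-splitting integral formula for the Duhamel term, and the insertion of the $|h|$ bound into $f_1$ to close the $\partial_r h$ estimate. Your remark that the $-iA_0$ term is purely imaginary and hence harmless is a minor sharpening of the paper, which simply majorizes $|iA_0|$; otherwise the arguments coincide.
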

	\begin{remark}
		In the previous work, Chae \cite{Chae1} assume that the initial data satisfies $h(0,r)=O(r^{-{2}})$ and $\frac{\partial h}{\partial r}(0,r)=O(r^{-3})$. Therefore the decay order for solutions of the EMH system in four dimensions satisfies $\left|h(u,r)\right|+r\left|\frac{\partial h(u,r)}{\partial r}\right|=O(r^{-2})$. Motivated by \cite{Prisma,Prisma2}, in this paper we obtain more general order of decay estimates for solutions of the EMH system in higher dimensions $D\geq 4$ in the following $\left|h(u,r)\right|+r\left|\frac{\partial h(u,r)}{\partial r}\right|=O(r^{-{(k-1)}})$, with $k> \frac{D}{2}$.
	\end{remark}
	
	\begin{proof}
		Using the definition of norm (\ref{hx}), we calculate
		\begin{align}\label{estimasi htilde}
		|\tilde{h}|\leq\frac{2x}{(2k-D)}\frac{1}{(1+u)^{\frac{(2k-D)}{2}}(1+r+u)^{\frac{(D-2)}{2}}}.
		\end{align}
		Thus we have
		\begin{align}\label{h-h}
		\left|h-\frac{(D-2)}{2}\tilde{h}\right|\leq  \frac{Cx}{(1+u)^{\frac{(2k-D)}{2}}(1+r+u)^{\frac{(D-2)}{2}}}.
		\end{align}
		The above estimate yields
		\begin{align}\label{g-gbar k}
		|(g-\bar{g})(u,r)|\leq\frac{8\pi x^2}{(D-3)(D-2)^2}\frac{1}{(1+u)^{2k-3}(1+r+u)}.
		\end{align}
		Using triangle inequality we obtain
		\begin{align}\label{estimate gbar}
		|\bar{g}|\geq\frac{8\pi x^2}{(D-3)(D-2)^2}\frac{1}{(1+u)^{2k-3}(1+r+u)},
		\end{align}
		Setting $p\in[k,\infty)$, such that
		\begin{align}
		\frac{1}{r^{D-3}}\int_{0}^{r} gs^{D-2}|V(|\tilde{h}|)|\mathrm{d}s\leq \frac{Cx^{p+1}}{(1+u)^{k^2-D}(1+r+u)^{D-3}}.
		\end{align}
		From the definition of local charge (\ref{Qh}) we calcultae
		\begin{align}\label{estimate Q}
		|Q(u,r)|\leq\frac{8Vol (\Sigma^{D-2})}{(2k-D)^2}\frac{x^2 r^{\frac{(2k-D)}{2}}}{(1+u)^{2k-D}(1+r+u)^{\frac{(2k-D)}{2}}}.
		\end{align}
		As a consequences, we have
		\begin{align}\label{estimate integral Q}
		\frac{1}{Vol^2(\Sigma^{D-2})}\frac{1}{r^{D-3}}\int_{0}^{r}\left|\frac{gQ^2}{s^{D-2}}\right|\mathrm{d}s\leq\frac{2^6}{(D-3)(2k-D)^4}\frac{x^4}{(1+u)^{4k-D-3}(1+r+u)^{D-3}},
		\end{align}
		and
		\begin{align}\label{estimate A0}
		|iA_0|\leq \frac{8x^2}{(2k-D)^2(D-3)}\frac{r^{D-3}}{(1+u)^{2k-3}(1+r+u)^{D-3}}.
		\end{align}
		Substituting (\ref{g-gbar k}), (\ref{estimate gbar}), (\ref{estimate integral Q}), and (\ref{estimate A0}) into (\ref{g tilde}) yields
		\begin{align}\label{g - gtilde}
		\left|\frac{\hat{R}(\hat{\sigma})}{(D-2)}g-\frac{(D-2)}{2}\tilde{g}\right|\leq\frac{C(x^2+x^4+x^{p+1})}{(1+u)^{2k-3}(1+r+u)}.
		\end{align}
		By (\ref{potential}), (\ref{htilde}), (\ref{estimate Q}), and (\ref{g - gtilde}), we obtain (see Appendix \ref{Appendix1})
		\begin{align}\label{estimate f}
		|f|\leq \frac{C(x^3+x^5+x^p+x^{p+2})}{(1+u)^\frac{(2k-D)k}{2}(1+r+u)^{\frac{D}{2}}}.
		\end{align}	
		
		From (\ref{g tilde}) we obtain
		\begin{align}
		|\tilde{g}(u,0)|\geq&\frac{2^4\pi x^2\hat{R}(\hat{\sigma})}{(D-3)(D-2)^3}+\frac{2^{p+7}\pi K_0x^{p+1}}{(D-2)\left[(D-2)k-D\right](2k-D)^{k+1}}\nonumber\\
		&+\frac{2^6x^4}{(D-3)(2k-D)^4}.
		\end{align}
		Let $x_1$ be the positive solution of
		\begin{align}\label{x1}
		\frac{2^4\pi x^2\hat{R}(\hat{\sigma})}{(D-3)(D-2)^3}+\frac{2^{p+7}\pi K_0x^{p+1}}{(D-2)\left[(D-2)k-D\right](2k-D)^{k+1}}+\frac{2^6x^4}{(D-3)(2k-D)^4}=0.
		\end{align}
		Now, let us introduce a new function
		\begin{align}\label{k higher}
		\kappa:=\kappa(x)=&\frac{2^4\pi x^2\hat{R}(\hat{\sigma})}{(D-3)(D-2)^3}+\frac{2^{p+7}\pi K_0x^{p+1}}{(D-2)\left[(D-2)k-D\right](2k-D)^{k+1}}\nonumber\\
		&+\frac{2^6x^4}{(D-3)(2k-D)^4},
		\end{align}
		for all $x\in[0,x_1)$. From the characteristics equation we have
		\begin{align}\label{ru higher}
		r(u)=r_1 + \frac{1}{2}\int_{u}^{u_1}\tilde{g}(u',r(u'))\mathrm{d}u'\geq r_1 +\frac{1}{2}\kappa(u_1-u),
		\end{align}
		and
		\begin{align}\label{ru2 higher}
		1+r(u)+u\geq 1 + \frac{u}{2} + r_1 + \frac{\kappa}{2}(u_1 - u)\geq \frac{\kappa}{2}(1+r_1+u_1).
		\end{align}
		
		We use (\ref{ru2 higher}) and the definition of norm (\ref{hx0}) to produce
		\begin{align}\label{estimate h0}
		|h(0,r_0)|\leq \frac{d}{(1+r_0)^{k-1}}\leq \frac{d}{(1+r_1+\frac{1}{2}\kappa u_1)^{k-1}}\leq \frac{2^{k-1}d}{\kappa^{k-1}(1+r_1+u_1)^{k-1}}.
		\end{align}
		
		Let us define the integral formula
		\begin{align}\label{integral}
		\int_{0}^{u_1}\left[\frac{r^s}{(1+u)^t(1+r+u)^q}\right]_\chi\mathrm{d}u\leq& \int_{0}^{u_1}\left[\frac{1}{(1+u)^t(1+r+u)^{q-s}}\right]_\chi\mathrm{d}u\nonumber\\
		\leq& \frac{1}{\kappa^m(1+r_1+u_1)^m}\int_{0}^{\infty}\frac{\mathrm{d}u}{(1+u)^{q-s+t-m}}\nonumber\\
		=&\frac{2^m}{(q-s+t-m-1)\kappa^m(1+r_1+u_1)^m}, 
		\end{align}
		where $q-s+t-m>1$, with $q, s, t, m\in \mathbb{R}$. Setting $q=2$, $s=0$, $t=3$, and $m=0$.
		
		Using the integral formula (\ref{integral}), we represent the estimate for exponential term of (\ref{Fcurl}) as follows
		\begin{align}
		&\int_{0}^{u_1} \left[\frac{1}{2r}\left|\frac{\hat{R}(\hat{\sigma})}{(D-2)}g-\frac{(D-2)}{2}\tilde{g}\right|+\frac{8\pi g r}{(D-2)}|V(|\tilde{h}|)|+\frac{|Q|^2g}{(D-2)r^3}+|iA_0|\right]_\chi\mathrm{d}u\nonumber\\
		&\leq I_1 + I_2 + I_3 + I_4.
		\end{align}
		\begin{enumerate}
			\item Estimate for $I_1$\\
			Setting $q=2$, $s=0$, $t=2k-3$, and $m=0$, we obtain
			\begin{align}\label{I1 F}
			\int_{0}^{u_1}\left[\frac{1}{2r}\left|\frac{\hat{R}(\hat{\sigma})}{(D-2)}g-\frac{(D-2)}{2}\tilde{g}\right|\right]_\chi\mathrm{d}u\leq&C(x^2 + x^4 + x^{p+1})\int_{0}^{\infty}\frac{1}{(1+u)^{2k-1}}\mathrm{d}u\nonumber\\
			\leq& \frac{C(x^2 + x^4 + x^{p+1})}{2k-2}.
			\end{align}
			\item Estimate for $I_2$\\
			Setting $q=\frac{(D-2)(k+1)}{2}$, $s=1$, $t=\frac{(2k-D)(k+1)}{2}$, and $m=0$, we obtain
			\begin{align}\label{I2 F}
			\int_{0}^{u_1}\left[\frac{8\pi g r}{(D-2)}|V(|\tilde{h}|)|\right]_\chi\mathrm{d}u\leq C(x^2 + x^4 + x^{p+1})\int_{0}^{\infty}\frac{1}{(1+u)^{k^2-2}}\mathrm{d}u\leq \frac{C x^{p+1}}{k^2-3}.
			\end{align}
			\item Estimate for $I_3$\\
			Setting $q=3$, $s=0$, $t=2(2k-D)$, and $m=0$, we obtain
			\begin{align}\label{I3 F}
			\int_{0}^{u_1}\left[\frac{|Q|^2g}{(D-2)r^3}\right]_\chi\mathrm{d}u\leq Cx^4\int_{0}^{\infty}\frac{1}{(1+u)^{2(2k-D)+3}}\mathrm{d}u\leq \frac{C x^{4}}{2(2k-D+1)}.
			\end{align}
			\item Estimate for $I_4$\\
			By (\ref{estimate A0}) we have $q=D-3$, $s=D-3$, $t=2k-3$, and $m=0$. Thus we obtain
			\begin{align}\label{I4 F}
			\int_{0}^{u_1}\left[|iA_0|\right]_\chi\mathrm{d}u\leq Cx^2\int_{0}^{\infty}\frac{1}{(1+u)^{2k-3}}\mathrm{d}u\leq \frac{C x^{2}}{2k-4}.
			\end{align}
		\end{enumerate}
		By (\ref{I1 F}), (\ref{I2 F}), (\ref{I3 F}), and (\ref{I4 F}), and setting $k>\frac{D}{2}$ we obtain
		\begin{align}\label{estimate exp1}
		&\int_{0}^{u_1} \left[\frac{1}{2r}\left|\frac{\hat{R}(\hat{\sigma})}{(D-2)}g-\frac{(D-2)}{2}\tilde{g}\right|+\frac{8\pi g r}{(D-2)}|V(|\tilde{h}|)|+\frac{|Q|^2g}{(D-2)r^3}+|iA_0|\right]_\chi\mathrm{d}u\nonumber\\
		&\leq C(x^2+x^4+x^{p+1}).
		\end{align}
		Combining (\ref{estimate f}), (\ref{estimate h0}), and (\ref{estimate exp1}), we represent the estimate for (\ref{Fcurl}) as follows
		\begin{align}
		|\mathcal{F}|\leq&\frac{2^{k-1}d}{\kappa^{k-1}(1+r_1+u_1)^{k-1}}\exp\left[C(x^2+x^4+x^{p+1})\right]\nonumber\\
		&+\int_{0}^{u_1}\frac{C(x^3+x^5+x^p+x^{p+2})}{(1+u)^{\frac{(2k-D)k}{2}}(1+r+u)^{\frac{D}{2}}}\exp\left[C(x^2+x^4+x^{p+1})\right]\mathrm{d}u.
		\end{align}
		Using the integral formula (\ref{integral}) for $q=\frac{D}{2}, s=0, t=\frac{(2k-D)k}{2}, m=k-1$ we obtain
		\begin{align}\label{estimate Fcurl}
		|\mathcal{F}(u_1,r_1)|\leq\frac{C(d+x^3+x^5+x^p+x^{p+2})\exp\left[C(x^2+x^4+x^{p+1})\right]}{\kappa^{k-1}(1+r_1+u_1)^{k-1}}.
		\end{align}
		
		From (\ref{g2}) and (\ref{dgtilde}) we obtain
		\begin{align}
		\left|\frac{1}{2r}\left(\frac{\hat{R}(\hat{\sigma})}{(D-2)}\frac{\partial g}{\partial r}-\frac{(D-2)}{2}\frac{\partial\tilde{g}}{\partial r}\right)\right|\leq\frac{C(x^2+x^4+x^{p+1})}{(1+u)^{2k-D}(1+r+u)^3}.
		\end{align}
		Then, from (\ref{htilde}) we can show
		\begin{align}
		\left|\frac{\partial \tilde{h}}{\partial r}\right|\leq\frac{Cx}{(1+u)^{\frac{(2k-D)}{2}}(1+r+u)^{\frac{D}{2}}}.
		\end{align}
		On the other hand, from (\ref{estimate Q}) we get
		\begin{align}\label{dQdr}
		\left|\frac{\partial Q}{\partial r}\right|\leq\frac{Cx^2}{(1+u)^{\frac{(2k-D)}{2}}(1+r+u)^{\frac{(2k-D+2)}{2}}}.
		\end{align}
		Finally, we write the following estimate (\ref{f1}) (see Appendix \ref{Appendix2})
		\begin{align}\label{estimate f1}
		|f_1|\leq\frac{C(x^2+x^4+x^6+x^{p+1}+x^{p+3})}{(1+u)^{2k-D}(1+r+u)^{D-2}}|\mathcal{F}|+\frac{C(x^3+x^5+x^7+x^p+x^{p+2}+x^{p+4})}{(1+u)^{2k-D}(1+r+u)^\frac{(D+2)}{2}}.
		\end{align}
		
		Again, we use the definition of (\ref{norm Y}) such that
		\begin{align}\label{estimate dh0}
		\left|\frac{\partial h}{\partial r}(0,r_0)\right|\leq \frac{\|h_0\|_{X_0}}{(1+r_0)^k}\leq \frac{d}{(1+r_1+\frac{1}{2}\kappa u_1)^k}\leq \frac{2^kd}{\kappa^k(1+r_1+u_1)^k}.
		\end{align}
		As previously, using the integral formula (\ref{integral}) we obtain
		\begin{align}\label{estimate exp2}
		&\int_{0}^{u_1}\left[\frac{1}{2}\left|\frac{\partial \tilde{g}}{\partial r}\right|+\frac{1}{2r}\left|\frac{\hat{R}(\hat{\sigma})}{(D-2)}g-\frac{(D-2)}{2}\tilde{g}\right|+\frac{8\pi gr}{(D-2)} |V(|\tilde{h}|)|\right.\nonumber\\
		&\left.+\frac{g|Q|^2}{(D-2)r^3}+|iA_0|\right]_\chi\mathrm{d}u\leq C(x^2+x^4+x^{p+1}).
		\end{align}
		Combinations of (\ref{estimate f1}), (\ref{estimate dh0}), and (\ref{estimate exp2}), yields
		\begin{align}
		|\mathcal{G}|\leq&\frac{2^kd}{\kappa^k(1+r_1+u_1)^k}\exp\left[C(x^2+x^4+x^{p+1})\right]\nonumber\\
		&+\int_{0}^{u_1}\left[\frac{C(x^2+x^4+x^6+x^{p+1}+x^{p+3})}{(1+u)^{2k-D}(1+r+u)^{D-2}}|\mathcal{F}|+\frac{C(x^3+x^5+x^7+x^p+x^{p+2}+x^{p+4})}{(1+u)^{2k-D}(1+r+u)^\frac{(D+2)}{2}}\right]\nonumber\\
		&\times \exp\left[C(x^2+x^4+x^{p+1})\right]\mathrm{d}u,\nonumber\\
		\leq&\frac{2^kd}{\kappa^k(1+r_1+u_1)^k}\exp\left[C(x^2+x^4+x^{p+1})\right]+I_5+I_6.
		\end{align}
		We calculate thes estimate for $I_5$ and $I_6$ as follows
		\begin{enumerate}
			\item Estimate for $I_5$\\
			Setting $q=2, s=0, t=2k-D, m=1$ we obtain
			\begin{align}
			I_5=& \frac{C(x^2+x^4+x^6+x^{p+1}+x^{p+3})(d+x^3+x^5+x^p+x^{p+2})}{\kappa^{k}(1+r_1+u_1)^{k}}\nonumber\\
			&\times\exp\left[C(x^2+x^4+x^{p+1})\right]\int_{0}^{\infty}\frac{\mathrm{d}u}{(1+u)^{2k-D+1}}\nonumber\\
			\leq& \frac{C(x^2+x^4+x^6+x^{p+1}+x^{p+3})(d+x^3+x^5+x^p+x^{p+2})\exp\left[C(x^2+x^4+x^{p+1})\right]}{(2k-D)\kappa^{k}(1+r_1+u_1)^{k}}
			\end{align}
			\item Estimate for $I_6$\\
			Setting $q=\frac{(D+2)}{2}, s=0, t=2k-D, m=k$ we obtain
			\begin{align}
			I_6=& \frac{C(x^3+x^5+x^7+x^p+x^{p+2}+x^{p+4})\exp\left[C(x^2+x^4+x^{p+1})\right]}{\kappa^k(1+r_1+u_1)^k}\int_{0}^{\infty}\frac{\mathrm{d}u}{(1+u)^{\frac{2k-D+2}{2}}}\nonumber\\
			&\leq \frac{C(x^3+x^5+x^7+x^p+x^{p+2}+x^{p+4})\exp\left[C(x^2+x^4+x^{p+1})\right]}{(2k-D)\kappa^k(1+r_1+u_1)^k}.
			\end{align}
		\end{enumerate}
		Finally we obtain
		\begin{align}\label{estimate Gcurl}
		\left|\mathcal{G}(u_1,r_1)\right|\leq&\frac{C(d+x^3+x^5+x^7+x^p+x^{p+2}+x^{p+4})(1+x^2+x^4+x^6+x^{p+1}+x^{p+3})}{\kappa^k(1+r_1+u_1)^k}\nonumber\\
		&\times\exp\left[C(x^2+x^4+x^{p+1})\right].
		\end{align}
		This is the end of the proof.
	\end{proof}
	\subsection{Contraction mapping}	
	Now we study the contraction mapping by the following Lemma.
	
	\begin{lemma}\label{Lemma Contraction}
		Let us define $B(0,x):=\left\{f\in X| \|f\|_X\leq x \right\}$ as the close ball with radius $x$ in $X$. For a suitable $x$ and $d:=d(x)$, then $\mathcal{F}(.)$ satisfies the contraction mapping arguments:
		\begin{enumerate}
			\item $\mathcal{F}: B(0,x)\rightarrow B(0,x)$, and
			\item Suppose that equation (\ref{Dh higher}) has two solutions namely $h_1,h_2\in B(0,x)$. There exists $\tilde{\Lambda}:=\tilde{\Lambda}(x)\in[0,1)$, such that
			\begin{equation}
			\|\mathcal{F}(h_1)-\mathcal{F}(h_2)\|_Y \leq \tilde{\Lambda}\|h_1-h_2\|_Y.
			\end{equation}
			Thus, $\mathcal{F}$ contracts in $Y$.
		\end{enumerate}
	\end{lemma}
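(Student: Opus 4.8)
The plan is to transplant the contraction scheme of Lemmas~\ref{Lemma local 1} and~\ref{Lemma local 2} to the unbounded time interval $[0,\infty)$, with the smallness that was previously supplied by the length $\delta$ of the local time-slab now extracted from the norm $x=\|h\|_X$ itself. The decay estimates of Lemma~\ref{Lemma Decay} already carry out the analytic heart of assertion~(1), while assertion~(2) requires re-running the difference estimates globally and integrating them along the characteristics by means of the integral formula~(\ref{integral}). Crucially, no $\delta$ appears anywhere because the $u$-integrals in~(\ref{Fcurl}), (\ref{Gcurl}) and~(\ref{Theta}) converge on $[0,\infty)$ precisely under the standing hypotheses $k>\tfrac{D}{2}$ and $p\ge k$.

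For assertion~(1) I would apply the bounds~(\ref{decay1}) and~(\ref{decay2}) directly to $\mathcal{F}(h)$ and $\partial_r\mathcal{F}(h)$, which are exactly the quantities controlled in the proof of Lemma~\ref{Lemma Decay} under the hypothesis $\|h\|_X\le x$. Combining the two weights $(1+r+u)^{k-1}$ and $(1+r+u)^{k}$ gives
\begin{align*}
\|\mathcal{F}(h)\|_X\le{}&C\,(d+x^3+x^5+x^7+x^p+x^{p+2}+x^{p+4})\\
&\times(1+x^2+x^4+x^6+x^{p+1}+x^{p+3})\exp\!\left[C(x^2+x^4+x^{p+1})\right].
\end{align*}
Exactly as for $\mathcal{L}_1$ in Lemma~\ref{Lemma local 1}, I would then introduce the threshold
\begin{align*}
\mathcal{L}(x):=\frac{x\,\exp\!\left[-C(x^2+x^4+x^{p+1})\right]}{C\,(1+x^2+x^4+x^6+x^{p+1}+x^{p+3})}-\left(x^3+x^5+x^7+x^p+x^{p+2}+x^{p+4}\right),
\end{align*}
verify that $\mathcal{L}(0)=0$, $\mathcal{L}'(0)>0$ and $\mathcal{L}(x)\to-\infty$, so that $\mathcal{L}$ is increasing on some interval $[0,x_0]$, and conclude that $\mathcal{F}\colon B(0,x)\to B(0,x)$ whenever $0<x<x_0$ and $d=d(x)<\mathcal{L}(x)$. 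Note that $\mathcal{L}(x)\sim x/C$ near the origin, so the admissible $d$ automatically satisfies $d=O(x)$, a fact used in the next step.

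For assertion~(2) I would follow Lemma~\ref{Lemma local 2} in form, but keeping the decaying weights throughout rather than absorbing them into $\delta$. First derive the difference estimates for the building blocks --- $|\tilde h_1-\tilde h_2|$, $|g_1-g_2|$, $|\bar g_1-\bar g_2|$, $|\tilde g_1-\tilde g_2|$, $|Q_1-Q_2|$, $|Q_1^2-Q_2^2|$, $|V(|\tilde h_1|)-V(|\tilde h_2|)|$, $\bigl|\tfrac{\partial V(|\tilde h_1|)}{\partial\tilde h_1^*}-\tfrac{\partial V(|\tilde h_2|)}{\partial\tilde h_2^*}\bigr|$ and $|{A_0}_1-{A_0}_2|$ --- each bounded by $Cy$ times an explicit negative power of $(1+u)$ and $(1+r+u)$, in the manner of~(\ref{h1-h2 local})--(\ref{g1-g2 local}). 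Feeding these, together with the already-established bounds~(\ref{estimate Fcurl}) and~(\ref{estimate Gcurl}) for $\mathcal{F}_2$ and $\mathcal{G}_2$, into the integrand $\tilde\varphi$ of~(\ref{phi tilde}) yields a bound of the form $|\tilde\varphi|\le C\,\hat{N}(x)\,y\,(1+u)^{-(2k-D)k/2}(1+r+u)^{-D/2}$, where $\hat{N}(x)$ is a finite sum of monomials in $x$ and $d$ each carrying at least one positive power of $x$; since $d=O(x)$, one has $\hat{N}(0)=0$. Integrating~(\ref{Theta}) along the characteristic by~(\ref{integral}) (taking $q=\tfrac{D}{2}$, $s=0$, $m=k-1$, $t=(2k-D)k/2$, whose convergence condition $q-s+t-m>1$ reduces to $(k-\tfrac{D}{2})(k-1)>0$, i.e. $k>\tfrac{D}{2}$, and using the uniform bound on the exponential factor as in~(\ref{estimate exp1})) then gives
\begin{align*}
\|\Theta\|_Y\le\tilde{\Lambda}(x)\,y,\qquad \tilde{\Lambda}(x):=C\,\hat{N}(x)\,\exp\!\left[C(x^2+x^4+x^{p+1})\right].
\end{align*}
Since $\tilde{\Lambda}(0)=0$ and $\tilde{\Lambda}$ is continuous, shrinking $x$ forces $\tilde{\Lambda}(x)\in[0,1)$, which is the desired contraction.

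The main obstacle is the step that was automatic in the local argument: producing a contraction factor strictly below $1$. It can no longer come from a short time interval and must instead be read off from $\hat{N}(x)$, so the delicate point is to verify that every one of the roughly twenty terms of $\tilde\varphi$ in~(\ref{phi tilde}) genuinely acquires a factor of $x$ after the difference estimates are applied and simultaneously produces a $u$-integral that converges. The terms demanding the most care are $\tfrac12(\tilde g_1-\tilde g_2)\mathcal{G}_2$ and the charge differences through $Q_1^2-Q_2^2$, because $\tilde g$ contains the genuinely higher-dimensional second term of~(\ref{g tilde}); one must check that its contribution to $|\tilde g_1-\tilde g_2|$ decays no slower than the remaining terms, so that the exponent bookkeeping in~(\ref{integral}) still closes. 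Once this is confirmed, it remains only to choose $x$ small enough to guarantee simultaneously $d(x)<\mathcal{L}(x)$ and $\tilde{\Lambda}(x)<1$, which completes both assertions.
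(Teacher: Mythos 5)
Your proposal follows essentially the same route as the paper: part (1) is obtained by feeding the decay estimates of Lemma~\ref{Lemma Decay} into the $X$-norm and introducing a threshold function (the paper's $\tilde{\Lambda}_1$), while part (2) re-derives the global difference estimates, bounds $\tilde{\varphi}$ exactly as in Appendix~\ref{Appendix3}, and integrates (\ref{Theta}) along the characteristic with the same parameters $q=\tfrac{D}{2}$, $s=0$, $t=\tfrac{(2k-D)k}{2}$, $m=k-1$ in (\ref{integral}) to obtain $\|\Theta\|_Y\leq\tilde{\Lambda}(x)\,y$ with $\tilde{\Lambda}(x)\rightarrow 0$ as $x\rightarrow 0$. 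The only notable difference is that the paper carries the factors $\kappa^{-k}$ and $\kappa^{-(k-1)}$ from (\ref{estimate Fcurl}) and (\ref{estimate Gcurl}) explicitly into $\tilde{\Lambda}_1$ and $\tilde{\Lambda}_2$, whereas you absorb them into the constants as the statement of Lemma~\ref{Lemma Decay} itself does; this is cosmetic and does not change the argument.
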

	\begin{proof}
		We start to prove the first argument. By (\ref{hx}), we have
		\begin{align}\label{Estimate F}
		\|\mathcal{F}\|_X\leq&\frac{C_1}{\kappa^k}(d+x^3+x^5+x^7+x^p+x^{p+2}+x^{p+4})\nonumber\\
		&\times(1+x^2+x^4+x^6+x^{p+1}+x^{p+3})\exp\left[C_2(x^2+x^4+x^{p+1})\right],
		\end{align}
		for any $C_1,C_2>0$ depends on $k$ and dimensions $D$.
		
		Let us define a function,
		\begin{align}
		\tilde{\Lambda}_1(x):=&\frac{x\kappa^k\exp\left[-C_2(x^2+x^4+x^{p+1})\right]}{C_1(1+x^2+x^4+x^6+x^{p+1}+x^{p+3})}\nonumber\\
		&-\left(x^3+x^5+x^7+x^p+x^{p+2}+x^{p+4}\right).
		\end{align}
		We have $\tilde{\Lambda}_1(0)=0$, $\tilde{\Lambda}_1'(0)>0$, and $\tilde{\Lambda}_1(x)\rightarrow-\infty$ as $x\rightarrow\infty$. There exists $x_0\in(0,x_1)$ such that $\tilde{\Lambda}_1(x)$ is monotonically increasing on $[0,x_0]$. Therefore, if $d<\tilde{\Lambda}_1(x)$, we have $\|\mathcal{F}\|_X\leq x$ and $\mathcal{F}:B(0,x)\rightarrow B(0,x)$ for every $x\in(0,x_0)$. This proves the first argument. 
		
		Now, we prove the second argument. We firstly calculate 
		\begin{align}\label{h1-h2}
		|\tilde{h}_1 - \tilde{h}_2|\leq\frac{2y}{(2k-D)}\frac{1}{(1+u)^{\frac{(2k-D)}{2}}(1+r+u)^{\frac{(D-2)}{2}}}.
		\end{align}
		Next, we calculate
		\begin{align}\label{V1-V2}
		\left|V(|\tilde{h}_1|)-V(|\tilde{h}_2|)\right|\leq\frac{2^{2p+1}x^py}{(2k-D)^{k+1}(1+u)^{\frac{(2k-D)(k+1)}{2}}(1+r+u)^{\frac{(D-2)(k+1)}{2}}},
		\end{align}
		and also
		\begin{align}
		\left|\frac{\partial V(|\tilde{h}_1|)}{\partial \tilde{h}_1^*}-\frac{\partial V(|\tilde{h}_2|)}{\partial \tilde{h}_2^*}\right|
		\leq\frac{C x^{p-1}y}{(1+u)^{\frac{(2k-D)k}{2}}(1+r+u)^{\frac{(D-2)k}{2}}}.
		\end{align}
		
		From the estimate (\ref{estimasi htilde}) and (\ref{h1-h2}) we obtain
		\begin{align}\label{Q1-Q2}
		|Q_1-Q_2|\leq\frac{Cxyr^{\frac{(2k-D)}{2}}}{(1+u)^{2k-D}(1+r+u)^{\frac{(2k-D)}{2}}}.
		\end{align}
		Thus, combinations of (\ref{estimate Q}) and (\ref{Q1-Q2}) produces
		\begin{align}
		\left|Q_1^2-Q_2^2\right|=|Q_1-Q_2|(|Q_1|+|Q_2|)
		\leq\frac{Cx^3yr^{2k-D}}{(1+u)^{2(2k-D)}(1+r+u)^{2k-D}},
		\end{align}
		and
		\begin{align}
		|{A_0}_1-{A_0}_2|\leq\frac{C(x+x^3)yr^{D-3}}{(1+u)^{2k-3}(1+r+u)^{D-3}}.
		\end{align}
		
		We can verify
		\begin{align}\label{g1-g2}
		|g_1-g_2|\leq\frac{Cxy}{(1+u)^{2k-D}(1+r+u)^{D-2}}.
		\end{align}
		From (\ref{g1-g2}), we have
		\begin{align}
		|\bar{g}_1-\bar{g}_2|\leq\frac{1}{r}\int_{0}^{r}|g_1-g_2|\mathrm{d}s\leq\frac{Cxy}{(1+u)^{2k-3}(1+r+u)},
		\end{align}
		and
		\begin{align}
		|\tilde{g}_1-\tilde{g}_2|\leq\frac{C(x+x^3+x^5+x^p+x^{p+2})y}{(1+u)^{2k-3}(1+r+u)}.
		\end{align}
		
		Using all the above estimates and combining with (\ref{estimate Fcurl}) and (\ref{estimate Gcurl}), we write down the estimate for (\ref{phi tilde}) as follows (see Appendix \ref{Appendix3})
		\begin{equation}\label{varphi tilde}
		|\tilde{\varphi}|\leq\frac{C(\alpha(x)+\beta(x)+\gamma(x)+\sigma(x)+\eta(x)+\rho(x)+\omega(x)+\lambda(x))y}{(1+u)^{\frac{(2k-D)k}{2}}(1+r+u)^{\frac{D}{2}}}\;,
		\end{equation}
		where
		\begin{align}
		\alpha(x):=&C(x+x^3+x^5+x^p+x^{p+2})(d+x^3+x^5+x^7+x^p+x^{p+2}+x^{p+4})\nonumber\\
		&\times(1+x^2+x^4+x^6+x^{p+1}+x^{p+3})\exp\left[C(x^2+x^4+x^{p+1})\right],\label{alpha}\\
		\beta(x):=&C(x+x^3+x^5+x^p+x^{p+2})(d+x^3+x^5+x^p+x^{p+2})\nonumber\\
		&\times\exp\left[C(x^2+x^4+x^{p+1})\right],\label{beta}\\
		\gamma(x):=&C x^{p+2}(d+x^3+x^5+x^p+x^{p+2})\exp\left[C(x^2+x^4+x^{p+1})\right],\label{gamma}\\
		\sigma(x):=&Cx^{p}(d+x^3+x^5+x^p+x^{p+2})\exp\left[C(x^2+x^4+x^{p+1})\right],\label{sigma}\\
		\eta(x):=&Cx^3(d+x^3+x^5+x^p+x^{p+2})\exp\left[C(x^2+x^4+x^{p+1})\right],\label{eta}\\
		\rho(x):=&Cx^5(d+x^3+x^5+x^p+x^{p+2})\exp\left[C(x^2+x^4+x^{p+1})\right],\label{rho}\\
		\omega(x):=&C(x+x^3)(d+x^3+x^5+x^p+x^{p+2})\exp\left[C(x^2+x^4+x^{p+1})\right],\label{omega}\\
		\lambda(x):=&C(x^2+x^4+x^6+x^{p-1}+x^{p+1}+x^{p+3}).\label{lambda}
		\end{align}
		
		We use the similar method as in (\ref{estimate exp1}) to produce
		\begin{align}\label{estimate exp3}
		&\int_{u}^{u_1} \left[\frac{1}{2r}\left|\frac{\hat{R}(\hat{\sigma})}{(D-2)}g_1-\frac{(D-2)}{2}\tilde{g}_1\right|+\frac{8\pi r g_2}{(D-2)} \left|V(|\tilde{h}_1|)\right|+\left|\frac{g_1Q_2^2}{(D-2)r^3}\right|+\left|i{A_0}_1\right|\right]_{\chi_1}\mathrm{d}u'\nonumber\\
		&\leq C(x^2+x^4+x^{p+1}).
		\end{align}
		We assume
		\begin{align}
		\max\left\{\|\tilde{h}_1\|_X,\|\tilde{h}_2\|_X\right\}<x.
		\end{align}
		By (\ref{varphi tilde}) and (\ref{estimate exp3}), we represent the estimate for (\ref{Theta}) as follows
		\begin{align}
		|\Theta(u_1,r_1)|\leq\int_{0}^{u_1}\frac{CM(x)y}{(1+u)^{\frac{(2k-D)k}{2}}(1+r+u)^{\frac{D}{2}}}\exp\left[C(x^2+x^4+x^{p+1})\right]\mathrm{d}u,
		\end{align}
		where
		\begin{align}
		M(x):=\alpha(x)+\beta(x)+\gamma(x)+\sigma(x)+\eta(x)+\rho(x)+\omega(x)+\lambda(x).
		\end{align}
		Again, using the integral formula (\ref{integral}) for $q=\frac{D}{2}, s=0, t=\frac{(2k-D)k}{2}, m=k-1$ we obtain
		\begin{align}
		&\frac{CM(x)y \exp\left[C(x^2+x^4+x^{p+1})\right]}{\kappa^{k-1}(1+r_1+u_1)^{k-1}}\int_{0}^{\infty}\frac{1}{(1+u)^{\frac{(D-2k)(1-k)+2}{2}}}\mathrm{d}u\nonumber\\
		&\leq \frac{CM(x)y \exp\left[C(x^2+x^4+x^{p+1})\right]}{\kappa^{k-1}(1+r_1+u_1)^{k-1}(k-1)(2k-D)}.
		\end{align}
		Hence, we have
		\begin{align}\label{estimate Theta}
		|\Theta(u_1,r_1)|\leq\frac{C_3M(x)y\exp\left[C_4(x^2+x^4+x^{p+1})\right]}{\kappa^{k-1}(1+r_1+u_1)^{k-1}},
		\end{align}
		for any $C_3,C_4>0$ depends on $k$ and dimensions $D$.
		
		Finally, from the definition of norm (\ref{norm Y}) we have
		\begin{equation}
		\|\Theta\|_Y \leq\tilde{\Lambda}_2y
		\end{equation}
		with
		\begin{equation}
		\tilde{\Lambda}_2:=\frac{C_3}{\kappa^{k-1}}M(x)\exp\left[C_4(x^2+x^4+x^{p+1})\right].\:
		\end{equation}
		We have $\tilde{\Lambda}_2(0)=0$ and $\tilde{\Lambda}'_2(0)>0$. Also, we conclude that $\tilde{\Lambda}_2$ is monotonically increasing function on $\tilde{x}_0\in \mathbb{R}^+$. There exists $x_2\in \mathbb{R}^+$ such that $\tilde{\Lambda}_2(x)<1$ for all $x$ in $(0,x_2]$. 
		Therefore, the mapping $h\mapsto\mathcal{F}(h)$ is contraction in $Y$ for $\|h\|_X\leq x_2$. The is the end of the proof.
	\end{proof}
	
	\subsection{Proof of Theorem \ref{Theorem Global}}
	We prove Theorem \ref{Theorem Global} using the results we obtained in Lemma \ref{Lemma Decay} and Lemma \ref{Lemma Contraction}. In Lemma \ref{Lemma Decay}, we show that $h$ and $\frac{\partial h}{\partial r}$ are bounded in $[0,\infty)\times [0,\infty)$. Furthermore, in Lemma \ref{Lemma Contraction} we prove the arguments (1) and (2) which are the main arguments of the Banach fixed theorem. These provide the existence of unique fixed point $\mathcal{F}(h)=h$ which is the solution of the equation (\ref{Dh higher}). Now, we will show that $h$ is a classical solution. Therefore, we shall prove that $\frac{\partial h}{\partial u}$ is also bounded in $[0,\infty)\times [0,\infty)$.
	
	Given that $Y$ containing $X$, since $h\mapsto\mathcal{F}(h)$ is contraction mapping in $Y$, it is straightforward to show that $h\mapsto\mathcal{F}(h)$ is also contraction mapping in $X$. As a consequences, there also exists a unique fixed point $h\in X$ such that $\mathcal{F}(h)=h$.
	
	By (\ref{Dh higher}), (\ref{estimasi htilde}), (\ref{estimate Q}), (\ref{estimate Fcurl}), and (\ref{estimate Gcurl}), we obtain
	\begin{align}\label{estimate dhdu}
	\left|\frac{\partial h}{\partial u}\right|\leq& \frac{\tilde{g}}{2}\left|\frac{\partial h}{\partial r}\right| + \frac{1}{2r}\left|\frac{\hat{R}(\hat{\sigma})}{(D-2)}g-\frac{(D-2)}{2}\tilde{g}\right|\left|h\right|+\frac{(D-2)}{4r}\left|\frac{\hat{R}(\hat{\sigma})}{(D-2)}g-\frac{(D-2)}{2}\tilde{g}\right|\left|\tilde{h}\right|\nonumber\\
	&+\frac{8\pi g r}{(D-2)}|V(|\tilde{h}|)||h|+4\pi g r|V(|\tilde{h}|)||\tilde{h}|+\left|\frac{gQ^2}{(D-2)r^3}\right||h|+\left|\frac{gQ^2}{2r^3}\right||\tilde{h}|\nonumber\\
	&+\left|\frac{gQ}{2r}\right||\tilde{h}|+|A_0||h|+\frac{gr}{2}\left|\frac{\partial V(|\tilde{h}|)}{\partial \tilde{h}^*}\right|\nonumber\\
	\leq&\frac{C(\xi(x)+x^3+x^5+x^p++x^{p+2})}{(1+u)^{\frac{(2k-D)k}{2}}(1+r+u)^{\frac{(D-2)k}{2}-1}},
	\end{align}
	where
	\begin{align}
	\xi(x):=&C(d+x^3+x^5+x^7+x^p+x^{p+2}+x^{p+4})\nonumber\\
	&\times(1+x^2+x^4+x^6+x^{p+1}+x^{p+3})\exp\left[C(x^2+x^4+x^{p+1})\right].
	\end{align}
	Thus, we conclude that $\frac{\partial h}{\partial u}$ is bounded  in $[0,\infty)\times [0,\infty)$. This is the end of the proof.
	
	\section{Completeness properties of the spacetime}
	\label{Sec5}
	Motivated by \cite{Chris1,Chae1,Prisma2}, in the following section we study the completeness properties of the higher dimensional spacetime ($D\geq 4$). We firstly introduce a local mass function for $D\geq 4$ as follows:
	\begin{align}\label{M}
	M(u,r) :=\frac{r^{D-3}}{2}\left[\frac{2\hat{R}(\hat{\sigma})}{(D-2)^2}-\frac{\tilde{g}}{g}+\frac{2}{(D-3)(D-2)Vol^2(\Sigma^{D-2})}\frac{Q^2}{r^{2(D-3)}}\right].
	\end{align}
	By taking $D=4$ and $Q=0$, equation (\ref{M}) reduce to equation (4.3) in \cite{Chris1}. Furthermore, in the view of	(\ref{g2}) we shall note that $g$ is monotonically nondecreasing function of $r$ at each $u$, and also from (\ref{g - gtilde}) we obtain $\tilde{g}\leq \frac{2\hat{R}(\hat{\sigma})}{(D-2)^2} g$. This implies that $M(u,r)\geq 0$.	For each $u\in [0,\infty)$, setting
	\begin{align}
	\sup_{0<r\infty} M(u,r):=M_1(u),~~~\inf_{0<r<\infty}\left|Q(u,r)\right|:=Q_1(u).
	\end{align}
	We now assume
	\begin{align}\label{final mass charge}
	\lim_{u\rightarrow\infty} M_1(u) = M_2 <\infty,~~~\lim_{u\rightarrow\infty}Q_1(u)=Q_2<\infty,
	\end{align}
	where $M_2$ and $Q_2$ denote the final local mass and final local charge, respectively, and both are positive numbers. Therefore, for all $u\geq u_0$ there exists $u_0>0$ such that $M_1(u)+Q_1(u)<\infty$. Then we define the real-valued function $R(u)=R(M_1(u),Q_1(u))$ on $[u_0,\infty)$ by
	\begin{align}
	R(u):=
	\begin{cases}
	M_1 + \sqrt{M_1^2-Q_1^2},&~~~\mathrm{if}~~~M_1>Q_1,\\
	M_1,&~~~\mathrm{if}~~~M_1=Q_1,\\
	0, &~~~\mathrm{if}~~~M_1<Q_1.\\
	\end{cases}
	\end{align}
	Setting
	\begin{align}
	R_0:=\lim_{u\rightarrow\infty} R(u)=R(M_2,Q_2).
	\end{align}
	Finally, we shall prove the following Theorem.
	\begin{theorem}
		Let $D\geq 4$. Suppose that the assumption (\ref{final mass charge}) holds. For a given positive constant $(D-2)$-spatial Ricci scalar of compact manifold $\hat{R}(\hat{\sigma})$, the time-like line $r=r_0$ is complete toward future if $r_0>R_0$.
	\end{theorem}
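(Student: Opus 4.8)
The plan is to reduce future completeness of the line $r=r_0$ to the divergence of its proper time, and then to produce a uniform positive lower bound for the integrand. Along the integral curve of $\partial_u$ at fixed $r=r_0$ and fixed angular position, the metric (\ref{metric}) restricts to $\mathrm{d}s^2=-e^{2F(u,r_0)}\mathrm{d}u^2$, so the curve is timelike and its future proper length is
\begin{align}
\tau=\int_{0}^{\infty} e^{F(u,r_0)}\,\mathrm{d}u=\int_{0}^{\infty} g(u,r_0)\sqrt{\tfrac{\tilde{g}}{g}(u,r_0)}\,\mathrm{d}u,
\end{align}
using $e^{2F}=e^{F+G}e^{F-G}=g\tilde{g}$ and $g>0$. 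Completeness toward the future is thus equivalent to $\tau=\infty$, so it suffices to bound the integrand below by a positive constant for all large $u$.

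First I would bound $g$ from below. By (\ref{g2}) one has $0<g\le 1$, and since the integral there runs only over $s\ge r_0>0$ the weight $1/s$ is at most $1/r_0$; inserting the decay estimate (\ref{h-h}) for $h-\tfrac{(D-2)}{2}\tilde{h}$ shows the exponent is bounded uniformly in $u$, so that $g(u,r_0)\ge g_\ast$ for some constant $g_\ast>0$ independent of $u$. Next, and this is the core of the argument, I would read off the redshift factor from the mass function (\ref{M}),
\begin{align}
\frac{\tilde{g}}{g}(u,r_0)=\frac{2\hat{R}(\hat{\sigma})}{(D-2)^2}-\frac{2M(u,r_0)}{r_0^{D-3}}+\frac{2\,Q(u,r_0)^2}{(D-3)(D-2)Vol^2(\Sigma^{D-2})\,r_0^{2(D-3)}},
\end{align}
and bound it below using $M(u,r_0)\le M_1(u)$ and $Q(u,r_0)^2\ge Q_1(u)^2$ together with the limits $M_1(u)\to M_2$, $Q_1(u)\to Q_2$ from (\ref{final mass charge}).

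With these substitutions the right-hand side is, as $u\to\infty$, bounded below by
\begin{align}
f(r_0):=\frac{2\hat{R}(\hat{\sigma})}{(D-2)^2}-\frac{2M_2}{r_0^{D-3}}+\frac{2Q_2^2}{(D-3)(D-2)Vol^2(\Sigma^{D-2})\,r_0^{2(D-3)}}.
\end{align}
Regarded as a quadratic in $\rho:=r_0^{D-3}$, the sign of $f$ is governed by its outer root, which is precisely what $R(M_2,Q_2)$ encodes: the subextremal, extremal and horizonless branches in the definition of $R(u)$ correspond to $M_2>Q_2$, $M_2=Q_2$ and $M_2<Q_2$, and in each case $r_0>R_0$ forces $f(r_0)>0$. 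Consequently there is a $u_0$ with $\tilde{g}/g(u,r_0)\ge\tfrac12 f(r_0)>0$ for all $u\ge u_0$; combined with $g\ge g_\ast$ this gives $e^{F(u,r_0)}\ge g_\ast\sqrt{\tfrac12 f(r_0)}=:c_0>0$, and therefore $\tau\ge\int_{u_0}^{\infty}c_0\,\mathrm{d}u=\infty$, proving completeness.

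The step I expect to be the main obstacle is the algebraic identification in the third paragraph: showing that the threshold $R_0$, defined by the Reissner--Nordstr\"{o}m-type expression $M_2+\sqrt{M_2^2-Q_2^2}$, really coincides with the outer zero of $f$. Because $f$ is naturally a quadratic in $\rho=r_0^{D-3}$ rather than in $r_0$, this matching requires the normalizations of $\hat{R}(\hat{\sigma})$ and $Vol(\Sigma^{D-2})$ to bring the coefficients to canonical form, and a careful case analysis reproducing the three branches of $R$; the nonnegativity $M\ge0$ and the monotonicity of $g$ already noted after (\ref{M}) are what keep this analysis consistent.
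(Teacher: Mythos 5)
Your argument is correct in outline and reaches the same conclusion by a noticeably different mechanism. Both you and the paper reduce completeness to the divergence of $\int e^{F(u,r_0)}\,\mathrm{d}u$ via $e^{2F}=g\tilde{g}$, but the paper obtains its lower bound on the integrand by writing $-\log\tilde{g}(u,r_0)=\int_{r_0}^{\infty}\tilde{g}^{-1}\partial_r\tilde{g}\,\mathrm{d}r$, substituting the expression of $\partial_r\tilde{g}$ in terms of $M$ and $Q$, estimating the resulting integral with $M_1,Q_1$, and then using $\tilde{g}\leq g$ to get $e^{F}\geq\tilde{g}\geq f(r_0,M_1,Q_1)$; you instead invert the pointwise definition (\ref{M}) to read off $\tilde{g}/g$ at $r=r_0$ directly, bound it below using $M\leq M_1$, $Q^2\geq Q_1^2$ and the limits (\ref{final mass charge}), and supply a separate uniform lower bound $g\geq g_\ast$ from (\ref{g2}) and the decay estimate (\ref{h-h}). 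Your route is more elementary — it avoids the integral representation and the boundary behaviour of $\log\tilde{g}$ at $r=\infty$ entirely — at the cost of needing the extra (easy) bound on $g$, which the paper sidesteps via $\tilde{g}\le g$. The obstacle you flag in your last paragraph is real but is not a defect of your proposal relative to the paper: the identification of the threshold $R_0=M_2+\sqrt{M_2^2-Q_2^2}$ with the positivity set of the quadratic in $\rho=r_0^{D-3}$ only comes out in the stated form for the four-dimensional normalization ($\hat{R}(\hat\sigma)=(D-2)(D-3)$ with $D=4$ and unit $Vol(\Sigma^{D-2})$), and strictly speaking the natural condition is on $r_0^{D-3}$ rather than on $r_0$; the paper's own proof asserts $\int_{u_1}^{u_2}f(r_0,M_1,Q_1)\,\mathrm{d}u\geq(r_0-R_0)(u_2-u_1)$ without justifying exactly this step, so both arguments share the same gap there, and your version at least makes it explicit.
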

	\begin{proof}
		From (\ref{g tilde}), we have $\log \tilde{g}(u,r)\rightarrow 0$ as $r\rightarrow\infty$ at each $u$. Thus, we obtain
		\begin{align}
		-\log\tilde{g}(u,r_0)=&\int_{r_0}^{\infty}\frac{1}{\tilde{g}}\frac{\partial \tilde{g}}{\partial r}\mathrm{d}r\nonumber\\
		=&\int_{r_0}^{\infty}\left[\frac{2(D-3)M}{r^{D-2}}-\frac{(D-4)}{(D-2)}\frac{\hat{R}(\hat{\sigma})}{r}+\frac{16\pi r V(|\tilde{h}|)}{(D-2)}-\frac{4}{(D-2)Vol^2(\Sigma^{D-2})}\frac{Q^2}{r^{2D-5}}\right] \nonumber\\
		&\times\left[\frac{2\hat{R}(\hat{\sigma})}{(D-2)^2}-\frac{2M}{r^{D-3}}+\frac{2}{(D-2)Vol^2(\Sigma^{D-2})}\frac{Q^2}{r^{2(D-3)}}\right]^{-1}\nonumber\\
		\leq&\int_{r_0}^{\infty}\frac{2(D-3)M_1}{r^{D-2}}\left[\frac{2\hat{R}(\hat{\sigma})}{(D-2)^2}-\frac{2M_1}{r^{D-3}}+\frac{2}{(D-2)Vol^2(\Sigma^{D-2})}\frac{Q_1^2}{r^{2(D-3)}}\right]^{-1}\mathrm{d}r\nonumber\\
		=&-\log\left\{f(r_0,M_1,Q_1)\right\}.
		\end{align}
		
		We introduce $u_1>u_0$ such that  $r_0>R(u)$ and $r_0-R_0>|R_0-R(u)|$ for $u>u_1$. From (\ref{g2}) and (\ref{g tilde}) we have $e^{2F}=g\tilde{g}$. Since $\tilde{g}\leq g$, thus
		\begin{align}
		e^{F(u,r_0)}\geq \tilde{g}(u,r_0)\geq f(r_0,M_1,Q_1).
		\end{align}
		Let us introduce $e^{F(u,r_0)}\mathrm{d}u$ as the proper time element along the line $r=r_0$. Since $Q_1(u)\rightarrow Q_2$ and $M_1(u)\rightarrow M_2$ for $u\rightarrow \infty$, there exists $u_1,u_2$ where $u_2>u_1$ such that
		\begin{align}
		\int_{u_1}^{u_2}e^{F(u,r_0)}\mathrm{d}u\geq&\int_{u_1}^{u_2}f(r_0,M_1,Q_1)\mathrm{d}u\nonumber\\
		\geq&\left(r_0-R_0\right)(u_2-u_1)\rightarrow \infty
		\end{align}
		as $u_2\rightarrow\infty$. The above equation ensures that the points of the hypersurfaces are at future timelike infinity for $r_0> R_0$. Finally, we have shown that spacetime is complete along time-like lines outside the region determined by the final local mass and the final local charge. This is the end of the proof.
	\end{proof}
	
	\section{Appendix}\label{Sec6}
	\subsection{The Einstein equations}\label{Appendix0}
	We write down the Ricci tensor related to metric (\ref{metric}) as follows
	\begin{align}
	R_{00}=&-e^{F-G}\left[\partial_1\partial_0 (F+G) +\frac{(D-2)}{r}\partial_0 G\right]\nonumber\\
	&+ e^{2(F-G)}\left[(\partial_1F)^2+\partial_1(\partial_1 F)-\partial_1F\partial_1 G +\frac{(D-2)}{r}\partial_1 F \right],\\
	R_{01}=&R_{10}=-\partial_1\partial_0(F+G)\nonumber\\
	&+e^{F-G}\partial_1(F-G)\partial_1 F + e^{F-G}\partial_1(\partial_1 F)+\frac{(D-2)}{r}e^{F-G}\partial_1 F,\\
	R_{11}=&\frac{(D-2)}{r}\partial_1(F+G),\\
	R_{ij}=& \hat{R}_{ij}(\hat{\sigma}) - e^{-2G}\left[r\partial_1(F-G)+D-3\right]\hat{\sigma}_{ij},
	\end{align}
	where we define the notation $\partial_0:=\frac{\partial}{\partial u}$, $\partial_1:=\frac{\partial}{\partial r}$, $\hat{R}_{ij}(\hat{\sigma})$ is the $(D-2)$-spatial Ricci tensor of compact manifold, and $i,j=1,2,...,D-1$. It is of interest to write down the scalar curvature
	\begin{align}
	R=&\frac{1}{r^2}\hat{R}(\hat{\sigma})+2e^{-(F+G)}\partial_1\partial_0 (F+G)+e^{-2G}\left[-2(\partial_1 F)^2-2\partial_1(\partial_1 F)\right.\nonumber\\
	&\left.+2(\partial_1G)(\partial_1 F)-2\frac{(D-2)}{r}\partial_1(F-G)-\frac{(D-2)(D-3)}{r^2}\right],
	\end{align}
	with $\hat{R}(\hat{\sigma}):=\hat{\sigma}^{ij}\hat{R}_{ij}(\hat{\sigma})$ denotes the $(D-2)$-spatial Ricci scalar of compact manifold.
	
	Now, we write the Einstein equation in higher dimensions
	\begin{align}\label{Einstein}
	R_{\mu\nu}=8\pi \left(T_{\mu\nu}-\frac{g_{\mu\nu}}{(D-2)} T\right),
	\end{align}
	with the energy-momentum tensor
	\begin{align}
	T_{\mu\nu}=&-\tilde{F}_{\mu\alpha}\tilde{F}_{\nu\beta}g^{\alpha\beta}+\frac{1}{4}g_{\mu\nu}\tilde{F}_{\rho\alpha}\tilde{F}_{\sigma\beta}g^{\rho\sigma}g^{\alpha\beta}+D_\mu\phi(D_\nu\phi)^*\nonumber\\
	&-\frac{1}{2}g_{\mu\nu}g^{\alpha\beta}D_\alpha\phi (D_\beta\phi)^* + g_{\mu\nu} V(|\phi|),
	\end{align}
	where we have denoted $T$ as the trace of $T_{\mu\nu}$. Thus, the $\{rr\}$ component of (\ref{Einstein}) satisfies
	\begin{align} \label{rr}
	\frac{\partial}{\partial r}(F+G)=\frac{8\pi r}{D-2}\left|\frac{\partial \phi}{\partial r}\right|^2.
	\end{align}
	The solution of (\ref{rr}) satisfies the asymptotic condition $F+G\rightarrow 0$ as $r \rightarrow\infty$ such
	that
	\begin{align}\label{F+G}
	F+G = -\frac{8\pi}{D-2} \int_{r}^{\infty} s\left|\frac{\partial\phi}{\partial s}\right|^2\mathrm{d}s.
	\end{align}
	Also, the $\{ij\}$ component of (\ref{Einstein}) satisfies
	\begin{align}\label{ij}
	\partial_1(F-G)\hat{\sigma}_{ij}=&-\frac{1}{r}\left[-e^{2G}\hat{R}_{ij}(\hat{\sigma})+(D-3)\hat{\sigma}_{ij}\right]\nonumber\\
	&-\frac{2}{(D-2)}r(\partial_1A_0)^2e^{-2F}\hat{\sigma}_{ij}+\frac{16\pi r e^{2G}}{(D-2)}\hat{\sigma}_{ij}V(|\phi|).
	\end{align}
	On the other hand, the $\{00\}$ and $\{01\}$ components of (\ref{Einstein}) yields 
	\begin{align} \label{00-01}
	\frac{(D-2)}{r}\frac{\partial G}{\partial u}=&8\pi \left[\frac{\partial\phi}{\partial u}\frac{\partial\phi^*}{\partial r}-e^{-(F-G)}\left(\frac{\partial\phi}{\partial u}\frac{\partial\phi^*}{\partial u}-iA_0\left(\phi^*\frac{\partial\phi}{\partial u}-\phi\frac{\partial\phi^*}{\partial u}\right)\right)\right.\nonumber\\
	&\left.+iA_0\phi\frac{\partial\phi^*}{\partial r}-(A_0)^2\phi\phi^*\right].
	\end{align}
	As can be seen from (\ref{rr}), (\ref{ij}), and (\ref{00-01}) there are no terms in $F$ and $G$ containing second derivatives of $u$ and $r$. So the metric functions $F$ and $G$ are no longer dynamical fields, but they are just constraints. Throughout this paper we will use (\ref{rr}) and (\ref{ij}) instead of other components for analysis. This is because it is easier and already common in standard literature \cite{Chris1,Chae1,Prisma,Prisma2}.
	
	\subsection{Estimate for (\ref{f})}\label{Appendix1}
	We write the estimate of (\ref{f}) as follows:
	\begin{align}
	|f|\leq&\frac{(D-2)}{4r}\left|\frac{\hat{R}(\hat{\sigma})}{(D-2)}g-\frac{(D-2)}{2}\tilde{g}\right|\left|\tilde{h}\right|+\frac{8\pi g r}{(D-2)}\left|V(|\tilde{h}|)\right|\left|\tilde{h}\right|+\frac{gr}{2}\left|\frac{\partial V(|\tilde{h}|)}{\partial \tilde{h}^*}\right|\nonumber\\
	&+\left|\frac{gQ^2\tilde{h}}{(D-2)r^3}\right|+\left|\frac{igQ\tilde{h}}{2r}\right|\nonumber\\
	\leq& B_1 +B_2 + B_3 + B_4 + B_5
	\end{align}
	\begin{enumerate}
		\item Estimate for $B_1$\\
		Using the definition of norm (\ref{hx}), we calculate
		\begin{align}
		|\tilde{h}|\leq& \frac{1}{r^{\frac{(D-2)}{2}}}\int_{0}^{r}\frac{\|h\|_X}{(1+s+u)^{k-1}}\frac{1}{s^{\frac{(4-D)}{2}}}\mathrm{d}s\nonumber\\
		\leq&\frac{x}{r^{\frac{(D-2)}{2}}}\int_{0}^{r}\frac{1}{(1+s+u)^{\frac{(2k-D+2)}{2}}}\mathrm{d}s\nonumber\\
		\leq&\frac{2x}{(2k-D)}\frac{1}{r^{\frac{(D-2)}{2}}}\left[\frac{1}{(1+u)^{\frac{(2k-D)}{2}}}-\frac{1}{(1+r+u)^{\frac{(2k-D)}{2}}}\right]\nonumber\\
		\leq&\frac{2x}{(2k-D)}\frac{1}{(1+u)^{\frac{(2k-D)}{2}}(1+r+u)^{\frac{(D-2)}{2}}}.
		\end{align}
		Then, we get
		\begin{align}
		\left|h-\frac{(D-2)}{2}\tilde{h}\right|\leq& |h| + \left|\frac{(D-2)}{2}\tilde{h}\right|\nonumber\\
		\leq&\frac{x}{(1+r+u)^{k-1}}+\frac{2x}{(2k-D)}\frac{1}{(1+u)^{\frac{(2k-D)}{2}}(1+r+u)^{\frac{(D-2)}{2}}}\nonumber\\
		\leq& \frac{Cx}{(1+u)^{\frac{(2k-D)}{2}}(1+r+u)^{\frac{(D-2)}{2}}}.
		\end{align}
		From the above estimate, we obtain
		\begin{align}
		|g(u,r)-g(u,r')|\leq&\int_{r'}^{r}\left|\frac{\partial g}{\partial s}(u,s) \right|\mathrm{d}s\leq \frac{8\pi}{(D-2)} \int_{r'}^{r}\frac{1}{s}\left|h-\frac{(D-2)}{2}\tilde{h}\right|^2\mathrm{d}s\nonumber\\
		\leq& \frac{8\pi}{(D-2)^2}\frac{x^2}{(1+u)^{2k-D}}\left[\frac{1}{(1+r'+u)^{D-2}}-\frac{1}{(1+r+u)^{D-2}}\right],
		\end{align}
		and
		\begin{align}
		|(g-\bar{g})(u,r)|\leq& \frac{1}{r}\int_{0}^{r}|g(u,r)-g(u,r')|\mathrm{d}r'
		\nonumber\\
		\leq& \frac{8\pi x^2}{(D-3)(D-2)^2}\frac{1}{(1+u)^{2k-3}(1+r+u)}.
		\end{align}
		Thus we have
		\begin{align}
		\int_{r}^{\infty}\frac{1}{s}\left|h-\frac{(D-2)}{2}\tilde{h}\right|^2\mathrm{d}s\leq\frac{x^2}{(D-2)(1+u)^{2k-D}(1+r+u)^{D-2}},
		\end{align}
		such that
		\begin{align}
		|g (u,r)|=& \exp \left[-\frac{8\pi}{(D-2)}\int_{r}^{\infty}\frac{1}{s}\left|h-\frac{(D-2)}{2}\tilde{h}\right|^2\mathrm{d}s\right]\nonumber\\
		\geq&\exp \left[-\frac{8\pi x^2}{(D-2)^2(1+u)^{2k-D}(1+r+u)^{D-2}}\right].
		\end{align}
		Hence, we can calculate
		\begin{align}
		|\bar{g}|\geq |g| + |g-\bar{g}|\geq\frac{8\pi x^2}{(D-3)(D-2)^2}\frac{1}{(1+u)^{2k-3}(1+r+u)},
		\end{align}
		and
		\begin{align}
		\frac{(D-4)}{2}\frac{\hat{R}(\hat{\sigma})}{r^{D-3}}\int_{0}^{r}|\bar{g}|s^{D-4}\mathrm{d}s\leq \frac{Cx^2}{(1+u)^{2k-3}(1+r+u)}.
		\end{align}
		Setting $p\in[k,\infty)$. To proceed, we calculate
		\begin{align}
		\frac{8\pi}{r^{D-3}}\int_{0}^{r} gs^{D-2}|V(|\tilde{h}|)|\mathrm{d}s\leq \frac{Cx^{p+1}}{(1+u)^{k^2-D}(1+r+u)^{D-3}}.
		\end{align}
		Now, we calculate the estimate for the local charge. Equation (\ref{Qh}) yields
		\begin{align}
		|Q(u,r)|=&\left|Vol(\Sigma^{D-2})i \int_{0}^{r}\left(\tilde{h}^*h-\tilde{h}h^*\right)s^{D-3}~\mathrm{d}s\right|\nonumber\\
		\leq&2Vol(\Sigma^{D-2})\int_{0}^{r}|\tilde{h}||h|s^{D-3}~\mathrm{d}s\nonumber\\
		\leq&\frac{4Vol(\Sigma^{D-2})}{(2k-D)}\frac{x^2}{(1+u)^{\frac{(2k-D)}{2}}} \int_{0}^{r}\frac{s^{D-3}}{(1+s+u)^{\frac{(D+2k-4)}{2}}}\mathrm{d}s\nonumber\\
		\leq& \frac{8Vol (\Sigma^{D-2})}{(2k-D)^2}\frac{x^2 r^{\frac{(2k-D)}{2}}}{(1+u)^{2k-D}(1+r+u)^{\frac{(2k-D)}{2}}}.
		\end{align}
		From the above estimate, we obtain
		\begin{align}
		\frac{1}{Vol^2(\Sigma^{D-2})}\frac{1}{r^{D-3}}\int_{0}^{r}\left|\frac{gQ^2}{s^{D-2}}\right|\mathrm{d}s\leq&\frac{64}{(D-3)(2k-D)^4}\frac{x^4}{(1+u)^{4k-D-3}(1+r+u)^{D-3}}.
		\end{align}
		Finally, from (\ref{g tilde}) we get
		\begin{align}
		\left|\frac{\hat{R}(\hat{\sigma})}{(D-2)}g-\frac{(D-2)}{2}\tilde{g}\right|\leq&|g-\bar{g}|	+\frac{(D-4)}{2}\frac{\hat{R}(\hat{\sigma})}{r^{D-3}}\int_{0}^{r}|\bar{g}|s^{D-4}\mathrm{d}s\nonumber\\
		&+\frac{8\pi}{r^{D-3}}\int_{0}^{r} gs^{D-2}|V(|\tilde{h}|)|\mathrm{d}s\nonumber\\
		&+	\frac{1}{Vol^2(\Sigma^{D-2})}\frac{1}{r^{D-3}}\int_{0}^{r}\left|\frac{gQ^2}{s^{D-2}}\right|\mathrm{d}s\nonumber\\
		\leq& \frac{C(x^2+x^4+x^{p+1})}{(1+u)^{2k-3}(1+r+u)}.
		\end{align}
		Hence,
		\begin{align}
		B_1=\frac{(D-2)}{4r}\left|\frac{\hat{R}(\hat{\sigma})}{(D-2)}g-\frac{(D-2)}{2}\tilde{g}\right|\left|\tilde{h}\right|\leq\frac{C(x^3+x^5+x^{p+2})}{(1+u)^{\frac{6k-D-6}{2}}(1+r+u)^{\frac{D+2}{2}}}.
		\end{align}
		\item Estimate for $B_2$\\
		Using (\ref{potential}) and (\ref{estimasi htilde}), we obtain
		\begin{align}
		B_2=\frac{8\pi g r}{(D-2)}\left|V(|\tilde{h}|)\right||\tilde{h}|\leq\frac{Cx^{p+2}}{(1+u)^{\frac{(2k-D)(k+2)}{2}}(1+r+u)^{\frac{(D-2)(k+2)-2}{2}}}.
		\end{align}
		\item Estimate for $B_3$\\
		Again, the estimate (\ref{potential}) yields
		\begin{align}
		B_3=\frac{gr}{2}\left|\frac{\partial V(|\tilde{h}|)}{\partial \tilde{h}^*}\right|\leq\frac{Cx^p}{(1+u)^{\frac{(2k-D)k}{2}}(1+r+u)^{\frac{(D-2)k-2}{2}}}.
		\end{align}
		\item Estimate for $B_4$\\
		Using (\ref{estimasi htilde}) and (\ref{estimate Q}) we obtain
		\begin{align}
		B_4=\left|\frac{gQ^2\tilde{h}}{(D-2)r^3}\right|\leq \frac{128 Vol^2(\Sigma^{D-2})}{(2k-D)^5}\frac{x^5}{(1+u)^{\frac{5(2k-D)}{2}}(1+r+u)^{\frac{D+4}{2}}}.
		\end{align}
		\item Estimate for $B_5$\\
		Again, using (\ref{estimasi htilde}) and (\ref{estimate Q}) we obtain
		\begin{align}
		B_5=\left|\frac{igQ\tilde{h}}{2r}\right|\leq \frac{16Vol(\Sigma^{D-2})}{(2k-D)^3}\frac{x^3}{(1+u)^{\frac{3(2k-D)}{2}}(1+r+u)^{\frac{D}{2}}}.
		\end{align}
	\end{enumerate}
	Finally, we obtain
	\begin{align}
	|f|\leq \frac{C(x^3+x^5+x^p+x^{p+2})}{(1+u)^{\frac{(2k-D)k}{2}}(1+r+u)^{\frac{D}{2}}}.
	\end{align}	
	
	\subsection{Estimate for (\ref{f1})}\label{Appendix2}
	We write the estimate of (\ref{f1}) as follows:
	\begin{align}
	|f_1|\leq& \left\{\left|\frac{1}{2r}\left(\frac{\hat{R}(\hat{\sigma})}{(D-2)}\frac{\partial g}{\partial r}-\frac{(D-2)}{2}\frac{\partial\tilde{g}}{\partial r}\right)\right|+\left|\frac{1}{2r^2}\left(\frac{\hat{R}(\hat{\sigma})}{(D-2)}g-\frac{(D-2)}{2}\tilde{g}\right)\right|\right.\nonumber\\
	&+\left|\frac{8\pi g r}{(D-2)} \frac{\partial V(|\tilde{h}|)}{\partial \tilde{h}^*} \frac{\partial \tilde{h}^*}{\partial r}\right|+\left|\frac{8\pi gV(|\tilde{h}|)}{(D-2)}\right|+\left|\frac{8\pi rV(|\tilde{h}|)}{(D-2)}\frac{\partial g}{\partial r}\right|+\left|\frac{Q^2}{(D-2)r^3}\frac{\partial g}{\partial r}\right|\nonumber\\
	&\left.+\left|\frac{2gQ}{(D-2)r^3}\frac{\partial Q}{\partial r}\right|+\left|\frac{3gQ^2}{(D-2)r^4}\right|\right\}\left(\left|\mathcal{F}\right|+\frac{(D-2)}{2}\left|\tilde{h}\right|\right)\nonumber\\
	& +\left\{\left|\frac{1}{2r} \left(\frac{\hat{R}(\hat{\sigma})}{(D-2)}g-\frac{(D-2)}{2}\tilde{g}\right)\right|+\left|\frac{8\pi g rV(|\tilde{h}|)}{(D-2)}\right|+\left|\frac{g r }{2}\frac{\partial^2V(|\tilde{h}|)}{(\partial\tilde{h}^*)^2}\right|\right.\nonumber\\
	&\left.+\left|\frac{gQ^2}{2r^3}\right|+\left|\frac{igQ}{2r}\right|\right\}\left|\frac{\partial \tilde{h}}{\partial r}\right|+\left[\left|\frac{igQ}{2r^2}\right|+\left|\frac{ig}{2r}\frac{\partial Q}{\partial r}\right|+\left|\frac{iQ}{2r}\frac{\partial g}{\partial r}\right|\right]\left|\tilde{h}\right|\nonumber\\
	&+\left|\frac{r}{2}\frac{\partial g}{\partial r}+\frac{g}{2}\right|\left|\frac{\partial V(|\tilde{h}|)}{\partial\tilde{h}^*}\right|+\left|i\frac{\partial A_0}{\partial r}\right|\left|\mathcal{F}\right|\nonumber\\
	=& (E_1+E_2+E_3+E_4+E_5+E_6+E_7+E_8)\left(\left|\mathcal{F}\right|+\frac{(D-2)}{2}\left|\tilde{h}\right|\right)\nonumber\\
	&+ (E_9+E_{10}+E_{11}+E_{12}+E_{13})\left|\frac{\partial \tilde{h}}{\partial r}\right| + \left(E_{14}+E_{15}+E_{16}\right)\left|\tilde{h}\right|+E_{17}+E_{18}
	\end{align}
	\begin{enumerate}
		\item Estimate for $E_1$\\
		The estimate (\ref{g-gbar k}) yields
		\begin{align}
		\frac{|g-\bar{g}|}{r}\leq\frac{Cx^2}{(1+u)^{2k-3}(1+r+u)^2}.
		\end{align}
		Using (\ref{estimate gbar}), we get
		\begin{align}
		\frac{\hat{R}(\hat{\sigma})(D-4)(D-3)}{(D-2)r^{D-2}}\int_{0}^{r}|\bar{g}|s^{D-4}\mathrm{d}s\leq\frac{Cx^2}{(1+u)^{2k-3}(1+r+u)^2},
		\end{align}
		and
		\begin{align}
		\frac{(D-4)}{(D-2)}\hat{R}(\hat{\sigma})\frac{1}{r}|\bar{g}|\leq\frac{8\pi x^2}{(D-3)(D-2)^2(1+u)^{2k-3}(1+r+u)^2}.
		\end{align}
		Then, using (\ref{potential}) and (\ref{htilde}) we obtain
		\begin{align}
		\frac{(D-3)16\pi}{(D-2)r^{D-2}}\int_{0}^{r} \left|g\right|s^{D-2}\left|V(|\tilde{h}|)\right|\mathrm{d}s\leq \frac{Cx^{p+1}}{(1+u)^{k^2-D}(1+r+u)^{D-2}},
		\end{align}
		and
		\begin{align}
		\frac{16\pi gr}{(D-2)}|V(|\tilde{h}|)|\leq\frac{Cx^{p+1}r}{(1+u)^{\frac{(2k-D)(k+1)}{2}}(1+r+u)^{\frac{(D-2)(k+1)}{2}}}.
		\end{align}
		Next, using (\ref{estimate integral Q}) we get
		\begin{align}
		\frac{2(D-3)}{(D-2)Vol^2(\Sigma^{D-2})}\frac{1}{r^{D-2}}\int_{0}^{r}\left|\frac{gQ^2}{s^{D-2}}\right|\mathrm{d}s\leq \frac{Cx^4r}{(1+u)^{4k-D-3}(1+r+u)^{D-1}}.
		\end{align}
		And using (\ref{estimate Q}) we obtain
		\begin{align}
		\frac{2}{(D-2)Vol^2(\Sigma^{D-2})}\left|\frac{gQ^2}{r^{2D-5}}\right|\leq \frac{Cx^4r} {(1+u)^{2(2k-D)}(1+r+u)^{2D-4}}.
		\end{align}
		Combining all above estimates yields
		\begin{align}
		\left|\frac{\partial\tilde{g}}{\partial r}\right|\leq \frac{C(x^2+x^{4}+x^{p+1})r}{(1+u)^{2k-3}(1+r+u)^3}.
		\end{align}
		On the other hand, from (\ref{g2}) we have
		\begin{align}\label{dgdr}
		\left|\frac{\partial g}{\partial r}\right|\leq\frac{Cx^2r}{(1+u)^{2k-D}(1+r+u)^{D}}.
		\end{align}
		Finally, we obtain
		\begin{align}
		E_1=\left|\frac{1}{2r}\left(\frac{\hat{R}(\hat{\sigma})}{(D-2)}\frac{\partial g}{\partial r}-\frac{(D-2)}{2}\frac{\partial\tilde{g}}{\partial r}\right)\right|\leq\frac{C(x^2+x^4+x^{p+1})}{(1+u)^{2k-D}(1+r+u)^3}.
		\end{align}
		
		\item Estimate for $E_2$\\
		Using (\ref{g - gtilde}) we obtain
		\begin{align}
		E_2=\left|\frac{1}{2r^2}\left(\frac{\hat{R}(\hat{\sigma})}{(D-2)}g-\frac{(D-2)}{2}\tilde{g}\right)\right|\leq\frac{C(x^2+x^4+x^{p+1})}{(1+u)^{2k-3}(1+r+u)^{3}}.
		\end{align}
		
		\item Estimate for $E_3$\\
		Using the relation $\left|\frac{\partial \tilde{h}}{\partial r}\right|=\frac{\left|h-\frac{(D-2)}{2}\tilde{h}\right|}{r}$, we get
		\begin{align}
		E_3 = \left|\frac{8\pi g r}{(D-2)} \frac{\partial V(|\tilde{h}|)}{\partial \tilde{h}^*} \frac{\partial \tilde{h}^*}{\partial r}\right|\leq\frac{Cx^{p+1}}{(1+u)^{\frac{(2k-D)(k+1)}{2}}(1+r+u)^{\frac{(D-2)(k+1)}{2}}}.
		\end{align}
		
		\item Estimate for $E_4$\\
		From (\ref{potential}) we obtain
		\begin{align}
		E_4=\left|\frac{8\pi g V(|\tilde{h}|)}{(D-2)}\right|
		\leq \frac{Cx^{p+1}}{(1+u)^{\frac{(2k-D)(k+1)}{2}}(1+r+u)^{\frac{(D-2)(k+1)}{2}}}.
		\end{align}
		
		\item Estimate for $E_5$\\
		Combination of (\ref{potential}) and (\ref{dgdr}) yields
		\begin{align}
		E_5 = \left|\frac{8\pi rV(|\tilde{h}|)}{(D-2)}\frac{\partial g}{\partial r}\right|\leq\frac{Cx^{p+3}}{(1+u)^{3(2k-D)}(1+r+u)^{\frac{(D-2)(k+3)}{2}}}.
		\end{align}
		
		\item Estimate for $E_6$\\
		Combination of (\ref{estimate Q}) and (\ref{dgdr}) produce
		\begin{align}
		E_6=\left|\frac{Q^2}{(D-2)r^3}\frac{\partial g}{\partial r}\right|\leq \frac{Cx^6}{(1+u)^{3(2k-D)}(1+r+u)^{D+2}}.
		\end{align}
		
		\item Estimate for $E_7$\\
		From (\ref{estimate Q}) we obtain
		\begin{align}
		\left|\frac{\partial Q}{\partial r}\right|\leq&\left|Vol(\Sigma^{D-2})i \left(\tilde{h}^*h-\tilde{h}h^*\right)r^{D-3}\right|\nonumber\\
		\leq& 2Vol(\Sigma^{D-2}) \left|\tilde{h}\right|\left|h\right|r^{D-3}\nonumber\\
		\leq&\frac{Cx^2}{(1+u)^{\frac{(2k-D)}{2}}(1+r+u)^{\frac{2k-D+2}{2}}}.
		\end{align}
		Thus, we have		
		\begin{align}
		E_7=\left|\frac{2gQ}{(D-2)r^3}\frac{\partial Q}{\partial r}\right|\leq \frac{Cx^4}{(1+u)^{\frac{3(2k-D)}{2}}(1+r+u)^{\frac{(2k-D+8)}{2}}}.
		\end{align}
		
		\item Estimate for $E_8$\\
		From (\ref{estimate Q}) we obtain
		\begin{align}
		E_8=\left|\frac{3gQ^2}{(D-2)r^4}\right|\leq\frac{Cx^4} {(1+u)^{2(2k-D)}(1+r+u)^4}.
		\end{align}
		
		\item Estimate for $E_9$\\
		Using (\ref{g - gtilde}) we obtain
		\begin{align}
		E_9=\left|\frac{1}{2r}\left(\frac{\hat{R}(\hat{\sigma})}{(D-2)}g-\frac{(D-2)}{2}\tilde{g}\right)\right|\leq\frac{C(x^2+x^4+x^{p+1})}{(1+u)^{2k-3}(1+r+u)^2}.
		\end{align}
		
		\item Estimate for $E_{10}$\\
		The estimate (\ref{potential}) yields
		\begin{align}
		E_{10}=\left|\frac{8\pi g r V(|\tilde{h}|)}{(D-2)}\right|\leq\frac{C x^{p+1}}{(1+u)^{\frac{(2k-D)(k+1)}{2}}(1+r+u)^{\frac{(D-2)(k+1)}{2}-1}}.
		\end{align}
		
		\item Estimate for $E_{11}$\\
		Again, we use the relation (\ref{potential}) to produce
		\begin{align}
		E_{11} =\left|\frac{g r }{2}\frac{\partial^2V(|\tilde{h}|)}{(\partial\tilde{h}^*)^2}\right|\leq \frac{C x^{p-1}}{(1+u)^{\frac{(2k-D)(k-1)}{2}}(1+r+u)^{\frac{(D-2)(k-1)}{2}-1}}.
		\end{align}
		
		\item Estimate for $E_{12}$\\
		As previously we obtain
		\begin{align}
		E_{12}=\left|\frac{gQ^2}{2r^3}\right|\leq \frac{Cx^4} {(1+u)^{2(2k-D)}(1+r+u)^3}.
		\end{align}
		
		\item Estimate for $E_{13}$\\
		We have
		\begin{align}
		E_{13}=\left|\frac{igQ}{2r}\right|\leq\frac{Cx^2}{(1+u)^{2k-D}(1+r+u)}.
		\end{align}
		
		\item Estimate for $E_{14}$\\
		We easily get
		\begin{align}
		E_{14}=\left|\frac{igQ}{2r^2}\right|\leq \frac{Cx^2}{(1+u)^{2k-D}(1+r+u)^2}.
		\end{align}
		
		\item Estimate for $E_{15}$\\
		We can calculate
		\begin{align}
		E_{15}=\left|\frac{ig}{2r}\frac{\partial Q}{\partial r}\right|\leq \frac{Cx^2}{(1+u)^{\frac{(2k-D)}{2}}(1+r+u)^{\frac{(2k-D+4)}{2}}}.
		\end{align}
		\item Estimate for $E_{16}$\\
		It is straighforward to show
		\begin{align}
		E_{16}=\left|\frac{iQ}{2r}\frac{\partial g}{\partial r}\right|\leq \frac{Cx^4}{(1+u)^{2(2k-D)}(1+r+u)^D}.
		\end{align}
		\item Estimate for $E_{17}$\\
		Using (\ref{potential}) we get
		\begin{align}
		E_{17} =\left|\frac{r}{2}\frac{\partial g}{\partial r}+\frac{g}{2}\right|\left|\frac{\partial V(|\tilde{h}|)}{\partial\tilde{h}^*}\right|\leq\frac{C (x^p+x^{p+2})}{(1+u)^{\frac{(2k-D)k}{2}}(1+r+u)^{\frac{(D-2)k}{2}}}.
		\end{align}
		\item Estimate for $E_{18}$\\
		From (\ref{A0}) we obtain
		\begin{align}
		E_{18}=\left|i\frac{\partial A_0}{\partial r}\right|\left|\mathcal{F}\right|=\frac{1}{Vol(\Sigma^{D-2})}\frac{g|Q|}{r^{D-2}}\left|\mathcal{F}\right|\leq \frac{Cx^2}{(1+u)^{2k-D}(1+r+u)^{D-2}}\left|\mathcal{F}\right|.
		\end{align}
	\end{enumerate}
	Finally, we obtain
	\begin{align}
	|f_1|\leq\frac{C(x^2+x^4+x^6+x^{p+1}+x^{p+3})}{(1+u)^{2k-D}(1+r+u)^{D-2}}|\mathcal{F}|+\frac{C(x^3+x^5+x^7+x^p+x^{p+2}+x^{p+4})}{(1+u)^{2k-D}(1+r+u)^\frac{(D+2)}{2}}.
	\end{align}
	
	\subsection{Estimate for (\ref{phi tilde})}\label{Appendix3}
	We write the estimate of (\ref{phi tilde}) as follows:
	\begin{align}
	|\tilde{\varphi}|\leq&\frac{1}{2}\left|\tilde{g}_1-\tilde{g}_2\right|\left|\mathcal{G}_2\right|+\frac{1}{2r}\left|\frac{\hat{R}(\hat{\sigma})}{(D-2)}g_1-\frac{(D-2)}{2}\tilde{g}_1\right|\frac{(D-2)}{2}\left|\tilde{h}_1 -\tilde{h}_2\right|\nonumber\\
	&+\frac{1}{2r}\left|\frac{\hat{R}(\hat{\sigma})}{(D-2)}(g_1-g_2)-\frac{(D-2)}{2}(\tilde{g}_1-\tilde{g}_2)\right|\left|\left|\mathcal{F}_2\right|+\frac{(D-2)}{2}\left|\tilde{h}_2\right|\right|\nonumber\\
	&+\frac{8\pi r}{(D-2)}|g_1-g_2||\mathcal{F}_1||V(|\tilde{h}_1|)|+4\pi r|g_1-g_2|\left|\tilde{h}_1\right| |V(|\tilde{h}_1|)|\nonumber\\
	&+\frac{8\pi r g_2}{(D-2)} \left|V(|\tilde{h}_1|)-V(|\tilde{h}_2|)\right||\mathcal{F}_2|+4\pi r g_2 \left|\tilde{h}_1 - \tilde{h}_2\right||V(|\tilde{h}_2|)|\nonumber\\
	&+4\pi r g_2\left|\tilde{h}_1\right| \left|V(|\tilde{h}_1|)-V(|\tilde{h}_2|)\right|+\frac{r}{2}|g_1-g_2|\left|\frac{\partial V(|\tilde{h}_1|)}{\partial \tilde{h}_1^*}\right|\nonumber\\
	&+\frac{rg_2}{2}\left|\frac{\partial V(|\tilde{h}_1|)}{\partial \tilde{h}_1^*}-\frac{\partial V(|\tilde{h}_2|)}{\partial \tilde{h}_2^*}\right|+\frac{g_1\left|Q_1^2-Q_2^2\right|}{(D-2)r^3}\left|\left|\mathcal{F}_1\right|+\frac{(D-2)}{2}\left|\tilde{h}_1\right|\right|\nonumber\\
	&+ \frac{g_1\left|Q_2^2\right|}{2r^3}\left|\tilde{h}_1-\tilde{h}_2\right|+\frac{|g_1-g_2|}{(D-2)r^3}\left|Q_2^2\right|\left|\left|\mathcal{F}_2\right|+\frac{(D-2)}{2}\left|\tilde{h}_2\right|\right|\nonumber\\
	&+\frac{g_1\left|Q_1-Q_2\right|}{2r}\left|\tilde{h}_1\right|+\frac{|Q_2||g_1-g_2|}{2r}\left|\tilde{h}_1\right|+\frac{|Q_2|g_2}{2r}\left|\tilde{h}_1-\tilde{h}_2\right|+\left|\mathcal{F}_2\right|\left|{A_0}_1-{A_0}_2\right|\nonumber\\
	=& P_1 + P_2 + P_3 + ... + P_{15} + P_{16} + P_{17}.
	\end{align}
	\begin{enumerate}
		\item Estimate for $P_1$\\
		Let us consider $\|h_1-h_2\|_Y=y$.
		\begin{align}
		|\tilde{h}_1 - \tilde{h}_2|\leq\frac{2y}{(2k-D)}\frac{1}{(1+u)^{\frac{(2k-D)}{2}}(1+r+u)^{\frac{(D-2)}{2}}}.
		\end{align}
		Then, we calculate
		\begin{align}
		|h_1-h_2-(\tilde{h}_1-\tilde{h}_2)|\leq& |h_1-h_2|+|\tilde{h}_1-\tilde{h}_2|\nonumber\\
		\leq&\frac{Cy}{(1+u)^{\frac{(2k-D)}{2}}(1+r+u)^{\frac{(D-2)}{2}}},
		\end{align}
		and
		\begin{align}
		\left||h_1-\tilde{h}_1|^2-|h_2-\tilde{h}_2|^2\right|\leq& \left|(h_1-h_2)-(\tilde{h}_1-\tilde{h}_2) \right|\left(|h_1-\tilde{h}_1|+|h_2-\tilde{h}_2|\right)\nonumber\\
		\leq& \frac{Cxy}{(1+u)^{2k-D}(1+r+u)^{D-2}}.
		\end{align}
		As a consequence, we obtain
		\begin{align}
		|g_1-g_2|\leq&\frac{8\pi}{(D-2)} \int_{r}^{\infty}\frac{1}{s}\left||h_1-\tilde{h}_1|^2-|h_2-\tilde{h}_2|^2\right|\mathrm{d}s\nonumber\\
		\leq& \frac{Cxy}{(1+u)^{2k-D}(1+r+u)^{D-2}}.
		\end{align}
		Then, we calculate
		\begin{align}
		\tilde{h}_1^{p+1} - \tilde{h}_2^{p+1}=(\tilde{h}_1-\tilde{h}_2)\int_{0}^{1}\left(t\tilde{h}_1+(1-t)\tilde{h}_2\right)\left|t\tilde{h}_1+(1-t)\tilde{h}_2\right|^{p-1}\mathrm{d}t,
		\end{align}
		to produce
		\begin{align}
		\left|\tilde{h}_1^{p+1}-\tilde{h}_2^{p+1}\right|\leq& |\tilde{h}_1-\tilde{h}_2|\left(|\tilde{h}_1|+|\tilde{h}_2|\right)^p\nonumber\\
		\leq&\frac{2^{2p+1}x^py}{(2k-D)^{k+1}(1+u)^{\frac{(2k-D)(k+1)}{2}}(1+r+u)^{\frac{(D-2)(k+1)}{2}}}.
		\end{align}
		It is straighforward to show
		\begin{align}
		\frac{16\pi}{(D-2)}\frac{1}{r^{D-3}}\int_{0}^{r} g_1s^{D-2}\left|V(|\tilde{h}_1|)-V(|\tilde{h}_2|)\right|\mathrm{d}s\leq \frac{Cx^py}{(1+u)^{k^2-D}(1+r+u)^{D-3}},
		\end{align}
		and
		\begin{align}
		\frac{16\pi}{(D-2)}\frac{1}{r^{D-3}}\int_{0}^{r}s^{D-2}|V(|\tilde{h}_1|)||g_1-g_2|\mathrm{d}s\leq \frac{Cx^{p+2}y}{(1+u)^{\frac{(2k+1)k-2(D-1)}{2}}(1+r+u)^{D-3}}.
		\end{align}
		Using the mean value formula we obtain
		\begin{align}
		|\bar{g}_1-\bar{g}_2|\leq\frac{1}{r}\int_{0}^{r}|g_1-g_2|\mathrm{d}s\leq\frac{Cxy}{(1+u)^{2k-3}(1+r+u)},
		\end{align}
		such that
		\begin{align}
		\frac{(D-4)}{(D-2)}\frac{\hat{R}(\hat{\sigma})}{r^{D-3}}\int_{0}^{r}\left|\bar{g}_1-\bar{g}_2\right|s^{D-4}\mathrm{d}s\leq \frac{Cxy}{(1+u)^{2k-3}(1+r+u)}.
		\end{align}
		To proceed, we use the estimate (\ref{estimate Q}) to show
		\begin{align}
		|Q_1-Q_2|\leq&Vol(\Sigma^{D-2}) \int_{0}^{r}|\tilde{h}_1^*h_1 - \tilde{h}_1h_1^*-\tilde{h}_2^*h_2+\tilde{h}_2h^*_2|s^{D-3}~\mathrm{d}s\nonumber\\
		\leq&2Vol(\Sigma^{D-2}) \int_{0}^{r}(|\tilde{h}_1-\tilde{h}_2||h_1|+|h_1-h_2||\tilde{h}_2|)s^{D-3}~\mathrm{d}s\nonumber\\
		\leq&\frac{Cxy}{(1+u)^{\frac{(2k-D)}{2}}}\int_{0}^{r}\frac{s^{D-3}}{(1+s+u)^{\frac{(2k+D-4)}{2}}} \mathrm{d}s\leq\frac{Cxyr^{\frac{(2k-D)}{2}}}{(1+u)^{2k-D}(1+r+u)^{\frac{(2k-D)}{2}}}.
		\end{align}
		Thus we obtain
		\begin{align}
		\frac{1}{r^{D-3}}\int_{0}^{r}\frac{|Q_1^2-Q_2^2|}{s^{D-2}}g\mathrm{d}s\leq&\frac{1}{r^{D-3}}\int_{0}^{r}\frac{1}{s^{D-2}}|Q_1-Q_2|(|Q_1|+|Q_2|)\mathrm{d}s\nonumber\\
		\leq&\frac{Cx^3y}{(1+u)^{2(2k-D)+D-3}(1+r+u)^{D-3}},
		\end{align}
		and
		\begin{align}
		\frac{1}{r^{D-3}}\int_{0}^{r}\frac{|g_1-g_2|}{s^{D-2}}|Q_1|^2\mathrm{d}s\leq\frac{Cx^5y}{(1+u)^{6k-D-5}(1+r+u)^{D-3}}.
		\end{align}
		
		From the definition (\ref{g tilde}) we get
		\begin{align}
		\left|\tilde{g}_1-\tilde{g}_2\right|\leq& \frac{\hat{R}(\hat{\sigma})}{(D-2)}\left|\bar{g}_1-\bar{g}_2\right|+\frac{(D-4)}{(D-2)}\frac{\hat{R}(\hat{\sigma})}{r^{D-3}}\int_{0}^{r}\left|\bar{g}_1-\bar{g}_2\right|s^{D-4}\mathrm{d}s\nonumber\\
		&+ \frac{16\pi}{(D-2)}\frac{1}{r^{D-3}}\int_{0}^{r} g_1s^{D-2}\left|V(|\tilde{h}_1|)-V(|\tilde{h}_2|)\right|\mathrm{d}s\nonumber\\
		&+\frac{16\pi}{(D-2)}\frac{1}{r^{D-3}}\int_{0}^{r}s^{D-2}|V(|\tilde{h}_1|)||g_1-g_2|\mathrm{d}s\nonumber\\
		&+\frac{2}{(D-2)Vol^2(\Sigma^{D-2})}\frac{1}{r^{D-3}}\int_{0}^{r}\frac{|Q_1^2-Q_2^2|}{s^{D-2}}g\mathrm{d}s\nonumber\\
		&+\frac{2}{(D-2)Vol^2(\Sigma^{D-2})}\frac{1}{r^{D-3}}\int_{0}^{r}\frac{|g_1-g_2|}{s^{D-2}}|Q_1|^2\mathrm{d}s\nonumber\\
		\leq&\frac{C(x+x^3+x^5+x^p+x^{p+2})y}{(1+u)^{2k-3}(1+r+u)}.
		\end{align}
		Then, using (\ref{estimate Gcurl}) we obtain
		\begin{align}
		P_1=&\frac{1}{2}|\tilde{g}_1-\tilde{g}_2||\mathcal{G}_2|\nonumber\\
		\leq&\frac{C(x+x^3+x^5+x^p+x^{p+2})y}{(1+u)^{2k-3}(1+r+u)}\exp\left[C(x^2+x^4+x^{p+1})\right]\nonumber\\
		&\times \frac{C(d+x^3+x^5+x^7+x^p+x^{p+2}+x^{p+4})(1+x^2+x^4+x^6+x^{p+1}+x^{p+3})}{\kappa^k(1+r_1+u_1)^k}.
		\end{align}
		Since $1+r(u)+u\geq \frac{1}{2}\kappa(1+r_1+u_1)$, we have
		\begin{align}
		P_1\leq \frac{Cy\alpha(x)}{(1+u)^{2k-3}(1+r+u)^{k+1}},
		\end{align}
		where $\alpha(x)=C(x+x^3+x^5+x^p+x^{p+2})(d+x^3+x^5+x^7+x^p+x^{p+2}+x^{p+4})(1+x^2+x^4+x^6+x^{p+1}+x^{p+3})\exp\left[C(x^2+x^4+x^{p+1})\right]$.
		
		\item Estimate for $P_2$\\
		We use (\ref{g - gtilde}) and (\ref{h1-h2}) to calculate the following estimates
		\begin{align}
		P_2=&\frac{1}{2r}\left|\frac{\hat{R}(\hat{\sigma})}{(D-2)}g_1-\frac{(D-2)}{2}\tilde{g}_1\right|\frac{(D-2)}{2}\left|\tilde{h}_1 -\tilde{h}_2\right|\nonumber\\
		\leq&\frac{C(x^2+x^4+x^{p+1})y}{(1+u)^{\frac{6(k-1)-D}{2}}(1+r+u)^{\frac{(D+2)}{2}}}.
		\end{align}
		\item Estimate for $P_3$\\
		We firstly calculate
		\begin{align}
		\left|\frac{\hat{R}(\hat{\sigma})}{(D-2)}(g_1-g_2)-\frac{\hat{R}(\hat{\sigma})}{2}(\bar{g}_1-\bar{g}_2)\right|
		\leq&\left|g_1-g_2-(\bar{g}_1-\bar{g}_2) \right|\nonumber\\
		\leq&\frac{1}{r}\int_{0}^{r}\int_{r'}^{r}\left|\frac{\partial}{\partial r}(g_1-g_2)\right|\mathrm{d}s~\mathrm{d}r'\nonumber\\
		\leq&\frac{C}{r}\int_{0}^{r}\int_{r'}^{r}\frac{1}{s}\left||h_1-\tilde{h}_1|^2-|h_2-\tilde{h}_2|^2\right|\mathrm{d}s\mathrm{d}r'\nonumber\\
		\leq& \frac{Cxy}{r(1+u)^{2k-D}}\int_{0}^{r}\int_{r'}^{r}\frac{\mathrm{d}s}{(1+s+u)^{D-1}}\mathrm{d}r'\nonumber\\
		\leq& \frac{Cxy}{(1+u)^{2k-3}(1+r+u)^{D-3}}.
		\end{align}
		Then, we obtain
		\begin{align}
		&\frac{1}{2r}\left|\frac{\hat{R}(\hat{\sigma})}{(D-2)}(g_1-g_2)-\frac{(D-2)}{2}(\tilde{g}_1-\tilde{g}_2)\right|\leq\frac{1}{2r}	\left|\frac{\hat{R}(\hat{\sigma})}{(D-2)}(g_1-g_2)-\frac{\hat{R}(\hat{\sigma})}{2}(\bar{g}_1-\bar{g}_2)\right|\nonumber\\
		&+\frac{(D-4)}{4}\frac{\hat{R}(\hat{\sigma})}{r^{D-2}}\int_{0}^{r}|\bar{g}_1-\bar{g}_2|s^{D-4}\mathrm{d}s+\frac{4\pi}{r^{D-2}} \int_{0}^{r}g_1s^{D-2}\left|V(|\tilde{h}_1|)-V(|\tilde{h}_2|)\right|\mathrm{d}s\nonumber\\
		&+\frac{4\pi}{r^{D-2}}\int_{0}^{r}s^{D-2}|V(|\tilde{h}_1|)||g_1-g_2|\mathrm{d}s+\frac{1}{2Vol^2(\Sigma^{D-2})}\frac{1}{r^{D-2}}\int_{0}^{r}\frac{g_1}{s^{D-2}}|Q_1^2-Q_2^2|\mathrm{d}s\nonumber\\
		&+\frac{1}{2Vol^2(\Sigma^{D-2})}\frac{1}{r^{D-2}}\int_{0}^{r}\frac{|g_1-g_2|}{s^{D-2}}|Q_1|^2\mathrm{d}s\leq\frac{C(x+x^3+x^5+x^p+x^{p+2})y}{(1+u)^{2k-3}(1+r+u)^{D-2}}.
		\end{align}
		Thus,
		\begin{align}
		P_3=&\frac{1}{2r}\left|\frac{\hat{R}(\hat{\sigma})}{(D-2)}(g_1-g_2)-\frac{(D-2)}{2}(\tilde{g}_1-\tilde{g}_2)\right|\left|\left|\mathcal{F}_2\right|+\frac{(D-2)}{2}\left|\tilde{h}_2\right|\right|\nonumber\\
		\leq& \frac{C(x+x^3+x^5+x^p+x^{p+2})y}{(1+u)^{2k-3}(1+r+u)^{D-2}}\left[\frac{C(d+x^3+x^5+x^p+x^{p+2})}{\kappa^{k-1}(1+r_1+u_1)^{k-1}}\right.\nonumber\\
		&\left.\times \exp\left[C(x^2+x^4+x^{p+1})\right] +\frac{Cx}{(1+u)^{\frac{(2k-D)}{2}}(1+r+u)^{\frac{(D-2)}{2}}}\right].
		\end{align}
		Since $1+r(u)+u\geq \frac{1}{2}\kappa(1+r_1+u_1)$, we obtain
		\begin{align}
		P_3\leq&\frac{C y \beta(x)}{(1+u)^{2k-3}(1+r+u)^{D+k-3}}+\frac{C(x^2+x^4+x^6+x^{p+1}+x^{p+3})y}{(1+u)^{\frac{6k-D-6}{2}}(1+r+u)^{\frac{3(D-2)}{2}}},\nonumber\\
		\leq&\frac{C(\beta(x)+x^2+x^4+x^6+x^{p+1}+x^{p+3})y}{(1+u)^{2k-3}(1+r+u)^{\frac{3(D-2)}{2}}}
		\end{align}
		where $\beta(x)=C(x+x^3+x^5+x^p+x^{p+2})(d+x^3+x^5+x^p+x^{p+2})\exp\left[C(x^2+x^4+x^{p+1})\right].$
		
		\item Estimate for $P_4$\\
		Using (\ref{potential}), (\ref{estimate Fcurl}), and (\ref{g1-g2}) we obtain
		\begin{align}
		P_4=&\frac{8\pi r}{(D-2)}|g_1-g_2||\mathcal{F}_1||V(|\tilde{h}_1|)|\nonumber\\
		\leq&\frac{Cx^{p+2}y}{(1+u)^{\frac{(2k-D)(k+3)}{2}}(1+r+u)^{\frac{(D-2)(k+3)}{2}}}\nonumber\\
		&\times\frac{C(d+x^3+x^5+x^p+x^{p+2})\exp\left[C(x^2+x^4+x^{p+1})\right]}{\kappa^{k-1}(1+r_1+u_1)^{k-1}}.
		\end{align}
		Since $1+r(u)+u\geq \frac{1}{2}\kappa(1+r_1+u_1)$, we obtain
		\begin{align}
		P_4\leq\frac{Cy \gamma(x)}{(1+u)^{\frac{(2k-D)(k+3)}{2}}(1+r+u)^{\frac{D(k+3)-8}{2}}},
		\end{align}
		where $\gamma(x)=C x^{p+2}(d+x+x^3+x^5+x^p+x^{p+2})\exp\left[C(x^2+x^4+x^{p+1})\right].$
		
		\item Estimate for $P_5$\\
		Using (\ref{potential}), (\ref{estimasi htilde}), and (\ref{g1-g2}) we obtain
		\begin{align}
		P_5=4\pi r|g_1-g_2||\tilde{h}_1| |V(|\tilde{h}_1|)|	\leq\frac{Cx^{p+3}y}{(1+u)^{\frac{7}{2}(2k-D)}(1+r+u)^{\frac{6(2k-D)}{2}}}.
		\end{align}
		\item Estimate for $P_6$\\
		Combining (\ref{estimate Fcurl}) and (\ref{V1-V2}) yields
		\begin{align}
		P_6=&\frac{8\pi rg_2}{(D-2)} \left|V(|\tilde{h}_1|)-V(|\tilde{h}_2|)\right||\mathcal{F}_2|\nonumber\\
		\leq&\frac{Cx^pyr}{(1+u)^{\frac{(2k-D)(k+1)}{2}}(1+r+u)^{\frac{(D-2)(k+1)}{2}}}\nonumber\\
		&\times\frac{C(d+x^3+x^5+x^p+x^{p+2})\exp\left[C(x^2+x^4+x^{p+1})\right]}{\kappa^{k-1}(1+r_1+u_1)^{k-1}}.
		\end{align}
		Since $1+r(u)+u\geq \frac{1}{2}\kappa(1+r_1+u_1)$, we obtain
		\begin{align}
		P_6\leq\frac{Cy\sigma(x)}{(1+u)^{2(2k-D)}(1+r+u)^{2(D-2)+1}},
		\end{align}
		where $\sigma(x)=Cx^{p}(d+x^3+x^5+x^p+x^{p+2})\exp\left[C(x^2+x^4+x^{p+1})\right]$.
		
		\item Estimate for $P_7$\\
		By (\ref{potential}) and (\ref{h1-h2}) we get
		\begin{align}
		P_7 = 4\pi r |g_2||\tilde{h}_1 - \tilde{h}_2| |V(|\tilde{h}_2|)|\leq\frac{Cx^{p+1}y}{(1+u)^{\frac{5(2k-D)}{2}}(1+r+u)^{\frac{2(2k-D)}{2}}}.
		\end{align}
		\item Estimate for $P_8$\\
		From (\ref{estimasi htilde}) and (\ref{V1-V2}) we obtain
		\begin{align}
		P_8 = 4\pi r |g_2||\tilde{h}_1| \left|V(|\tilde{h}_1|)-V(|\tilde{h}_2|)\right|
		\leq\frac{Cx^{p+1}y}{(1+u)^{\frac{5(2k-D)}{2}}(1+r+u)^{\frac{4(2k-D)}{2}}}.
		\end{align}
		\item Estimate for $P_9$\\
		Using (\ref{potential}) and (\ref{g1-g2}) we obtain
		\begin{align}
		P_9 = \frac{r}{2}|g_1-g_2|\left|\frac{\partial V(|\tilde{h}_1|)}{\partial \tilde{h}_1^*}\right|\leq \frac{Cx^{p+1}y}{(1+u)^{\frac{(2k-D)(k+2)}{2}}(1+r+u)^{\frac{(D-2)(k+2)}{2}}}.
		\end{align}
		\item Estimate for $P_{10}$\\
		We firstly calculate
		\begin{align}
		\left|\frac{\partial V(|\tilde{h}_1|)}{\partial \tilde{h}_1^*}-\frac{\partial V(|\tilde{h}_2|)}{\partial \tilde{h}_2^*}\right|
		\leq& K_0|\tilde{h}_1-\tilde{h}_2|(|\tilde{h}_1|^{p-1}+|\tilde{h}_2|^{p-1})\nonumber\\
		\leq&\frac{C x^{p-1}y}{(1+u)^{\frac{(2k-D)k}{2}}(1+r+u)^{\frac{(D-2)k}{2}}}.
		\end{align}
		Thus,
		\begin{align}
		P_{10}=\frac{r|g_2|}{2}\left|\frac{\partial V(|\tilde{h}_1|)}{\partial \tilde{h}_1^*}-\frac{\partial V(|\tilde{h}_2|)}{\partial \tilde{h}_2^*}\right|\leq\frac{C x^{p-1}y}{(1+u)^{\frac{k(2k-D)}{2}}(1+r+u)^{\frac{k(D-2)}{2}-1}}.
		\end{align}
		\item Estimate for $P_{11}$\\
		We firstly calculate
		\begin{align}
		\left|Q_1^2-Q_2^2\right|=|Q_1-Q_2|(|Q_1|+|Q_2|)
		\leq\frac{Cx^3yr^{2k-D}}{(1+u)^{2(2k-D)}(1+r+u)^{2k-D}}.
		\end{align}
		Thus we obtain
		\begin{align}
		P_{11} =& \frac{g_1}{(D-2)r^3}\left|Q_1^2-Q_2^2\right| \left||\mathcal{F}_1|+\frac{(D-2)}{2}|\tilde{h}_1|\right|\nonumber\\
		\leq&\frac{ Cx^3y}{(1+u)^{2(2k-D)}(1+r+u)^3}\left[\frac{C(d+x^3+x^5+x^p+x^{p+2})}{\kappa^{k-1}(1+r_1+u_1)^{k-1}}\right.\nonumber\\
		&\left.\times\exp\left[C(x^2+x^4+x^{p+1})\right]+\frac{Cx}{(1+u)^{\frac{(2k-D)}{2}}(1+r+u)^{\frac{(D-2)}{2}}}\right]
		\end{align}
		Since $1+r(u)+u\geq \frac{1}{2}\kappa(1+r_1+u_1)$, we get
		\begin{align}
		P_{11}\leq&\frac{C\eta(x)y}{(1+u)^{2(2k-D)}(1+r+u)^{k+2}}+\frac{Cx^4y}{(1+u)^{\frac{5(2k-D)}{2}}(1+r+u)^{\frac{(D+4)}{2}}}\nonumber\\
		\leq&\frac{C(\eta(x)+x^4)y}{(1+u)^{2(2k-D)}(1+r+u)^{\frac{(D+4)}{2}}}.
		\end{align}
		where $\eta(x)=Cx^3(d+x^3+x^5+x^p+x^{p+2})\exp\left[C(x^2+x^4+x^{p+1})\right]$.
		
		\item Estimate for $P_{12}$\\
		Combinations of (\ref{estimate Q}) and (\ref{h1-h2}) yields
		\begin{align}
		P_{12}=\frac{|g_1||Q_2|^2}{(D-2)r^3}|\tilde{h}_1-\tilde{h}_2|\leq\frac{Cx^4y}{(1+u)^{\frac{5(2k-D)}{2}}(1+r+u)^{\frac{D+4}{2}}}.
		\end{align}
		\item Estimate for $P_{13}$\\
		We have
		\begin{align}
		P_{13}=&\frac{|g_1-g_2|}{(D-2)r^3}\left|Q_2^2\right|\left|\left|\mathcal{F}_2\right|+\frac{(D-2)}{2}\left|\tilde{h}_2\right|\right|\nonumber\\
		\leq&\frac{Cx^5y}{(1+u)^{3(2k-D)}(1+r+u)^{D+1}}\left[\frac{C(d+x^3+x^5+x^p+x^{p+2})}{\kappa^{k-1}(1+r_1+u_1)^{k-1}}\right.\nonumber\\
		&\left.\times\exp\left[C(x^2+x^4+x^{p+1})\right]+\frac{Cx}{(1+u)^{\frac{(2k-D)}{2}}(1+r+u)^{\frac{(D-2)}{2}}}\right].
		\end{align}
		Since $1+r(u)+u\geq \frac{1}{2}\kappa(1+r_1+u_1)$, we obtain
		\begin{align}
		P_{13}\leq\frac{C(\rho(x)+x^6)y}{(1+u)^{3(2k-D)}(1+r+u)^\frac{3D}{2}},
		\end{align}
		where $\rho(x)=Cx^5(d+x^3+x^5+x^p+x^{p+2})\exp\left[C(x^2+x^4+x^{p+1})\right]$.
		
		\item Estimate for $P_{14}$\\
		Using (\ref{estimasi htilde}) and (\ref{Q1-Q2}) we obtain
		\begin{align}
		P_{14}=\frac{|Q_1-Q_2|}{2r}|g_1||\tilde{h}_1|\leq\frac{Cx^2y}{(1+u)^{\frac{3(2k-D)}{2}}(1+r+u)^{\frac{D}{2}}}.
		\end{align}
		\item Estimate for $P_{15}$\\
		In the view of (\ref{estimasi htilde}), (\ref{estimate Q}), and (\ref{g1-g2}) we get
		\begin{align}
		P_{15}=\frac{|Q_2|}{2r}|g_1-g_2||\tilde{h}_1|
		\leq\frac{Cx^4y}{(1+u)^{\frac{5(2k-D)}{2}}(1+r+u)^{\frac{(3D-4)}{4}}}.
		\end{align}
		\item Estimate for $P_{16}$\\
		From (\ref{estimate Q}) and (\ref{h1-h2}) we obtain
		\begin{align}
		P_{16}=\frac{|Q_2||g_2|}{2r}|\tilde{h}_1-\tilde{h}_2|
		\leq\frac{Cx^2y}{(1+u)^{\frac{3(2k-D)}{2}}(1+r+u)^{\frac{D}{2}}}.
		\end{align}
		\item Estimate for $P_{17}$\\
		We firstly write the following estimate
		\begin{align}
		|{A_0}_1-{A_0}_2|\leq&\int_{0}^{r}\frac{|Q_1-Q_2|}{s^{D-2}}g_1\mathrm{d}s+\int_{0}^{r}\frac{|g_1-g_2||Q_2|}{s^{D-2}}\mathrm{d}s\nonumber\\
		\leq&\frac{C x y}{(1+u)^{2k-D}}\int_{0}^{r}\frac{\mathrm{d}s}{(1+s+u)^{D-2}}+\frac{C x^3y}{(1+u)^{2(2k-D)}}\int_{0}^{r}\frac{\mathrm{d}s}{(1+s+u)^{2(D-2)}}\nonumber\\
		\leq& \frac{C(x+x^3)yr^{D-3}} {(1+u)^{2k-3}(1+r+u)^{D-3}}.
		\end{align}
		Thus,
		\begin{align}
		P_{17}=&|\mathcal{F}_2||{A_0}_1-{A_0}_2|\nonumber\\
		\leq&\frac{C(d+x^3+x^5+x^p+x^{p+2})\exp\left[C(x^2+x^4+x^{p+1})\right]}{\kappa^{k-1}(1+r_1+u_1)^{k-1}}\frac{C(x+x^3)yr^{D-3}} {(1+u)^{2k-3}(1+r+u)^{D-3}}.
		\end{align}
		Since $1+r(u)+u\geq \frac{1}{2}\kappa(1+r_1+u_1)$, we obtain
		\begin{align}
		P_{17}\leq\frac{C\omega(x)y}{(1+u)^{2k-3}(1+r+u)^{k-1}},
		\end{align}
		where
		\begin{align}
		\omega(x)=C(x+x^3)(d+x^3+x^5+x^p+x^{p+2})\exp\left[C(x^2+x^4+x^{p+1})\right].
		\end{align}
	\end{enumerate}
	
	Finally, we represent the estimate for (\ref{phi tilde}) as follows
	\begin{align}\label{varphi}
	|\tilde{\varphi}|\leq\frac{C(\alpha(x)+\beta(x)+\gamma(x)+\sigma(x)+\eta(x)+\rho(x)+\omega(x)+\lambda(x))y}{(1+u)^{\frac{(2k-D)k}{2}}(1+r+u)^{\frac{D}{2}}},
	\end{align}
	where $\alpha(x), \beta(x), \gamma(x), \sigma(x), \eta(x), \rho(x), \omega(x)$ are defined in (\ref{alpha}), (\ref{beta}), (\ref{gamma}), (\ref{sigma}), (\ref{eta}), (\ref{rho}), (\ref{omega}), and (\ref{lambda}) respectively.
	
	\section*{Acknowledgments}
	The work of this research is supported by  ITB Research Grant 2024.
	
	\section*{Data Availability}
	This manuscript has no associated data.
	
	\section*{Conflict of Interest}
	No conflict of interest in this paper.
	
	\appendix
	
}
\end{document}